\newcommand{\be}{\begin{equation}}
\newcommand{\ee}{\end{equation}}
\newcommand{\im}{\mathrm{i}}
\newcommand{\R}{\mathbb{R}}
\newcommand{\bC}{\mathbb{C}}
\newcommand{\defeq}{\coloneqq}
\newcommand{\tens}{\otimes}
\DeclareMathOperator{\id}{id}
\newcommand{\xd}{\mathrm{d}}
\newcommand{\xD}{\mathcal{D}}
\newcommand{\cH}{\mathcal{H}}
\newcommand{\pol}{\mathrm{pol}}
\newcommand{\ls}{\ell}
\newcommand{\ms}{\mathsf{m}}
\newcommand{\toi}{\hookrightarrow}
\theoremstyle{definition}
\newtheorem{dfn}{Definition}[section]
\theoremstyle{plain}
\newtheorem{lem}[dfn]{Lemma}
\newtheorem{prop}[dfn]{Proposition}
\newtheorem{cor}[dfn]{Corollary}
\begin{document}

\begin{titlepage}
\title{\textbf{The vacuum as a Lagrangian subspace}}
\author{%
  Daniele Colosi\footnote{email: dcolosi@enesmorelia.unam.mx}\\
  Escuela Nacional de Estudios Superiores, Unidad Morelia,\\
  Universidad Nacional Autónoma de México,\\
  C.P.~58190, Morelia, Michoacán, Mexico
  \and Robert Oeckl\footnote{email: robert@matmor.unam.mx}\\
  Centro de Ciencias Matemáticas,\\
  Universidad Nacional Autónoma de México,\\
  C.P.~58190, Morelia, Michoacán, Mexico}
\date{UNAM-CCM-2019-2\\ 19 March 2019\\ 11 November 2019 (v2)}

\maketitle

\vspace{\stretch{1}}

\begin{abstract}
  
We unify and generalize the notions of vacuum and amplitude in linear quantum field theory in curved spacetime. Crucially, the generalized notion admits a localization in spacetime regions and on hypersurfaces.
The underlying concept is that of a Lagrangian subspace of the space of complexified germs of solutions of the equations of motion on hypersurfaces. Traditional vacua and traditional amplitudes correspond to the special cases of definite and real Lagrangian subspaces respectively.
Further, we introduce both infinitesimal and asymptotic methods for vacuum selection that involve a localized version of Wick rotation.
We provide examples from Klein-Gordon theory in settings involving different types of regions and hypersurfaces to showcase generalized vacua and the application of the proposed vacuum selection methods. A recurrent theme is the occurrence of mixed vacua, where propagating solutions yield definite Lagrangian subspaces and evanescent solutions yield real Lagrangian subspaces. The examples cover Minkowski space, Rindler space, Euclidean space and de~Sitter space.
A simple formula allows for the calculation of expectation values for observables in the generalized vacua.

\end{abstract}

\vspace{\stretch{1}}
\end{titlepage}

\tableofcontents

\section{Introduction}

In non-relativistic quantum theory a \emph{vacuum state} can simply be identified with a lowest-energy state. In a relativistic context the absence of a unique notion of time and consequently energy, makes this less straightforward. Minkowski space has a rich isometry group (the Poincaré group) that helps to fix a notion of vacuum by demanding its invariance. However, generic curved spacetimes do not admit isometries. This makes the question of how to choose a vacuum state rather important, as well as the understanding of what such a choice means physically. A further important question about the vacuum concerns its ``localizability'' properties. Usually, a vacuum is seen as encoding global information about spacetime. This is reinforced by the Reeh-Schlieder theorem \cite{ReSc:unitequiv}. However, one can ask to which extent a vacuum might encode information just about a spacetime region or (as we shall see) a hypersurface neighborhood. This question is particularly important from the point of view of Segal's axiomatic approach to quantum field theory, which posits that quantum amplitudes in composite spacetime regions may be decomposed into amplitudes in component regions \cite{Seg:cftproc,Seg:cftdef,Oe:gbqft}. More recently, this approach has been generalized to include observables \cite{Oe:feynobs} and general processes \cite{Oe:posfound}. A third question we want to raise here concerns the generalization of the notion of vacuum to a context where no background metric is fixed from the outset. This is relevant in particular for quantum gravity. With the present work we aim to make some contribution to addressing each of these questions.

To be able to make some headway we restrict in this work purely to linear (i.e., free) field theory. We recall (Section~\ref{sec:revquant}) that a standard quantization method in curved spacetime \cite{BiDa:qftcurved} starts with selecting a set of modes (i.e., solutions of the equations of motion) that satisfy certain completeness and orthogonality properties (\ref{eq:propmodes}) with respect to an inner product (\ref{eq:iplc}) that derives from the symplectic form on the solution space. A choice of such modes amounts to selecting a vacuum. Equivalently, we may encode this choice in terms of a complex structure on solution space with certain properties. As we emphasize in this work, a third way of encoding this information is in terms of a particular type (that we call \emph{definite}) of \emph{Lagrangian subspace} of the solution space. In Minkowski space with the standard vacuum, this Lagrangian subspace is precisely the space of ``positive energy solutions'' and its conjugate that of ``negative energy solutions''. In order to move towards a more local picture and away from a restriction to Minkowski space we recall that there is a natural symplectic form on the space of germs of solutions on any hypersurface in spacetime (Appendix~\ref{sec:lagingreds}). A vacuum can then be encoded as a definite Lagrangian subspace on any hypersurface. If the hypersurface is spacelike and spacetime globally hyperbolic this can be brought into correspondence with the more traditional global perspective.

There is another, apparently completely distinct setting where Lagrangian subspaces occur in (purely classical) field theory (Section~\ref{sec:classlag}). This is the symplectic framework of Kijowski and Tulczyjew \cite{KiTu:symplectic}, axiomatized in the linear case in \cite{Oe:holomorphic}. The key insight is that the solutions of a sufficiently simple field theory in a spacetime region form a Lagrangian subspace of the space of germs of solutions on the boundary.\footnote{``Sufficiently simple'' means here for example that there are no gauge symmetries. In the presence of gauge symmetries a refined scheme can be applied that involves symplectic reduction, see e.g.\ \cite{DiOe:qabym}.} The Lagrangian subspaces in question are \emph{real} subspaces in contrast to the definite ones for vacua which are necessarily complex (and defined on the \emph{complexified} space of germs).
Our core proposal (Section~\ref{sec:vaclag}) is that, nevertheless, both occurrences of Lagrangian subspaces are really special cases of a common unified structure, which, for simplicity we continue to call vacuum. To this end, we show on the classical level that the definite Lagrangian subspaces are naturally associated to ``sufficiently'' non-compact regions of spacetime, complementing the real Lagrangian subspaces for compact and ``mildly'' non-compact regions. Crucially, also Lagrangian subspaces that are neither definite nor real (but are a mixture of both) occur naturally, as we show. The unification becomes really compelling at the quantum level, where we show that the wave function for a standard vacuum state takes exactly the same form as the wave function encoding the state dual to the amplitude for a region. This is most easily seen by using the Schrödinger representation and the Feynman path integral. Expectation values of observables (defined as functions on spacetime field configurations) on all the generalized vacua can be evaluated by reducing to Weyl observables and then applying a simple path integral formula (\ref{eq:veweyl}).

A second component of the present work consists of the proposal of methods for vacuum selection (Section~\ref{sec:vchoice}). These are inspired by \emph{Euclidean methods} and incorporate notions of \emph{Wick rotation}. We observe that real Lagrangian subspaces occur naturally in association with decaying asymptotic boundary conditions. This suggests to view the definite Lagrangian subspaces of traditional vacua as arising through a Wick rotation of boundary conditions. Concretely, we propose an infinitesimal and an asymptotic method for fixing a vacuum. While this works straightforwardly when solutions show a decaying behavior, it requires a Wick rotation when solutions show oscillatory behavior. The latter case recovers traditional methods of vacuum selection using timelike vector fields.

In order to motivate our proposal we showcase the natural occurrence of generalized vacua in simple examples and demonstrate the application of our vacuum selection methods. This is partly in the spirit of the reverse engineering approach to quantum field theory, where we use known tools and methods to extract underlying structure \cite{Oe:reveng}. For simplicity, all examples are based on (massive or massless) Klein-Gordon theory. The examples involve different regions and hypersurfaces (including timelike ones) in Minkowski space (Sections~\ref{sec:hypcyl} and \ref{sec:tlhp}), Rindler space (Section~\ref{sec:Rindler}), a Euclidean space (Section~\ref{sec:2deucl}), and de~Sitter space (Section~\ref{sec:deSitter}). An intriguing repeated pattern is the occurrence of \emph{evanescent waves} with a decaying behavior and corresponding real Lagrangian subspaces along with the \emph{oscillating waves} with corresponding definite Lagrangian subspaces. It is only the latter that occur in the traditional approach to the vacuum.

While aspects of the example applications are novel, their purpose is limited to providing an initial proof of concept for the proposed generalized notion of vacuum and selection methods. The real interest of these new concepts and methods lies in their applicability to situations which lie outside the scope of standard methods or where such methods lack conceptual clarity or present technical difficulty. Of particular interest are spacetimes that are not globally hyperbolic, such as anti-de~Sitter space, black hole spacetimes or certain cosmological spacetimes. On the other hand, although this is not emphasized explicitly in this work, a wide range of boundary conditions may be understood in terms of our generalized notion of vacuum. This might lead to a completely different class of applications such as to the Casimir effect and related problems.
We notice, in accordance with previous remarks, another potential area of application in terms of quantum theory (such as quantum gravity) on spacetimes without background metric. While we focus the discussion in this work on standard quantum field theories and the methods of vacuum selection proposed in Section~\ref{sec:vchoice} rely to some extent on a metric, the framework of Section~\ref{sec:vaclag} is in principle applicable also in the absence of a metric.
For further discussion of results and a more detailed outlook, see Section~\ref{sec:outlook}.

We emphasize that the present work is focused on certain aspects of the notion of vacuum only. Other important aspects such as whether a Hadamard condition \cite{KaWa:qfstatesbifurcate} is satisfied, relevant for obtaining a renormalized energy-momentum tensor, are not touched upon. This does not mean that they are not interesting, but that their relation to the presented concepts and methods is outside of the scope of this work and should be the subject of future investigation.

Some mathematical details on Lagrangian subspaces are collected in Appendix~\ref{sec:mathlag}. This includes Proposition~\ref{prop:rdlagcompl}, which is instrumental in ensuring well-definedness and uniqueness in the application of formula (\ref{eq:veweyl}) for vacuum expectation values. In Appendix~\ref{sec:caxioms} an axiomatization of our notion of generalized vacuum is presented, generalizing the axiomatic framework \cite{Oe:holomorphic} that formalizes the mentioned Lagrangian approach of Kijowski and Tulczyjew \cite{KiTu:symplectic} in the linear case.

\section{Quantization, complex structure, Lagrangian subspaces}
\label{sec:revquant}

In the present section we briefly review aspects of the conventional approach to quantization of bosonic field theory in curved spacetime \cite{BiDa:qftcurved}. A more precise treatment of some of the mathematical structures used in this section is provided in Appendix~\ref{sec:mathlag}.

\subsection{Modes and complex structure}
\label{sec:modecs}

Consider a classical field theory on a globally hyperbolic spacetime. The \emph{phase space} $L$ is the \emph{space of solutions} of the equations of motion. It can also be identified with the space of \emph{initial data} on a spacelike hypersurface. We suppose that $L$ is a \emph{real vector space}. That is, we deal with linear or ``free'' field theory. Any interactions would be treated perturbatively. An important ingredient of the Lagrangian description of the field theory is the \emph{symplectic form} $\omega:L\times L\to\R$ on $L$. This is a non-degenerate anti-symmetric bilinear form.

We denote by $L^\bC= L\oplus \im L$, the \emph{complexification} of $L$. This is the complex vector space whose elements take the form $a+\im b$ for $a,b\in L$. It carries a \emph{complex structure}, i.e., we know what it means to multiply with $\im$. It also carries a \emph{real structure}, i.e., we know what it means to complex conjugate an element, namely $\overline{a+\im b}\defeq a-\im b$ for $a,b\in L$.
Using the symplectic form $\omega$ we may define a sesquilinear hermitian inner product on $L^\bC$, given for $\phi,\phi'\in L^{\bC}$ by,
\begin{equation}
  (\phi,\phi')\defeq 4\im \omega(\overline{\phi},\phi') .
  \label{eq:iplc}
\end{equation}
Note that this inner product is not, and cannot be positive-definite.\footnote{Suppose that $(\phi,\phi)>0$ for some $\phi\in L^\bC$. Then, $(\overline{\phi},\overline{\phi})<0$.}

A standard way \cite{BiDa:qftcurved} to construct a quantization starts with a complete set $\{u_k\}_{k\in I}$ of elements of $L^\bC$, called \emph{modes}, with the following orthogonality properties,
\begin{equation}
  (u_k,u_l)=\delta_{k,l},\quad
  (\overline{u}_k,\overline{u}_l)=-\delta_{k,l},\quad
  (u_k,\overline{u}_l)=0,\quad\forall k,l\in I .
  \label{eq:propmodes}
\end{equation}
Denote by $L^+$ and $L^-$ the complex subspaces of $L^\bC$ spanned by the modes $\{u_k\}_{k\in I}$ and $\{\overline{u}_k\}_{k\in I}$ respectively. $L^+$ and $L^-$ are thus orthogonal subspaces that span all of $L^\bC$ and the inner product $(\cdot,\cdot)$ is \emph{positive-definite} in $L^+$ and \emph{negative-definite} in $L^-$. $L^+$ and $L^-$ are complex conjugates of each other, $\overline{L^+}=L^-$ and $\overline{L^-}=L^+$. We assume that the spaces $L^+$ and $L^-$ are \emph{complete} with respect to the inner product.
The state space of the quantum theory is then constructed as a Fock space $\cH$ with \emph{creation} operators $\{a^\dagger_k\}_{k\in I}$ and \emph{annihilation} operators $\{a_k\}_{k\in I}$ corresponding to the modes $\{\overline{u}_k\}_{k\in I}$ and $\{u_k\}_{k\in I}$ respectively, with commutation relations
\begin{equation}
  [a_k,a_l]=0,\quad [a^\dagger_k,a^\dagger_l]=0,\quad
  [a_k,a^\dagger_l]=\delta_{k,l} .
\end{equation}
The \emph{vacuum} state is characterized by the property that it is annihilated by all annihilation operators. Thus, two different sets of modes give rise to the same vacuum precisely if the space $L^+$ (or equivalently $L^-$) is the same for both sets.

An important property of the spaces $L^+$ and $L^-$ is that they are \emph{Lagrangian subspaces} of $L^\bC$. This means (as shown here for $L^+$) that they are \emph{isotropic}, i.e.,
\begin{equation}
  \omega(\phi,\eta)=0,\quad\forall\phi,\eta\in L^+,
  \label{eq:isotrop}
\end{equation}
and \emph{coisotropic}, i.e.,
\begin{equation}
  \omega(\phi,\eta)=0,\quad\forall\phi\in L^+ \Longrightarrow \eta\in L^+ .
  \label{eq:coisotrop}
\end{equation}
Isotropy follows from the third property in expression (\ref{eq:propmodes}) while coisotropy follows from the first two (which are in fact equivalent). A choice of vacuum might be characterized as follows in the present quantization scheme: Choose a Lagrangian subspace $L^+\subseteq L^\bC$ in such a way that the inner product (\ref{eq:iplc}) is positive-definite on $L^+$. We call such a subspace a \emph{positive-definite} Lagrangian subspace. An orthonormal basis $\{u_k\}_{k\in I}$ of $L^+$ then yields a set of modes with the properties (\ref{eq:propmodes}).

Define a complex linear operator $J:L^\bC\to L^\bC$ as follows,
\begin{equation}
  J u_k=\im u_k,\quad J \overline{u}_k=-\im \overline{u}_k,\quad\forall k\in I .
\end{equation}
Expressed differently, $L^+$ and $L^-$ are the eigenspaces of $J$ with eigenvalues $\im$ and $-\im$ respectively. It is then easy to verify that,
\begin{equation}
  J^2=-\id,\quad\text{and}\quad \omega(J \phi,J \phi')=\omega(\phi,\phi'),\quad\forall \phi,\phi'\in L^\bC .
  \label{eq:cstruc}
\end{equation}
The orthogonal projection operators $P^{\pm}:L^\bC\to L^\bC$ onto the subspaces $L^{\pm}$ can be written in terms of the operator $J$,
\begin{equation}
  P^{\pm}\phi =\frac{1}{2}\left(\phi\mp\im J\phi\right) .
\end{equation}
It is also easy to see that $J$ commutes with complex conjugation on $L^\bC$. This implies that it is the complexification of a real linear operator $L\to L$ that we shall also denote by $J$. $J$ is a \emph{complex structure} on $L$. That is, it makes $L$ (not $L^\bC$) into a complex vector space by defining the multiplication with $\im$ to be the application of $J$.
Combining with the symplectic form $\omega$, we can construct on $L$ a sesquilinear and hermitian inner product with respect to this complex structure $J$,
\begin{equation}
  \{\phi,\phi'\}\defeq 2\omega(\phi,J \phi')+2\im\omega(\phi,\phi') .
  \label{eq:ipl}
\end{equation}
Define $w_k\defeq u_k+\overline{u}_k$. Then $w_k$ is an element of $L$ and as is easy to verify,
\begin{equation}
  \{w_k,w_l\}=\delta_{k,l} .
\end{equation}
In particular, the complex inner product $\{\cdot,\cdot\}$ is \emph{positive-definite} and the set $\{w_k\}_{k\in I}$ forms an \emph{orthonormal basis} of $L$ viewed as a complex vector space. Conversely, we can recover the modes $u_k$ and $\overline{u}_k$ from $w_k$ as,
\begin{equation}
  u_k=P^+ w_k,\quad \overline{u}_k=P^- w_k .
\end{equation}

An equivalent construction of the Fock space $\cH$ starts from the space $L$, viewed as a complex inner product space with the inner product (\ref{eq:ipl}). Thus, the $n$-particle space is then a symmetrized $n$-fold tensor product of copies of $L$. $\cH$ is the completed direct sum of all these $n$-particle spaces with $n$ ranging from $0$ to infinity. In this context, equipping $L$ with the structure of a complex Hilbert space of 1-particle states is also referred to as ``first quantization'' and the construction of the Fock space $\cH$ over $L$ as ``second quantization''.

To summarize, we have two equivalent ways to determine a quantization, i.e., a vacuum:
\begin{itemize}
\item Choose a positive-definite Lagrangian subspace $L^+\subseteq L^\bC$, i.e., a Lagrangian subspace that is positive-definite with respect to the inner product $(\cdot,\cdot)$ given by (\ref{eq:iplc}).
\item Choose a complex structure $J:L\to L$, satisfying conditions (\ref{eq:cstruc}), and such that the inner product $\{\cdot,\cdot\}$ given by (\ref{eq:ipl}) is positive-definite. We call this a \emph{positive-definite} complex structure.
\end{itemize}
We have simplified the mathematical treatment slightly here. The precise statement of equivalence is given in terms of Propositions~\ref{prop:dlagtoj} and \ref{prop:jtodlag} in Appendix~\ref{sec:mathlag}. Notably, this implies a completeness property that is mostly left implicit in the exposition of the present section.

\subsection{Time, energy, and complex structure}
\label{sec:tec}

Different choices of vacua might lead to different physics. It is thus important to identify criteria for and implications of different such choices. We recall how this works in Minkowski space and aspects of its extension to curved spacetime.

Suppose that we work in Minkowski space with a fixed inertial coordinate system. Time translations lead to induced transformations on the space $L$ of solutions. Infinitesimally, this gives rise to a derivative operator $\partial_0:L\to L$. Its exponentiation describes time evolution in $L$. Thus evolution for a time $\Delta_t$ corresponds to the operator $e^{\Delta_t \partial_0}$ on $L$. In the quantum theory, time evolution is to be described through a Hamiltonian operator $H$ on the Fock space $\cH$. According to the Schrödinger equation, to a time $\Delta_t$ corresponds the operator $e^{-\Delta_t\im H}$ on $\cH$.

This suggests to construct first a Hamiltonian operator $h$ on $L$ in the ``first quantization'' step, based on the classical operator $\partial_0$. Then, in the second quantization step, $H$ is taken to be the operator on the Fock space $\cH$ induced by $h$ on its 1-particle subspace $L$. The simplest way to construct $h$ is to precisely match the quantum with the classical time evolution on $L$. That is,
\begin{equation}
  e^{- \Delta_t J h}= e^{\Delta_t\partial_0} ,\quad\text{implying},\quad
  h=J \partial_0
  \label{eq:tevolh}
\end{equation}
We recall that the multiplication with $i$ in $L$ is given by the complex structure $J$. If we impose the usual requirement of self-adjointness and non-negativity on the Hamiltonian operator $h$, then the complex structure $J$ can be determined with condition (\ref{eq:tevolh}).

For concreteness consider the Klein-Gordon theory with mass $m$, given by the action,
\begin{equation}
 S(\phi)=\frac{1}{2}\int \xd t\,\xd^3 x\, \left((\partial_0 \phi)
 (\partial_0 \phi)- \sum_i (\partial_i \phi)(\partial_i \phi)
 -m^2\phi^2\right) .
\end{equation}
We can expand complexified solutions, i.e., elements of $L^\bC$ in terms of plane waves,
\begin{equation}
 \phi(t,x)=\int\frac{\xd^3 k}{(2\pi)^3 2E}
  \left(\phi^\text{a}(k) e^{-\im(E t-k x)}+\phi^\text{b}(k) e^{\im(E t-k x)}\right) .
\end{equation}
Real solutions, i.e., elements of $L$ have the property $\overline{\phi^\text{b}(k)}=\phi^\text{a}(k)$. The symplectic form is given by,\footnote{Note that the sign of the symplectic form depends on a choice of orientation, see Appendix~\ref{sec:lagingreds}. Here the orientation is chosen to correspond to considering equal-time hypersurfaces as boundaries of future half-spaces.}
\begin{align}
 \omega_t(\phi_1,\phi_2)  & =\frac{1}{2}\int\xd^3 x\,
  \left(\phi_1(t,x) (\partial_0 \phi_2)(t,x) - \phi_2(t,x)(\partial_0\phi_1)(t,x)\right) \label{eq:sfkget} \\
 & =\frac{\im}{2}\int\frac{\xd^3 k}{(2\pi)^3 2E}
 \left(\phi_1^\text{a}(k)\phi_2^\text{b}(k)-\phi_2^\text{a}(k)\phi_1^\text{b}(k)\right) .
\end{align}
(Note that the value of $t$ in the first line is arbitrary.)
In terms of the momentum modes the operator $\partial_0$ acts as,
\begin{equation}
  (\partial_0 \phi)^\text{a}(k)=-\im E\,\phi^\text{a}(k) ,\qquad (\partial_0 \phi)^\text{b}(k)=\im E\,\phi^\text{b}(k) .
\end{equation}
In particular, $\partial_0$ does not admit a spectral decomposition on the solution space $L$, but only on its complexification $L^\bC$, with imaginary eigenvalues. In order for $J \partial_0$ to have non-negative eigenvalues we thus need,
\begin{equation}
  (J\phi)^\text{a}(k)=\im\phi^\text{a}(k),\qquad (J\phi)^\text{b}(k)=-\im\phi^\text{b}(k)
\end{equation}
It is easily verified that $J$ defined in this way is a compatible complex structure satisfying conditions (\ref{eq:cstruc}). What is more, the associated inner product (\ref{eq:ipl}) is given by,
\begin{equation}
   \{\phi_1,\phi_2\}=2\int\frac{\xd^3 k}{(2\pi)^3 2E} \phi_2^\text{a}(k) \phi_1^\text{b}(k) .
\end{equation}
This is easily seen to be positive-definite on $L$. That is, we have a valid quantization in the standard sense. The eigenspaces of $J$ are,
\begin{equation}
  L^+=\{\phi\in L^\bC : \phi^\text{b}(k)=0\,\forall k\},\quad\text{and}\quad
  L^-=\{\phi\in L^\bC : \phi^\text{a}(k)=0\,\forall k\} .
\end{equation}
$L^+$ is called the space of \emph{positive frequency} or \emph{positive energy} modes, while $L^-$ is called the space of \emph{negative frequency} or \emph{negative energy} modes. While we might perfectly well consider orthonormal bases $\{u_k\}_{k\in I}$ and $\{\overline{u}_k\}_{k\in I}$ of these spaces with the properties (\ref{eq:propmodes}), it is simpler and more customary to use complex plane wave solutions. These are eigenvectors of the operator $\partial_0$ with continuous momentum space labels,
\begin{equation}
  \phi_k(x,t)=e^{-\im (Et-kx)},\quad\text{and}\quad \overline{\phi_k(x,t)}=e^{\im (Et-kx)} .
\end{equation}
They are not actually elements of the space $L^\bC$ as they are not normalizable. Instead they satisfy delta-function orthogonality relations, which, however, are otherwise similar to the relations (\ref{eq:propmodes}).

The present procedure extends straightforwardly to other bosonic field theories in Minkowski space. In particular, the time-derivative operator $\partial_0$ has imaginary eigenvalues and the eigenspaces of the complex structure $J$ correspond to the two different signatures of these imaginary eigenvalues.

\subsection{Quantization on hypersurfaces}
\label{sec:quanthyp}

Generically, curved spacetimes do not admit a time-translation symmetry. This complicates considerably the issue of finding a suitable positive-definite complex structure or even of quantization in general. In order to address this, it turns out to be more fruitful to start with individual hypersurfaces rather than spacetime as a whole when quantizing. Thus, denote the space of \emph{germs of solutions} of the equations of motions on a hypersurface $\Sigma$ by $L_{\Sigma}$. As before, we suppose that $L_{\Sigma}$ has the structure of a real vector space. Also we suppose that it carries a \emph{symplectic form} $\omega_{\Sigma}:L_{\Sigma}\times L_{\Sigma}\to\R$ that is bilinear, anti-symmetric and non-degenerate (see Appendix~\ref{sec:lagingreds}).

In the remainder of this section we restrict to the case that $\Sigma$ is a spacelike hypersurface and that the equations of motion admit a well posed initial value problem. Then we can interpret $L_{\Sigma}$ also as the space of \emph{initial data} on $\Sigma$. Also, the restriction map $I_{\Sigma}:L\to L_{\Sigma}$ from the space $L$ of global solutions to $L_{\Sigma}$ is then an isomorphism. Its inverse is given by the evolution of initial data. For fixed $\Sigma$ this isomorphism induces a symplectic form $\omega$ on $L$ from the symplectic form $\omega_{\Sigma}$ on $L_{\Sigma}$. Due to a conservation law (see Section~\ref{sec:classlag}), this induced symplectic form $\omega$ is the same no matter what spacelike hypersurface $\Sigma$ we choose. This is how the symplectic form on $L$ that we have mentioned previously arises.

We may now proceed to perform a quantization exactly as outlined in Section~\ref{sec:modecs}, except that we replace the symplectic vector space $(L,\omega)$ with $(L_{\Sigma},\omega_{\Sigma})$ for each spacelike hypersurface $\Sigma$. In particular, we may formulate this in terms of positive-definite Lagrangian subspaces $L_{\Sigma}^{\pm}\subseteq L_{\Sigma}^\bC$ and positive-definite complex structures $J_{\Sigma}$. We obtain a Fock space $\cH_{\Sigma}$ of quantum states for each spacelike hypersurface $\Sigma$. Time evolution generalizes in this setting to the evolution between different spacelike hypersurfaces. Consider an initial spacelike hypersurface $\Sigma$ and a final spacelike hypersurface $\Sigma'$. Then, classically, the evolution from $\Sigma$ to $\Sigma'$ is described by the isomorphism $T_{\Sigma,\Sigma'}\defeq I_{\Sigma'}\circ I_{\Sigma}^{-1}:L_{\Sigma}\to L_{\Sigma'}$. As previously mentioned, this preserves the symplectic form. The simplest way to quantize this requires the complex structures to be preserved as well, i.e., we require $J_{\Sigma'}=T_{\Sigma,\Sigma'}\circ J_{\Sigma}\circ T_{\Sigma,\Sigma'}^{-1}$. Then, $T_{\Sigma,\Sigma'}$ becomes unitary with respect to the inner products (\ref{eq:ipl}) and induces a unitary map $U_{\Sigma,\Sigma'}:\cH_{\Sigma}\to\cH_{\Sigma'}$ between the corresponding Fock spaces.

In the absence of time-translation symmetries we may consider more general flows on spacetime. Even if these do not preserve solutions in spacetime, they may induce infinitesimal actions on the spaces of germs of solutions on hypersurfaces. That is, we might be able to construct operators $L_{\Sigma}\to L_{\Sigma}$ that represent such flows infinitesimally. These in turn can then be linked to complex structures through their spectrum, similarly as we have seen this for time translations.

\subsection{Path integral with past and future boundaries}
\label{sec:feynpf}

In this section we take a different starting point and review the basics of the quantization of observables in quantum field theory through the Feynman path integral. We focus on how this links to the choice of vacuum as discussed in the previous section. Most of the content of this section can be found in standard text books on quantum field theory such as \cite{ItZu:qft}.

We first recall how transition amplitudes in quantum field theory are obtained from the path integral. We work in Minkowski space with the standard quantization of Section~\ref{sec:tec}, applied on equal-time hypersurfaces as in Section~\ref{sec:quanthyp}. Thus, consider an initial state $\psi_1\in\cH_{t_1}$ at time $t_1$ and a final state $\psi_2\in\cH_{t_2}$ at time $t_2$. We keep the convention from Section~\ref{sec:quanthyp} to equip state spaces with labels that indicate hypersurfaces, which in this case are parametrized by the time variable. When referring to objects associated to the spacetime region $[t_1,t_2]\times\R^3$ we indicate this with a subscript $_{[t_1,t_2]}$. The corresponding \emph{transition amplitude} is the matrix element of the time-evolution operator $U_{[t_1,t_2]}:\cH_{t_1}\to\cH_{t_2}$ given by,
\begin{equation}
  \langle \psi_2, U_{[t_1,t_2]} \psi_1\rangle
  = \int_{K_{[t_1,t_2]}}\xD\phi\, \psi_1(\phi_1) \overline{\psi_2(\phi_2)} e^{\im S_{[t_1,t_2]}(\phi)} .
  \label{eq:tamplpi}
\end{equation}
The integral is over field configurations $\phi\in K_{[t_1,t_2]}$ in the spacetime region $[t_1,t_2]\times\R^3$ with $\phi_i$ denoting the configuration at time $t_i$. The action $S$ is evaluated in the same region. We use here the Schrödinger representation, where states on a hypersurface are wave functions on the space of field configurations on this hypersurface.

When the field theory is interacting, i.e., the action includes terms that are of higher order than two in the fields, it is not known how to directly evaluate a path integral such as (\ref{eq:tamplpi}), except for some very special cases. Instead, one sets up perturbation theory around a free theory described by an action $S$ quadratic in the fields. An important intermediate object in this construction arises by adding to the action a \emph{source term}. That is, one replaces the action $S$ with the action $S_{\mu}\defeq S + D_{\mu}$, where,
\begin{equation}
  D_{\mu}(\phi)\defeq \int\xd^4 x\, \mu(x)\phi(x) .
  \label{eq:src}
\end{equation}
For simplicity we have chosen here the notation of a real scalar field, where the \emph{source} $\mu$ is a map from spacetime to the real numbers. (This readily generalizes to fields with internal degrees of freedom.)
As in Section~\ref{sec:modecs} we denote by $L$ the real vector space of global solutions of the equations of motion determined by the free action $S$. Similarly, we denote by $A_\mu$ the space of global solutions of the equations of motion determined by the modified action $S_\mu$. Note that $A_\mu$ is not a real vector space in general, but a real \emph{affine} space. Indeed, for $S$, the equations of motion are generally homogeneous partial differential equations, while for $S_\mu$ they are inhomogeneous. In the example of the Klein-Gordon theory (compare Section~\ref{sec:tec}) the equations of motion are given by,
\begin{equation}
  (\Box+m^2)\phi(x)=\mu(x) .
\end{equation}
As before we denote by $L^\bC$ the complexification of $L$. We also denote by $A_{\mu}^\bC=A_{\mu}\oplus \im L$ the complexification of $A_{\mu}$. Elements of this space may be written as $c+\im\, d$ with $c\in A_\mu$ and $d\in L$. The addition of and element $a+\im\, b\in L^\bC$ results in $(c+a)+\im (d+b)$.

\begin{figure}
\centering
\begin{tikzpicture}[scale=1]
\fill[gray!10]  (-1,1) rectangle (6,3);
\draw[->] (0,-2) -- (0,5) node [left] {$t$};
\draw[->] (-1,-1) -- (6,-1) node [right] {$x$};
\draw (-1,1) node [left] {$t_1$} -- (6,1);
\draw (-1,3) node [left] {$t_2$} -- (6,3);
\draw[->] (1,0.8) -- (1,-0.3);
\draw[->,xshift=-0.5cm] (1,0.8) -- (1,-0.3);
\draw[->,xshift=-1.5cm] (1,0.8) -- (1,-0.3);
\draw[->,xshift=3cm] (1,0.8) -- (1,-0.3);
\draw[->,xshift=3.5cm] (1,0.8) -- (1,-0.3);
\draw[->,xshift=4cm] (1,0.8) -- (1,-0.3);
\draw[->,xshift=4.5cm] (1,0.8) -- (1,-0.3);
\draw[<-,yshift=3.5cm] (1,0.8) -- (1,-0.3);
\draw[<-,yshift=3.5cm,xshift=-0.5cm] (1,0.8) -- (1,-0.3);
\draw[<-,yshift=3.5cm,xshift=-1.5cm] (1,0.8) -- (1,-0.3);
\draw[<-,yshift=3.5cm,xshift=3cm] (1,0.8) -- (1,-0.3);
\draw[<-,yshift=3.5cm,xshift=3.5cm] (1,0.8) -- (1,-0.3);
\draw[<-,yshift=3.5cm,xshift=4cm] (1,0.8) -- (1,-0.3);
\draw[<-,yshift=3.5cm,xshift=4.5cm] (1,0.8) -- (1,-0.3);
\node at (1.5,2) {$M$};
\draw[fill=gray] plot [smooth cycle] coordinates { (2.5,1.9) (3,2.5) (3.6,2.4) (4,2)  (3.65,1.5) (2.9,1.4)};
\node at (2.5,0.5) {negative energy};
\node at (2.5,0) {solutions};
\node at (2.5,4) {positive energy};
\node at (2.5,3.5) {solutions};
\node at (3.2,2) [fill=white] {$\mu$};
\end{tikzpicture}
\caption{The region $M$ determined by the time interval $[t_1,t_2]$ in Minkowski space. A source $\mu$ is located within the region. The solution $\eta$ of the corresponding inhomogeneous equations of motion is a negative energy solution before $t_1$ and a positive energy solution after $t_2$.}
\label{fig:st_int_pos_neg}
\end{figure}
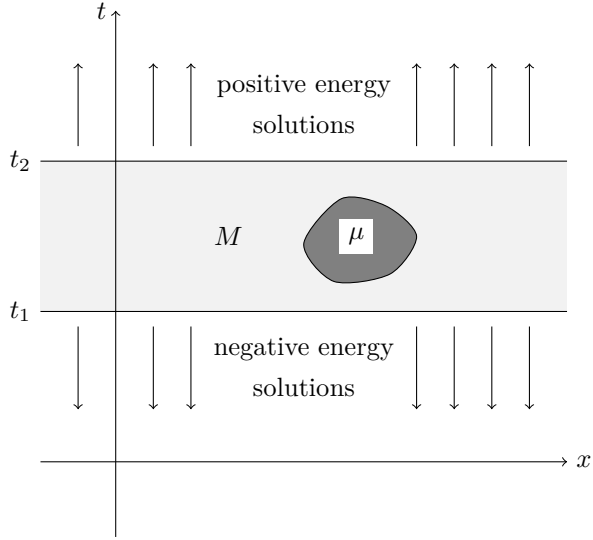

The path integral (\ref{eq:tamplpi}) with source takes a particularly simple form in the case where both the initial and the final state are given by the Fock vacuum $\psi_0$. As before, we consider the path integral between an initial time $t_1$ and a final time $t_2$. Correspondingly, we suppose that the source $\mu$ has support in the spacetime region $[t_1,t_2]\times\R^3$ only. Then,
\begin{equation}
  \langle \psi_0,U_{\mu,[t_1,t_2]}\psi_0\rangle
  =\exp\left(\frac{\im}{2} \int_{[t_1,t_2]\times \R^3} \xd^4 x\, \mu(x)\eta(x)\right) .
  \label{eq:vevsource}
\end{equation}
Here $\eta\in A_{\mu}^\bC$ is a particular complexified solution of the inhomogeneous equations of motion (i.e., with source). More precisely, $\eta$ is the \emph{unique} inhomogeneous solution with the following properties:
\begin{itemize}
\item For $t\le t_1$, $\eta$ is a negative energy solution, i.e., $\eta|_{t\le t_1}=\eta^-\in L^-$.
\item For $t\ge t_2$, $\eta$ is a positive energy solution, i.e., $\eta|_{t\ge t_2}=\eta^+\in L^+$.
\end{itemize}
See Figure~\ref{fig:st_int_pos_neg} for an illustration.
These \emph{boundary conditions} for $\eta$ are precisely tied to the choice of vacuum. At the $t=t_2$ hypersurface we restrict to positive energy solutions according to our choice of positive-definite Lagrangian subspace $L^+\subseteq L^\bC$ determining here the standard vacuum in Minkowski space, compare Section~\ref{sec:tec}. At the $t=t_1$ hypersurface we choose the same (time translated) vacuum. However, since we are dealing with an initial rather than a final hypersurface, its orientation is opposite. This means that the symplectic form changes its sign as it is given by a local integral over the hypersurface (compare expression~(\ref{eq:sympl}) in Appendix~\ref{sec:lagingreds}). Similarly, the complex structure changes sign as the normal derivative changes direction, compare expression (\ref{eq:tevolh}). Consequently, the complex inner product (\ref{eq:ipl}) on $L$ is complex conjugated. This also implies that the positive-definite Lagrangian subspace $L^+$ is complex conjugated and we obtain a restriction to negative energy modes, i.e., to $L^-=\overline{L^+}$.

Crucially, if we decrease $t_1$ or increase $t_2$ without changing the source $\mu$, the solution $\eta$ will not change. This is precisely because the definite Lagrangian subspaces $L^+\subseteq L^\bC$ and $L^-\subseteq L^\bC$ of positive and negative energy solutions are \emph{time translation invariant}, compare Section~\ref{sec:tec}. What is more, the integral in expression (\ref{eq:vevsource}) does not change either, since $\mu$ lacks support outside of $[t_1,t_2]\times\R^3$.
Consequently, the quantity (\ref{eq:vevsource}) does not depend on the choice of initial and final time as long as the support of $\mu$ is contained in $[t_1,t_2]\times\R^3$. In particular, we may formally send $t_1$ to minus infinity and $t_2$ to plus infinity. We can then drop the restriction on the support of $\mu$, although we must keep in mind that integrability on the right hand side might not be guaranteed if $\mu$ does not have compact support. We may view the condition on the early and late time behavior of $\eta$ as an \emph{(temporal) asymptotic boundary condition} determined by a choice of \emph{(temporal) asymptotic vacuum}.

The resulting identity for the path integral may be brought into the following suggestive form,
\begin{equation}
  Z(\mu)\defeq
  \int_{K}\xD\phi\, e^{\im \left(S(\phi)+D_{\mu}(\phi)\right)}
  =\exp\left(\frac{\im}{2} \int \xd^4 x\, \mu(x)\eta(x)\right) .
  \label{eq:pisrc}
\end{equation}
Here $K$ is now the space of field configurations in all of spacetime and the integral on the right hand side is also over all of spacetime. While such a path integral is commonly written down in quantum field theory text books, it is clear that its notation is ambiguous since the boundary conditions are not indicated.
On the other hand, the right hand side can be used to \emph{define} the path integral, also in curved spacetime. Thus, in a globally hyperbolic spacetime we may choose future and past \emph{asymptotic vacua} in the form of \emph{definite Lagrangian subspaces} $L^+$ and $L^-$ of the complexified global space of solutions $L^\bC$ making the inner product (\ref{eq:iplc}) positive-definite and negative-definite respectively. (Recall that we have a change of orientation for the past asymptotics that flips the sign of $\omega$, converting negative to positive-definiteness.) We may even allow the vacua to be different in the sense that $\overline{L^-}$ does not agree with $L^+$, as long as $L^+$ and $L^-$ are complementary. Complementary means that the intersection of $L^+$ and $L^-$ is $\{0\}$, but together they generate $L^\bC$. Then, $\eta$ on the right hand side of equation (\ref{eq:pisrc}) is determined as above. That is, it is the unique complexified solution of the inhomogeneous equations of motion with source $\mu$ that satisfies the boundary conditions of lying in $L^+$ in the asymptotic future and in $L^-$ in the asymptotic past.

Instead of working with the special solution $\eta$ it is often more convenient to use the \emph{Feynman propagator} $G_F$ which is a complex symmetric distribution on two copies of Minkowski space satisfying (for Klein-Gordon theory),
\begin{equation}
  (\Box_x+m^2) G_F(x,y)=\delta^4(x-y) .
  \label{eq:Feyndelta}
\end{equation}
Moreover, with one argument held fixed, $G_F$ as a function in the other argument is a complexified solution satisfying the same boundary conditions as $\eta$ for early and for late times. (Early or late times are understood with respect to the fixed argument.) This determines $G_F$ uniquely. Then,
\begin{equation}
  \eta(x)=\int\xd^4 y\, G_F(x,y) \mu(y) .
\end{equation}
In particular, we may rewrite the right hand side of equation (\ref{eq:pisrc}) as,
\begin{equation}
  Z(\mu)
  =\exp\left(\frac{\im}{2} \int \xd^4 x\,\xd^4 y\, \mu(x)G_F(x,y)\mu(y)\right) .
  \label{eq:pisrc2}
\end{equation}

An important tool in quantum field theory are the time ordered \emph{$n$-point functions}. In a standard text book one may find this expressed in terms of the path integral as follows,
\begin{equation}
  \langle\psi_0, \mathbf{T} \phi(x_1)\cdots\phi(x_n) \psi_0\rangle
  = \int\xD\phi\, \phi(x_1)\cdots\phi(x_n) e^{\im S(\phi)}
  \label{eq:npointpi}
\end{equation}
Here, $\psi_0$ symbolizes the vacuum, $\phi(x_i)$ on the left hand side are \emph{field operators} labeled by spacetime points $x_i$ and $\mathbf{T}$ denotes \emph{time-ordering}. The expression on the left hand side is to be understood in the Heisenberg picture. As is easy to see with (\ref{eq:src}), this quantity can be obtained for the free theory under consideration here from $Z(\mu)$ given by (\ref{eq:pisrc2}) by applying functional derivatives. The simplest case with $n=2$ recovers the Feynman propagator,
\begin{equation}
  \langle\psi_0, \mathbf{T} \phi(x_1)\phi(x_2) \psi_0\rangle
  = \left.\left(-\im\frac{\partial}{\partial \mu(x_1)}\right)\left(-\im\frac{\partial}{\partial \mu(x_2)}\right) Z(\mu) \right|_{\mu=0}
  =-\im\, G_F(x_1,x_2) .
\end{equation}

\section{Quantization in spacetime}
\label{sec:genquant}

In the present section we review the more recent generalization of quantization and the path integral to a spacetime local formulation \cite{Oe:boundary,Oe:gbqft}. In doing so we rely mainly on the results of \cite{Oe:feynobs}. We consider classical field theory first, and move on to quantization subsequently.

\subsection{Classical field theory and Lagrangian subspaces}
\label{sec:classlag}

In order to describe a field theory locally we consider for any spacetime region $M$ (possibly restricted to some sufficiently large class), the space $L_M$ of \emph{solutions} of the equations of motion in $M$. This is a real vector space since we are dealing with free field theory. Crucially, these solutions are defined only in $M$. There is no need or assumption that they extend to global solutions. (We may allow for contexts where a notion of global solution or even global spacetime does not exist.)
The physical content of a field theory is not so much in the structure of these solution spaces as such, but in the relation they have with each other. Here, the powerful field theoretic principle of \emph{locality} comes into play. To encode this we also need to consider hypersurfaces $\Sigma$ and associated solution spaces $L_{\Sigma}$. More precisely, $L_{\Sigma}$ is the space of \emph{germs of solutions} on $\Sigma$, i.e., solutions defined in an arbitrarily small neighborhood of $\Sigma$. In Lagrangian field theory, this space comes equipped with a symplectic form  $\omega_{\Sigma}:L_{\Sigma}\times L_{\Sigma}\to\R$ (which we assume to be non-degenerate). This symplectic form arises as the second variation of the Lagrangian on the hypersurface $\Sigma$ \cite{Woo:geomquant}, see Appendix~\ref{sec:lagingreds} for details. Crucially, the hypersurface $\Sigma$ carries an \emph{orientation}. While the orientation reversed hypersurface, denoted by $\overline{\Sigma}$, carries the same space of germs of solutions $L_{\overline{\Sigma}}=L_{\Sigma}$, the associated symplectic form changes sign, $\omega_{\overline{\Sigma}}=-\omega_{\Sigma}$.
Next, we realize that a solution in a region $M$ can be restricted to a solution in the neighborhood of the boundary $\partial M$. This gives rise to a linear map $L_M\to L_{\partial M}$. Crucially, (for well behaved regions) the image of $L_M$ under this map (which we also denote by $L_M$ when no confusion can arise) is a \emph{Lagrangian subspace} of $L_{\partial M}$. This is a powerful principle of Lagrangian field theory in spacetime \cite{KiTu:symplectic}, generalizing (as we recall below) the well known conservation of the symplectic form between spacelike hypersurfaces. Note that regions are also oriented and boundary hypersurfaces inherit an orientation from the region they bound.

We may now express the notion of \emph{composition}.
Consider two adjacent spacetime regions $M_1$ and $M_2$ that are in contact through a common hypersurface $\Sigma$. Then, we may express the relation between the corresponding solution spaces through the following \emph{exact sequence},
\begin{equation}
  L_{M_1\cup M_2} \to L_{M_1}\times L_{M_2} \rightrightarrows L_{\Sigma} .
  \label{eq:exacomp}
\end{equation}
The arrow on the left hand side means: Take a solution in the union $M_1\cup M_2$ and restrict it on the one hand to a solution in $M_1$ and on the other hand to a solution in $M_2$. The arrows on the right hand side mean: Either restrict the solution in $M_1$ to a neighborhood of $\Sigma$ or do this with the solution in $M_2$. The whole expression being an exact sequence expresses the following simple fact: A pair of solutions in $M_1$ and $M_2$ arises through restriction from a solution in the union $M_1\cup M_2$ precisely if these restrictions agree near the hypersurface $\Sigma$.

The spacetime regions and hypersurfaces on the one hand, and solution spaces with their properties on the other hand can be organized into an axiomatic system \cite[Section~4.1]{Oe:holomorphic}. This may in fact be used as a \emph{definition} of classical field theory using algebraic language instead of the usual description in terms of differential geometric structures and differential equations.

We recall how the time-evolution picture fits into this spacetime framework. Consider a globally hyperbolic spacetime and let $M$ be a region bounded by two spacelike hypersurfaces, $\Sigma_1$ and $\Sigma_2$, with the latter in the future of the former. (Figure~\ref{fig:st_int_pos_neg} shows an example.) We take $\Sigma_1$ and $\Sigma_2$ to have the same orientation with respect to a global choice of time direction. Then, as components of the boundary $\partial M$ with orientations induced from $M$, one hypersurface (by convention here $\Sigma_2$) appears with inverted orientation, $\partial M=\Sigma_1\sqcup \overline{\Sigma}_2$. Note that $L_M$, $L_{\Sigma_1}$ and $L_{\Sigma_2}$ are all naturally identified with the global solution space $L$. Correspondingly, the symplectic forms $\omega_{\Sigma_1}$ and $\omega_{\Sigma_2}$ can be viewed as forms on $L$. The map $L_M\to L_{\partial M}=L_{\Sigma_1}\times L_{\Sigma_2}$ that restricts solutions to the boundary can be written as, $\phi\mapsto (\phi,\phi)$. The isotropy property (\ref{eq:isotrop}) implies for $\phi,\phi'\in L$,
\begin{equation}
  0=\omega_{\partial M}((\phi,\phi),(\phi',\phi'))=\omega_{\Sigma_1}(\phi,\phi')+\omega_{\overline{\Sigma}_2}(\phi,\phi')=\omega_{\Sigma_1}(\phi,\phi')-\omega_{\Sigma_2}(\phi,\phi') .
\end{equation}
This is just the usual conservation property of the symplectic form between spacelike hypersurfaces. On the other hand, the coisotropy property (\ref{eq:coisotrop}) implies the non-degeneracy of $\omega_{\Sigma_1}$ and $\omega_{\Sigma_2}$.

\subsection{Quantization in regions and boundary conditions}
\label{sec:stquant}

In the present section we review quantization in general spacetime regions following \cite{Oe:feynobs}.
We recall from Section~\ref{sec:quanthyp} that the quantization prescription of Section~\ref{sec:modecs} can be carried out for individual hypersurfaces $\Sigma$. Thus, we choose a positive-definite Lagrangian subspace $L_{\Sigma}^+\subseteq L_{\Sigma}^\bC$ or positive-definite complex structure $J_{\Sigma}$. This works just as well if $\Sigma$ is not a spacelike hypersurface in a globally hyperbolic spacetime. Of course, in that case $L_{\Sigma}$ will not in general be isomorphic to the space $L$ of global solutions (if it exists). Correspondingly, the Fock space $\cH_{\Sigma}$ is not to be interpreted as a ``global'' state space of the quantum theory, but as the space of states on the hypersurface $\Sigma$. This generalizes the notion of state at a time $t$.

The most convenient way to describe the dynamics of the quantum theory in general spacetime regions is not via an evolution equation, but via amplitudes. Given a spacetime region $M$ with boundary hypersurface $\partial M$, the \emph{amplitude map} $\rho_M$ assigns to a state in $\cH_{\partial M}$ its amplitude, a complex number. It can be conveniently constructed via the Feynman path integral as,
\begin{equation}
  \rho_M(\psi)
    = \int_{K_M}\xD\phi\, \psi\left(\phi|_{\partial M}\right) e^{\im S_M(\phi)} .
  \label{eq:amplpi}
\end{equation}
Here $\psi\in\cH_{\partial M}$ is a state written as a wave function in the Schrödinger representation. The integral is over the space $K_M$ of field configurations in $M$ and $S_M$ is the action evaluated in $M$.

The transition amplitude (\ref{eq:tamplpi}) arises as a special case when $M$ is taken to be $[t_1,t_2]\times\R^3$. We are then in the situation considered at the end of Section~\ref{sec:classlag} where the boundary $\partial M$ of $M$ decomposes into an initial hypersurface at $t_1$ and a final one at $t_2$ as $\partial M=\Sigma_{t_1}\sqcup\overline{\Sigma}_{t_2}$. (Recall also Figure~\ref{fig:st_int_pos_neg}.)
The boundary solution space $L_{\partial M}$ decomposes as a cartesian product or direct sum $L_{\partial M}=L_{t_1}\oplus L_{t_2}$. Upon quantization, the Fock space then decomposes into a tensor product $\cH_{\partial M}= \cH_{t_1}\tens \cH_{t_2}^*$, i.e., $\psi=\psi_1\tens\psi_2^*$. The dualization, indicated by $^*$ is due to orientation reversal. In the Schrödinger representation the tensor product manifests as a factorization of wave functions. With $\phi|_{\partial M}=(\phi_1,\phi_2)$ we have $\psi(\phi_1,\phi_2)=\psi_1(\phi_1)\overline{\psi_2(\phi_2)}$. This recovers the path integral (\ref{eq:tamplpi}) from (\ref{eq:amplpi}), where the notation for the amplitude map and transition amplitude are related as,
\begin{equation}
  \rho_{[t_1,t_2]}(\psi_1\tens\psi_2^*)=\langle \psi_2, U_{[t_1,t_2]} \psi_1\rangle .
\end{equation}

The quantization of observables in spacetime regions is also easily accomplished through the Feynman path integral. Given a complex observable $F:K_M\to\bC$ we use for its quantization the notation $\rho_M^F:\cH_{\partial M}\to\bC$ and call this the corresponding \emph{observable map},
\begin{equation}
  \rho_M^F(\psi)
    = \int_{K_M}\xD\phi\, \psi\left(\phi|_{\partial M}\right) F(\phi)\, e^{\im S_M(\phi)} .
  \label{eq:obsmap}
\end{equation}
If the observable $F$ arises from a \emph{linear} observable $D:K_M\to\bC$ via $F=\exp(\im\, D)$, then we call it a \emph{Weyl observable}. Adding a source term to the action as in Section~\ref{sec:feynpf} is just a special case of such a Weyl observable with $D$ given by expression~(\ref{eq:src}). We are particularly interested in the special case where $\psi$ is the \emph{vacuum state} $\psi_0$. The evaluation of the path integral then yields the simple formula,\footnote{This formula appears in \cite{Oe:feynobs} in a slightly different form as formula (85). It can be derived using formula (\ref{eq:afactsymp}) of Appendix~\ref{sec:lagingreds}.}
\begin{equation}
  \rho_M^F(\psi_0)=\exp\left(\frac{\im}{2} D(\eta)\right),
  \label{eq:veweyl}
\end{equation}
generalizing formula (\ref{eq:vevsource}). Again, $\eta$ is here a special complexified solution of the ``inhomogeneous'' equations of motion in $M$. ``Inhomogeneous'' now refers to the equations of motion generated by the modified action $S_M+D$. We denote the corresponding affine space of solutions in $M$ by $A_M^D$. Its \emph{complexification} is $A_M^D\oplus \im L_M$, where $L_M$ is the space of solutions of the unmodified equations of motion in $M$. By restriction to a neighborhood of the boundary $\partial M$, $\eta$ gives rise to an element in $L_{\partial M}^{\bC}$. By slight abuse of notation we also denote this element by $\eta$. The boundary condition that $\eta$ has to satisfy is now given by the requirement that this restriction of $\eta$ be an element of the positive-definite Lagrangian subspace $L_{\overline{\partial M}}^+\subseteq L_{\partial M}^\bC$ that determines our quantization on $\partial M$ in the sense of Section~\ref{sec:modecs}.\footnote{Note that the orientation of the boundary $\partial M$ implicit in Section~\ref{sec:modecs} and relevant for the choice of the Lagrangian subspaces is opposite to that induced from the region $M$. We write $\overline{\partial M}$ to reflect this. This also affects the complex structure.} There is a unique solution $\eta$ that satisfies this requirement.
It is easy to see how the boundary conditions for $\eta$ in formula (\ref{eq:vevsource}) arise from this perspective. The boundary condition on $M$ can be split into two, corresponding to the two components of the boundary of $M$, initial and final. Correspondingly, the Lagrangian subspace $L^+_{\overline{\partial M}}\subseteq L_{\partial M}^\bC$ and the complex structure $J_{\overline{\partial M}}$ split into two. As explained previously, due to the relatively opposite orientation of past and future boundary, symplectic form and complex structure change sign, leading to a relative complex conjugation of the Lagrangian subspace determining the past boundary condition. More precisely, we have,
\begin{equation}
  J_{\overline{\partial M}}=J_{\overline{t_1}}+J_{t_2}=-J_{t_1}+J_{t_2},\quad\text{and thus},\quad
  L^+_{\overline{\partial M}}=L^+_{\overline{t_1}} \oplus L^+_{t_2}=\overline{L^+_{t_1}} \oplus L^+_{t_2}= L^-_{t_1}\oplus L^+_{t_2} .
\end{equation}
Here, $J_{t_1}$ and $J_{t_2}$ are naturally identified with a global $J$ on $L$. Similarly, $L^{\pm}_{t_1}$ and $L^{\pm}_{t_2}$ are naturally identified with the global $L^\pm\subseteq L^\bC$.

The key assumption underlying the existence and uniqueness of $\eta$ is the Lagrangian subspace property $L_M\subseteq L_{\partial M}$ (compare Section~\ref{sec:classlag}). In particular, the Lagrangian subspaces $L_M^\bC$ and $L_{\overline{\partial M}}^+$ of $L_{\partial M}^\bC$ are necessarily complementary, i.e., $L_{\partial M}^\bC=L_M^\bC\oplus L_{\overline{\partial M}}^+$. Indeed, a (complete) positive-definite and a complexified real Lagrangian subspace are always complementary, see Proposition~\ref{prop:rdlagcompl} in Appendix~\ref{sec:mathlag}. We can thus decompose any $\phi\in L_{\partial M}^\bC$ as $\phi=\phi^{\text{int}}+\phi^{\text{ext}}$ with $\phi^{\text{int}}\in L_{M}^\bC$ and $\phi^{\text{ext}}\in L_{\overline{\partial M}}^+$. Say $\xi\in A_M^D\oplus \im L_M$ is some arbitrary complexified solution of the modified equations of motion in $M$. It is then easy to see that $\eta=\xi^{\text{ext}}$.

Restrict now to the situation where the Weyl observable encodes a source $\mu$ with support in $M$. That is, $F_{\mu}=\exp(\im\, D_{\mu})$ with $D_{\mu}$ given by (\ref{eq:src}). As in Section~\ref{sec:feynpf} we can then rewrite the right-hand side of equation (\ref{eq:veweyl}) in terms of the Feynman propagator as the right hand side of equation (\ref{eq:pisrc2}). This Feynman propagator is here a symmetric distribution $G_F:M\times M\to\bC$ that satisfies the usual equation (\ref{eq:Feyndelta}). In addition, it satisfies the following boundary condition: When one argument, say $y$, is held fixed in the interior of $M$, then $G_F(x,y)$ as a function of $x$ reduces in a neighborhood of the boundary $\partial M$ to an element in the positive-definite Lagrangian subspace $L^+_{\overline{\partial M}}$.

In the present section we have so far not needed to make any reference to spacetime outside a given region $M$. However, usually $M$ will be part of a fixed global spacetime. Then, the choice of vacuum, i.e., positive-definite Lagrangian subspace on $\overline{\partial M}$ may arise as the imprint of an \emph{asymptotic boundary condition} as discussed in Section~\ref{sec:feynpf}. Correspondingly, we might consider $\eta$ and $G_F$ as globally defined objects in this setting. However, note that the present perspective on the meaning of ``asymptotic'' is potentially more general than in Section~\ref{sec:feynpf}, referring to the ``far away'' behavior of solutions, not necessarily in a temporal (far past and far future) sense.

We have also not needed to restrict in this section to spacelike hypersurfaces. Indeed, we have not needed to mention or even imply a metric on spacetime at all. The present considerations may well be applied to field theories that do not require a background spacetime metric. In fact, what we have reviewed in the present section is but a fraction of a beautifully coherent and manifestly local framework for quantizing field theory, including the free theory, quantization of observables and elementary perturbation theory, laid out in the work \cite{Oe:feynobs}. Unfortunately, this framework is seriously limited in its applicability to realistic quantum field theories as we shall see in Section~\ref{sec:hypcyl}.

\subsection{Amplitudes and vacuum in the Schrödinger representation}
\label{sec:schroedvac}

It turns out to be instructive to explore the notion of vacuum in the context of the duality relation between amplitudes and states. Thus, given a spacetime region $M$ we wish to consider a ``state'' $\hat{\rho}_M$ so that for any $\psi\in\cH_{\partial M}$,
\begin{equation}
  \rho_M(\psi)=\langle \hat{\rho}_M,\psi\rangle_{\partial M} .
  \label{eq:amplstd}
\end{equation}
Of course, with $\cH_{\partial M}$ infinite-dimensional, $\rho_M$ is unbounded and $\hat{\rho}_M$ will not be normalizable. But this does not stop us from writing down a Schrödinger wave function for it. Indeed, in the Schrödinger representation the inner product in (\ref{eq:amplstd}) is written as,
\begin{equation}
  \rho_M(\psi)
  = \int_{K_{\partial M}}\xD\varphi\, \psi(\varphi) \overline{\hat{\rho}_M(\varphi)},
\end{equation}
where the integral is over the field configuration space $K_{\partial M}$ on the boundary $\partial M$ of $M$. A formal expression for $\hat{\rho}_M$ thus follows from comparison with the path integral (\ref{eq:amplpi}),
\begin{equation} 
  \overline{\hat{\rho}_M(\varphi)}=Z_M(\varphi)=\int_{K_M, \phi|_{\partial M}=\varphi}\xD\phi\, e^{\im S_M(\phi)} .
  \label{eq:pifp}
\end{equation}
Here the integral is over those field configurations $\phi$ in the interior that match the boundary field configuration $\varphi$. $Z_M$ is called the \emph{field propagator} in $M$.
Since the action $S_M$ is a quadratic form, the integral can be solved explicitly. In particular, let $\phi_{\text{cl}}$ be the classical solution of the equations of motion in $M$ that takes the boundary value $\varphi$. (We assume this to exist and be unique here, see the discussion below.) Then, from the variational principle of the action and formal translation invariance of the path integral we get,
\begin{equation}
  Z_M(\varphi)= \exp\left(\im S_M(\phi_{\text{cl}})\right) ,
  \label{eq:fpropa}
\end{equation}
where we have dropped a numerical factor depending only on $M$.

In order to get a better understanding of the field propagator we take a closer look at the Schrödinger representation from the perspective on classical field theory as reviewed in Section~\ref{sec:classlag} \cite{Oe:affine}. To this end we note that Lagrangian field theory comes equipped with a \emph{symplectic potential} on any hypersurfaces $\Sigma$, see Appendix~\ref{sec:lagingreds} for details. In the present case of linear field theory this is a bilinear form $[\cdot,\cdot]_{\Sigma}:L_{\Sigma}\times L_{\Sigma}\to\R$. For example, for Klein-Gordon theory on an equal-time hyperplane in Minkowski space (oriented as in Section~\ref{sec:tec} as the past boundary of a region), this is,
\begin{equation}
  [\phi,\phi']_t=-\int\xd^3x\, \phi'(t,x) (\partial_0 \phi)(t,x) .
  \label{eq:etsymp}
\end{equation}
The symplectic form is its anti-symmetric part, see equation~(\ref{eq:sympfrompotlin}). We will make use of the fact that the action can be expressed in terms of the symplectic potential, see equation~(\ref{eq:actsymp}).

For a hypersurface $\Sigma$, define subspaces of $L_{\Sigma}$ as follows,
\begin{equation}
   P_{\Sigma}\defeq\{\tau\in L_{\Sigma}: [\xi,\tau]=0\; \forall \xi\in L_{\Sigma}\}\qquad
 Q_{\Sigma}\defeq\{\tau\in L_{\Sigma}: [\tau,\xi]=0\; \forall \xi\in L_{\Sigma}\} .
\label{eq:defmn}
\end{equation}
We assume that $P_{\Sigma}$ and $Q_{\Sigma}$ together generate $L_{\Sigma}$. This is enough to conclude that they are complementary Lagrangian subspaces, i.e., $L_{\Sigma}=P_{\Sigma}\oplus Q_{\Sigma}$. It turns out that the solutions in $P_{\Sigma}$ are characterized by having vanishing field value on $\Sigma$, while those in $Q_{\Sigma}$ have vanishing normal derivative. Also define, $K_{\Sigma}=L_{\Sigma}/P_{\Sigma}$. This is the same \emph{field configuration space} on $\Sigma$ that we have previously introduced. We denote the quotient map by $q_{\Sigma}:L_{\Sigma}\to K_{\Sigma}$. We also note that in view of the definition of $P_{\Sigma}$ we may consider the symplectic potential as a map $[\cdot,\cdot]_{\Sigma}:L_{\Sigma}\times K_{\Sigma}\to\R$. Essentially, its first argument depends only on Neumann data (normal derivatives) and the second argument only on Dirichlet data (field values).

Let $M$ be a spacetime region. We assume that $L_M$ as a subspace of $L_{\partial M}$ is transverse both to $P_{\partial M}$ and $Q_{\partial M}$ which implies $L_{\partial M}=L_M\oplus P_{\partial M}$ and $L_{\partial M}=L_M\oplus Q_{\partial M}$. (This is generically satisfied.) Then, there is a unique linear map $\pol_M:K_{\partial M}\to L_{\partial M}$ with the properties $\pol_M(K_{\partial M})=L_{M}$ and $q_{\partial M}\circ \pol_M=\id$. That is, $\pol_M$ yields for given Dirichlet boundary data the solution in $M$ that matches the data. We may use this and expression (\ref{eq:actsymp}) from Appendix~\ref{sec:lagingreds} to rewrite the field propagator (\ref{eq:fpropa}) as,
\begin{equation}
  Z_M(\varphi)= \exp\left(\frac{\im}{2} [\pol_M(\varphi),\varphi]_{\partial M}\right) .
  \label{eq:samplbdy}
\end{equation}

Consider now the Schrödinger wave function of the vacuum state $\psi_0$ on a spacelike hypersurface $\Sigma$ in the context of the standard quantization of Section~\ref{sec:revquant}. In the specific case of Klein-Gordon theory in Minkowski space on an equal-time hypersurface (as in Section~\ref{sec:tec}), the vacuum wave function is given by \cite{hat:qft},
\begin{equation}
  \psi_0(\varphi)=\exp\left(-\frac{1}{2} \int \xd^3 x\, \varphi(x) \, (E \varphi)(x)\right) .
  \label{eq:svackg}
\end{equation}
Here we use a compact notation where $E$ is to be understood as an operator on the field configuration $\varphi$. It is defined as taking the eigenvalue $E$ on a plane wave mode with energy $E$.

In general, the vacuum is determined by a positive-definite Lagrangian subspace $L_{\Sigma}^+\subseteq L_{\Sigma}^\bC$. Now, $L_{\Sigma}^+$ is necessarily transverse to both $P_{\Sigma}^\bC$ and $Q_{\Sigma}^\bC$ as they are both complexifications of real Lagrangian subspaces (recall Proposition~\ref{prop:rdlagcompl}). There is thus a unique map $\pol_{\Sigma}^+:K_{\Sigma}\to L_{\Sigma}^\bC$ such that $\pol_{\Sigma}^+(K_{\Sigma})=L_{\Sigma}^+$ and $q_{\Sigma}\circ \pol_{\Sigma}^+=\id$. In other words, $\pol_{\Sigma}^+$ selects that element in the Lagrangian subspace $L_{\Sigma}^+$ which has the prescribed Dirichlet data on $\Sigma$. It turns out that the Schrödinger wave function of the vacuum state in general is,
\begin{equation}
  \psi_0(\varphi)= \overline{\exp\left(\frac{\im}{2} [\pol_{\Sigma}^+(\varphi),\varphi]_{\Sigma}\right)} ,
  \label{eq:svacstd}
\end{equation}
where the orientation of $\Sigma$ in the expression for the symplectic potential is that as the boundary of the region to the future of $\Sigma$.
Indeed, in the case of Klein-Gordon theory in Minkowski space, $\pol_{\Sigma}^+$ maps configurations to positive energy solutions. On these, $\partial_0$ yields eigenvalues $-\im E$. Combining this with the expression (\ref{eq:etsymp}) for the symplectic potential and changing sign for orientation reversal of $\Sigma$ recovers the vacuum wave function (\ref{eq:svackg}).

In general, the vacuum wave function (\ref{eq:svacstd}) looks exactly the same as that of a state that represents the amplitude for a spacetime region. The only difference is that the Lagrangian subspace involved in the wave function of the vacuum is a definite one while the one involved in the wave function representing the amplitude is a (complexified) real one. Note also that $\Sigma$ is the boundary of a spacetime region, namely the region to the future (or past) of $\Sigma$.

\section{A case study: The timelike hypercylinder}
\label{sec:hypcyl}

In the present section we consider a first example of boundary conditions determining a choice of vacuum on a \emph{timelike} hypersurface. Recall from Section~\ref{sec:feynpf} that a choice of \emph{temporal asymptotic vacuum} (when available) is imprinted on a global Feynman propagator $G_F$. Since we can look at the same propagator in spacetime regions with different shapes, this gives us a means of comparing boundary conditions on different types of hypersurfaces.

Concretely, we consider the standard vacuum of Klein-Gordon theory on Minkowski space. Using the notation of Section~\ref{sec:tec} its Feynman propagator can be written in the familiar form,
\begin{equation}
  G_F((t,x),(t',x'))=\im\int\frac{\xd^3 k}{(2\pi)^3 2E}
  \left(\theta(t-t') e^{-\im(E t - k x)} e^{\im(E t'-k x')} 
  +\theta(t'-t) e^{\im(E t - k x)} e^{-\im(E t' - k x')}\right) .
  \label{eq:fptint}
\end{equation}
This makes transparent the future and past boundary conditions that the propagator satisfies. As discussed in Section~\ref{sec:feynpf} we can impose these boundary conditions at the past and future boundary of a time-interval region $M=[t_1,t_2]\times\R^3$.

\subsection{Massless theory}
\label{sec:hcmassless}

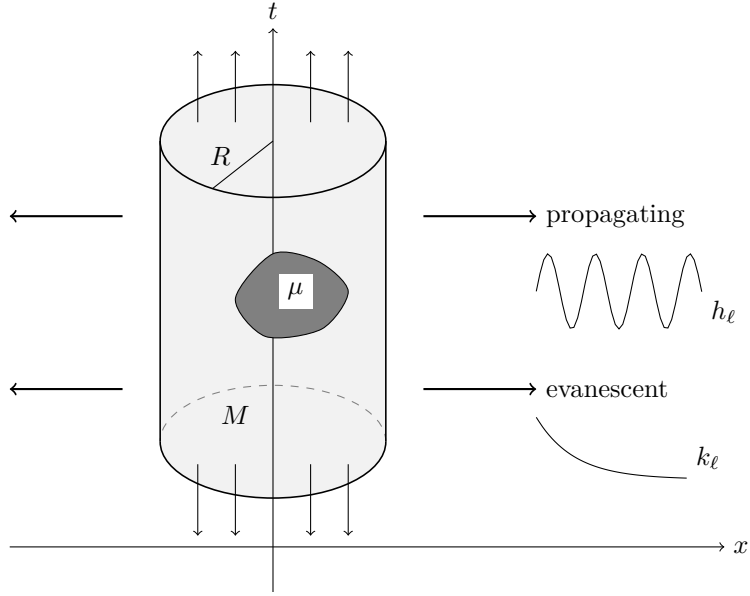
\begin{figure}[t]
\centering
\begin{tikzpicture}
\fill[gray!10] (0,0) arc (0:-180:1.5 and 0.75) -- (-3,4) arc (-180:-360:1.5 and 0.75);
\draw[->] (-1.5,-2) -- (-1.5,5.5) node [above] {$t$};
\draw[fill=gray,xshift=-4.5cm] plot [smooth cycle] coordinates { (2.5,1.9) (3,2.5) (3.6,2.4) (4,2)  (3.65,1.5) (2.9,1.4)};
\node at (-1.2,2) [fill=white] {$\mu$};
\draw[->] (-5,-1.4) -- (4.5,-1.4) node [right] {$x$};
\draw[->] (-2.5,4.25) -- (-2.5,5.2);
\draw[->] (-2,4.25) -- (-2,5.2);
\draw[->,xshift=1.5cm] (-2.5,4.25) -- (-2.5,5.2);
\draw[->,xshift=1.5cm] (-2,4.25) -- (-2,5.2);
\draw[<-,yshift=-5.5cm] (-2.5,4.25) -- (-2.5,5.2);
\draw[<-,yshift=-5.5cm] (-2,4.25) -- (-2,5.2);
\draw[<-,xshift=1.5cm,yshift=-5.5cm] (-2.5,4.25) -- (-2.5,5.2);
\draw[<-,xshift=1.5cm,yshift=-5.5cm] (-2,4.25) -- (-2,5.2);
	\draw[dashed,color=gray] (0,0) arc (0:180:1.5 and 0.75);
	\draw[semithick] (0,0) arc (0:-180:1.5 and 0.75);
	\draw[semithick] (-1.5,4) ellipse (1.5 and 0.75);
	\draw[semithick] (0,0) -- (0,4);
	\draw[semithick] (-3,0) -- (-3,4);
\node at (-2,0.35) {$M$};
\draw (-1.5,4) -- (-2.3,3.37);
\node at (-2.2,3.8) {$R$};
\draw [thick,->] (-3.5,3) -- (-5,3);
\draw [thick,->] (-3.5,0.7) -- (-5,0.7);
\draw [thick,->] (0.5,3) -- (2,3) node [right] {propagating};
\draw[variable=\y,samples at={0,0.05,...,2.25}] plot (\y+2,{2+0.5*sin(10*\y r)},0) node[below right]{$h_{\ls}$};
\draw [thick,->] (0.5,0.7) -- (2,0.7) node [right] {evanescent};
\draw[scale=0.5,domain=-0.5:3.5,smooth,variable=\x] plot ({\x+4.5},{-1+e^(-\x)}) node[above right]{$k_{\ls}$};
 \end{tikzpicture}
 \caption{The region $M$ is bounded by the hypercylinder of radius $R$. The interior solutions of Klein-Gordon theory are given in terms of spherical Bessel functions $j_{\ls}$, the vacuum on the exterior in terms of propagating $h_{\ls}$. For the massive case in the interior there are additionally the functions $a_{\ls}$ and in the exterior the evanescent $k_{\ls}$.}
\label{fig:st_hcyl_prop_evan}
\end{figure}

Let us consider now a region with timelike boundaries. We choose the \emph{timelike hypercylinder} that consists of a 3-ball $B^3_R$ of radius $R$, centered at the origin in space, and extended over all of time, i.e., $M=\R\times B^3_R$. See Figure~\ref{fig:st_hcyl_prop_evan}. The boundary of this region is the 2-sphere $S^2_R$ of radius $R$, centered at the origin in space, and extended over all of time, $\partial M=\R\times S^2_R$. This is a timelike and connected hypersurface. Suppose for the moment that the theory is massless, i.e., $m=0$. Then the space of complexified solutions $L_{\partial M}^\bC$ in a neighborhood of the boundary $\partial M$ may be parametrized in terms of modes as follows,
\begin{equation}
 \phi(t,r,\Omega)=\int_{-\infty}^\infty\xd E\, \frac{p}{4\pi}
  \sum_{\ls,\ms}\left(\phi_{\ls,\ms}^\text{a}(E) h_{\ls}(p r) e^{-\im E t} Y_{\ls}^{\ms}(\Omega)
  +\phi_{\ls,\ms}^\text{b}(E)\; \overline{h_{\ls}(p r)} e^{\im E t} Y_{\ls}^{-\ms}(\Omega)\right) .
  \label{eq:hcprop}
\end{equation}
Here $\Omega$ denotes angular coordinates on the sphere, $r$ is the radial distance from the origin, $p=|E|$. The functions $Y^{\ms}_{\ls}$ are the usual spherical harmonics. $h_{\ls}$ and $\overline{h_{\ls}}$ are the spherical Bessel functions of the third kind, also known as Hankel functions. The sum runs over the usual angular momentum quantum numbers, i.e., $\ls=0,1,\ldots,$ and $\ms=-\ls,-\ls+1,\ldots,\ls-1,\ls$. The integral runs over $\R$. Real solutions are those with $\phi_{\ls,\ms}^\text{a}(E)=\overline{\phi_{\ls,\ms}^\text{b}(E)}$. The \emph{symplectic potential} (\ref{eq:sympot}) is the bilinear form $L_{\partial M}^\bC\times L_{\partial M}^\bC\to\bC$ given by,
\begin{equation}
  [\phi,\xi]_{\partial M} = - R^2 \int  \xd t\,\xd\Omega\,
  \xi(t,R,\Omega) (\partial_r \phi)(t,R,\Omega) .
  \label{eq:hcsymp}
\end{equation}
The \emph{symplectic form} (\ref{eq:sympfrompotlin}) on $L_{\partial M}^\bC$ is its anti-symmetrization,
\begin{align}
  \omega_{\partial M}(\phi,\xi)  & =\frac{R^2}{2}\int \xd t\,\xd\Omega\,
 \left(\phi(t,R,\Omega) (\partial_r \xi)(t,R,\Omega)- \xi(t,R,\Omega) (\partial_r \phi)(t,R,\Omega)\right) \label{eq:symhc} \\
& = \int_{-\infty}^{\infty}\xd E\frac{\im p}{8\pi}\sum_{\ls,\ms}
 \left(\xi_{\ls,\ms}^\text{a}(E)\phi_{\ls,\ms}^\text{b}(E)
 -\xi_{\ls,\ms}^\text{b}(E)\phi_{\ls,\ms}^\text{a}(E)\right) .
 \label{eq:symhcprop}
\end{align}
Note that the \emph{global} (complexified) solution space can be identified not with $L_{\partial M}^\bC$, but with $L_M^\bC$, the space of solutions in the interior of the hypercylinder. Viewed as a subspace of $L_{\partial M}^\bC$, this consists of the solutions with the property $\phi_{\ls,\ms}^{\text{a}}(E)=\phi_{\ls,-\ms}^{\text{b}}(-E)$. To see this note that the Bessel functions $h_{\ls}$ and $\overline{h_{\ls}}$ are singular at the origin in space, i.e., at $r=0$. Only in the linear combination $j_{\ls}=(h_{\ls}+\overline{h_{\ls}})/2$ do the singularities cancel. $j_{\ls}$ is the spherical Bessel function of the first kind. It is also easy to verify using expression (\ref{eq:symhcprop}) that the subspace $L_M\subseteq L_{\partial M}$ (or $L_M^{\bC}\subseteq L_{\partial M}^{\bC}$) is a Lagrangian subspace, as required.

Using the modes of the expansion (\ref{eq:hcprop}) we may rewrite the Feynman propagator (\ref{eq:fptint}) equivalently as follows \cite{CoOe:smatrixgbf},
\begin{multline}
  G_F((t,r,\Omega),(t',r',\Omega'))
  =\int_{-\infty}^{\infty}\xd E\, \frac{\im p}{2 \pi} \sum_{\ls,\ms} Y_{\ls}^{\ms}(\Omega) Y_{\ls}^{-\ms}(\Omega')
  e^{-\im E t} e^{\im E t'} \\
  \left(\theta(r-r') h_{\ls}(p r) j_{\ls}(p r')
  +\theta(r'-r) h_{\ls}(p r') j_{\ls}(p r)\right) .
  \label{eq:fphcprop}
\end{multline}
From this expression it is easy to read off that the boundary condition on the hypercylinder is given by the restriction to the $h_{\ls}$ modes, i.e., the solutions that satisfy $\phi^{\text{b}}_{\ls,\ms}(E)=0$. We denote this subspace by $L_{\overline{\partial M}}^+$. It is easy to verify with the symplectic form (\ref{eq:symhcprop}) that this is indeed a Lagrangian subspace. The inner product (\ref{eq:iplc}) takes the form,
\begin{equation}
  (\phi,\xi)_{\overline{\partial M}}=
 \int_{-\infty}^{\infty}\xd E\frac{p}{2\pi}\sum_{\ls,\ms}
  \left(\overline{\phi_{\ls,\ms}^\text{a}(E)}\xi_{\ls,\ms}^\text{a}(E)
  -\overline{\phi_{\ls,\ms}^\text{b}(E)}\xi_{\ls,\ms}^\text{b}(E)\right) .
 \label{eq:iphcprop}
\end{equation}
(Note the opposite orientation of $\partial M$ and thus opposite sign compared to (\ref{eq:symhc}).) This is easily seen to be positive-definite on $L_{\overline{\partial M}}^+$. The corresponding complex structure $J_{\overline{\partial M}}$ multiplies the $h_{\ls}$ modes with $\im$ and the $\overline{h_{\ls}}$ modes by $-\im$. Thus, we nicely recover a notion of vacuum in the sense of Section~\ref{sec:stquant} as a boundary condition on the timelike hypercylinder. Moreover, this is the standard vacuum of Klein-Gordon theory in Minkowski space.

The definite Lagrangian subspace $L_{\overline{\partial M}}^+\subseteq L_{\partial M}^\bC$ exhibits the properties of a conventional vacuum in another important sense, namely that of behavior with respect to infinitesimal normal evolution, compare Section~\ref{sec:tec}. Consider the radial derivative $\partial_r$ as acting on the space of (complexified) germs of solutions $L_{\partial M}^\bC$. Since only the spherical Bessel functions in the expansion (\ref{eq:hcprop}) depend on the radius $r$ it is sufficient to focus on these exclusively. A convenient presentation is the following \cite[10.49(i)]{NIST:DLMF},
\begin{equation}
  h_{\ls}(pr)=e^{\im p r}\sum_{k=0}^{\ls} \frac{\im^{k-\ls-1} (\ls+k)!}{2^k k! (\ls-k)! (pr)^{k+1}},
  \qquad \overline{h_{\ls}(pr)}=e^{-\im p r}\sum_{k=0}^{\ls} \frac{(-\im)^{k-\ls-1} (\ls+k)!}{2^k k! (\ls-k)! (pr)^{k+1}} .
  \label{eq:hlexpansion}
\end{equation}
From this it is clear that applying $\partial_r$ to a mode containing $h_{\ls}$, yields modes containing $h_{\ls}$ (for different values of $\ls$). Similarly this happens for the modes containing $\overline{h_{\ls}}$. That is, the Lagrangian subspace $L_{\overline{\partial M}}^+$ and its complement $L_{\overline{\partial M}}^-=\overline{L_{\overline{\partial M}}^+}$ are both invariant subspaces of the operator $\partial_r$. What is more, for large radius $r$ the functions $r\mapsto h_{\ls}(pr)$ and $r\mapsto \overline{h_{\ls}(pr)}$ become approximate eigenfunctions of the derivative operator $\partial_r$ with imaginary eigenvalues $\im p$ and $-\im p$ respectively. Correspondingly, the operator $-J_{\partial M}\partial_r$ acquires the momentum $p$ as its approximate eigenvalue on all modes. Conversely, this criterion uniquely determines the complex structure $J_{\overline{\partial M}}$. This is in precise analogy to the role the time-evolution operator $\partial_0$ plays in the selection of the vacuum in traditional quantization, compare Section~\ref{sec:tec} and particularly expression (\ref{eq:tevolh}). (There is a difference in overall sign, due to the different sign for spacelike and timelike directions in the Lorentzian metric.)

\subsection{Massive theory}
\label{sec:hcmassive}

We continue to consider the Klein-Gordon theory in Minkowski space, but drop the restriction for the field to be massless, i.e., we allow $m\neq 0$. In that case the integrals over the energy $E$ have to be restricted to $E^2\ge m^2$ in expressions (\ref{eq:hcprop}), (\ref{eq:symhcprop}), (\ref{eq:fphcprop}), and (\ref{eq:iphcprop}). What is more, in these same expressions, as well as in those to follow, $p$ becomes $p=\sqrt{|E^2-m^2|}$. Correspondingly, the modes in the expansion (\ref{eq:hcprop}) do no longer describe the space of solutions near radius $R$ completely, but only a subspace $L_{\partial M}^{\text{p},\bC}\subseteq L_{\partial M}^{\bC}$. This is the subspace of \emph{propagating solutions} that show oscillatory behavior in space at large radius. We henceforth denote all corresponding solutions spaces from the previous section with a superscript $^{\text{p}}$ to indicate this. There are in addition \emph{evanescent solutions} that show exponential behavior in space. We denote the subspace of these by $L_{\partial M}^{\text{e},\bC}\subseteq L_{\partial M}^{\bC}$. To parametrize them we use modified spherical Bessel functions given by, $k_{\ls}(z)=-\im^{\ls}\pi h_{\ls}(\im z)/2$ and $\tilde{k}_{\ls}(z)=k_{\ls}(-z)$. For real arguments these are real functions. The $k_{\ls}$ modes decay exponentially with radius, while the $\tilde{k}_{\ls}$ modes grow exponentially. Complexified evanescent solutions near $\partial M$ (in fact anywhere away from $r=0$) may be parametrized as,
\begin{equation}
 \phi(t,r,\Omega)=\int_{-m}^{m}\xd E\, \frac{p}{4\pi} e^{-\im E t}
  \sum_{\ls,\ms} Y_{\ls}^{\ms}(\Omega) \left(\phi_{\ls,\ms}^\text{x}(E) k_{\ls}(p r)
+ \phi_{\ls,\ms}^\text{i}(E) \tilde{k}_{\ls}(p r) \right) .
\end{equation}
The real subspace $L_{\partial M}^{\text{e}}\subseteq L_{\partial M}^{\text{e},\bC}$ is given by the conditions $\phi_{\ls,\ms}^\text{x}(E)=\phi_{\ls,-\ms}^\text{x}(-E)$ and $\phi_{\ls,\ms}^\text{i}(E)=\phi_{\ls,-\ms}^\text{i}(-E)$. The symplectic form (\ref{eq:symhc}) on $L_{\partial M}^{\text{e},\bC}$ is,
\begin{equation}
  \omega_{\partial M}^{\text{e}}(\phi,\xi)
= \int_{-m}^{m}\xd E\frac{\pi p}{32}\sum_{\ls,\ms}
 \left(\xi_{\ls,\ms}^\text{x}(E)\phi_{\ls,-\ms}^\text{i}(-E)
 -\xi_{\ls,\ms}^\text{i}(E)\phi_{\ls,-\ms}^\text{x}(-E)\right) .
 \label{eq:symhcev}
\end{equation}
The symplectic form vanishes between propagating and evanescent solutions. Thus, the total symplectic form on $L_{\partial M}^{\bC}=L_{\partial M}^{\text{p},\bC}\oplus L_{\partial M}^{\text{e},\bC}$ is simply the sum of (\ref{eq:symhcprop}) and (\ref{eq:symhcev}). The solutions $k_{\ls}$ and $\tilde{k}_{\ls}$ are both singular at the origin, but the linear combination $a_{\ls}(z)\defeq\im^{-\ls+2}\pi j_{\ls}(\im z)=\tilde{k}_{\ls}(z)+(-1)^{\ls} k_{\ls}(z)$ is well behaved there. Thus, the subspace $L_M^{\text{e},\bC}\subseteq L_{\partial M}^{\text{e},\bC}$ of solutions in the interior of the hypercylinder is determined by the condition $\phi_{\ls,\ms}^\text{i}(E)=(-1)^{\ls} \phi_{\ls,\ms}^\text{x}(E)$. As is easy to see with (\ref{eq:symhcev}), $L_M^{\text{e}}\subseteq L_{\partial M}^{\text{e}}$ is a Lagrangian subspace. Correspondingly, $L_M=L_M^{\text{p}}\oplus L_M^{\text{e}}$ is a Lagrangian subspace of $L_{\partial M}$, as required. Note that there are no evanescent modes that are well defined and bounded in all of Minkowski space. So, in contradistinction to the propagating modes they do not appear in the global space of solutions.

In the massive case the Feynman propagator in terms of hypercylinder modes also receives a contribution from the evanescent solutions. Thus, in addition to the right hand side of expression (\ref{eq:fphcprop}), the evanescent contribution is given by \cite{CoOe:smatrixgbf},
\begin{multline}
  G_F^{\text{e}}((t,r,\Omega),(t',r',\Omega'))
  =-\int_{-m}^{m}\xd E\, \frac{p}{\pi^3} \sum_{\ls,\ms} Y_{\ls}^{\ms}(\Omega) Y_{\ls}^{-\ms}(\Omega')
  e^{-\im E t} e^{\im E t'} \\
  \left(\theta(r-r') k_{\ls}(p r) a_{\ls}(p r')
  +\theta(r'-r) k_{\ls}(p r') a_{\ls}(p r)\right) .
  \label{eq:fphcev}
\end{multline}
From this expression we can read off immediately that the boundary condition on the hypercylinder for the evanescent solutions is given by the restriction to the $k_{\ls}$ modes, i.e., the solutions that satisfy $\phi_{\ls,\ms}^\text{i}(E)=0$. We denote this subspace by $L_{\overline{\partial M}}^{\text{e},+}\subseteq L_{\partial M}^{\text{e},\bC}$. It is easy to verify with (\ref{eq:symhcev}) that this is indeed a Lagrangian subspace. The same is thus true for the corresponding subspace comprising both propagating and evanescent solutions, $L_{\overline{\partial M}}^+=L_{\overline{\partial M}}^{\text{p},+}\oplus L_{\overline{\partial M}}^{\text{e},+}\subseteq L_{\partial M}^{\bC}$. However, the inner product (\ref{eq:iplc}) takes on the evanescent solutions the form,
\begin{equation}
  (\phi,\xi)_{\overline{\partial M}}^{\text{e}}
= \int_{-m}^{m}\xd E\frac{\im \pi p}{8}\sum_{\ls,\ms}
 \left(\overline{\phi_{\ls,\ms}^\text{x}(E)}\xi_{\ls,\ms}^\text{i}(E)
 -\overline{\phi_{\ls,\ms}^\text{i}(E)}\xi_{\ls,\ms}^\text{x}(E)\right) .
 \label{eq:iphcev}
\end{equation}
This is clearly not positive-definite on $L_{\overline{\partial M}}^{\text{e},+}$. In fact, $L_{\overline{\partial M}}^{\text{e},+}$ is a \emph{neutral subspace} for this inner product, i.e., the inner product vanishes on any two elements from this subspace. In contrast to $L_{\overline{\partial M}}^{\text{p},+}\subseteq L_{\partial M}^{\text{p},\bC}$, $L_{\overline{\partial M}}^{\text{e},+}$ is the \emph{complexification} of a \emph{real} Lagrangian subspace of $L_{\partial M}^{\text{e}}$, recall that the $k_{\ls}$ modes are real modes. By inspection of the inner product (\ref{eq:iplc}) it is clear that in this case the isotropy property (\ref{eq:isotrop}) implies the neutrality property.

It is also instructive to consider the action of the radial derivative operator $\partial_r$ on the space $L_{\partial M}^{\text{e}}$ of evanescent germs. (On the space $L_{\partial M}^{\text{p}}$ of propagating germs the previous discussion of the massless case fully applies.) To this end we consider the following presentation of the relevant spherical Bessel functions of the third kind \cite[10.49(ii)]{NIST:DLMF},
\begin{equation}
  k_{\ls}(pr)= e^{-p r}\sum_{k=0}^{\ls} \frac{\pi\, (\ls+k)!}{2^{k+1} k! (\ls-k)! (pr)^{k+1}},
  \qquad \tilde{k}_{\ls}(pr)= e^{p r}\sum_{k=0}^{\ls} \frac{\pi\, (-1)^{k+1} (\ls+k)!}{2^{k+1} k! (\ls-k)! (pr)^{k+1}} .
\end{equation}
From this it is easy to see that the Lagrangian subspace $L_{\overline{\partial M}}^{\text{e},+}$ build out of Bessel functions $k_{\ls}$, as well as the complementary Lagrangian subspace $L_{\overline{\partial M}}^{\text{e},-}$ build out of Bessel functions $\tilde{k}_{\ls}$ are both invariant under $\partial_r$. What is more, for large radius $r$ the functions $r\mapsto k_{\ls}(pr)$ and $r\mapsto \tilde{k}_{\ls}(pr)$ become approximate eigenfunctions of the derivative operator $\partial_r$ with real eigenvalues $-p$ and $p$ respectively. (Note that $p$ is positive by definition.) This reflects the fact that the elements of $L_{\overline{\partial M}}^{\text{e},+}$ (which contain $k_{\ls}$-modes) are decaying solutions for large radius $r$, while those of $L_{\overline{\partial M}}^{\text{e},-}$ (which contain $\tilde{k}_{\ls}$-modes) are growing solutions for large radius $r$. We note that the standard vacuum corresponds to selecting the asymptotically decaying solutions rather than the growing ones.

We conclude that the boundary condition on the hypercylinder in the massless Klein-Gordon theory in Minkowski space induced by the standard Feynman propagator can be interpreted following Section~\ref{sec:stquant} precisely as a choice of vacuum. In particular, we can construct a corresponding Hilbert space of states associated to the hypercylinder following the prescription of Section~\ref{sec:modecs}. This is not so in the massive theory. There we encounter instead a Lagrangian subspace that is not definite. In particular, there is no corresponding Hilbert space in the sense of Section~\ref{sec:modecs} and the framework \cite{Oe:feynobs} referred to in Section~\ref{sec:stquant} does not apply. (For the incomplete subset of modes that are propagating, there is of course a definite Lagrangian subspace and corresponding Hilbert space, as in the massless theory.)

\section{The vacuum as a Lagrangian subspace}
\label{sec:vaclag}

We hope to have presented sufficient evidence in previous sections to convince the reader of the picture alluded to in the title of this work. We lay out this picture in the present section. It turns out to be fruitful to start with classical field theory.

\subsection{Classical field theory}
\label{sec:classglag}

\begin{figure}
\centering
\begin{tikzpicture}[scale=0.9]
\fill[gray!10] (-0.5,1) rectangle (4,3);
\draw[->] (-0.5,0) -- (4,0) node [right] {$x$};
\draw[->] (0,-0.5) -- (0,4) node [above] {$t$};
\draw [-] (-0.5,1) node [left] {$t_1$}  -- (4,1);
\draw [-] (-0.5,3) node [left] {$t_2$}  -- (4,3);
\node at (0.4,2) {$M$};
\node at (2.3, 2.3) {real Lagrangian};
\node at (2.3, 1.8) {subspace $L_M$};
\node at (2,-1) {(a)};
 \end{tikzpicture}
 \hspace{0.3cm}
\begin{tikzpicture}[scale=0.9]
\fill[gray!10] (-0.5,2) rectangle (4,4);
\draw[->] (-0.5,0) -- (4,0) node [right] {$x$};
\draw[->] (0,-0.5) -- (0,4) node [above] {$t$};
\draw [-] (-0.5,2)   -- (4,2) node [midway, below] {$\Sigma$};
\node at (0.4,2.7) {$\psi_0$};
\draw[->] (0.4,2.5) -- (0.4,2.1);
\node at (2.3, 3.7) {definite Lagrangian};
\node at (2.3, 3.2) {subspace $\tilde{L}_X$};
\node at (3.2,2.5) {$X$};
\node at (2,-1) {(b)};
 \end{tikzpicture}
 \hspace{0.3cm}
\begin{tikzpicture}[scale=0.9]
\fill[gray!10] (-0.5,2) rectangle (4,-0.5);
\draw[->] (-0.5,0) -- (4,0) node [right] {$x$};
\draw[->] (0,-0.5) -- (0,4) node [above] {$t$};
\draw [-] (-0.5,2)   -- (4,2) node [midway, above] {$\Sigma$};
\node at (0.4,1.3) {$\psi_0$};
\draw[->] (0.4,1.5) -- (0.4,1.9);
\node at (2.3, 0.8) {definite Lagrangian};
\node at (2.3, 0.3) {subspace $\tilde{L}_X$};
\node at (3.2,1.5) {$X$};
\node at (2,-1) {(c)};
 \end{tikzpicture}
  \caption{(a) Solutions in a time-interval region $M$ yield a \emph{real} Lagrangian subspace $L_M\subseteq L_{\partial M}$. (b) The future vacuum state $\psi_0$ on a spacelike hypersurface $\Sigma$ is encoded through a \emph{definite} Lagrangian subspace $\tilde{L}_X\subseteq L_{\partial X}^{\bC}=L_{\Sigma}^{\bC}$. (c) The corresponding past vacuum.}
  \label{fig:stdgeom_lagsub}
\end{figure}
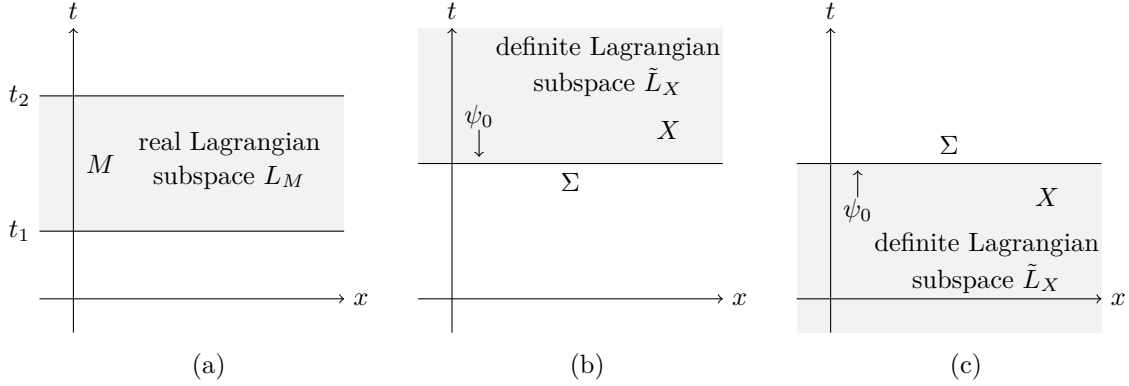

Recall from Section~\ref{sec:classlag} that given a spacetime region $M$ the space $L_M$ of solutions in the interior should give rise to a Lagrangian subspace of the space $L_{\partial M}$ of germs of solutions on the boundary. This works generally well if $M$ is a \emph{compact} spacetime region. This works also well for certain non-compact regions. The principal example we have discussed is a time-interval region in a globally hyperbolic spacetime, see Figure~\ref{fig:stdgeom_lagsub}.a. We have also seen in Section~\ref{sec:hypcyl} that this works well for Klein-Gordon theory on the hypercylinder in Minkowski space. On the other hand, suppose we consider the \emph{exterior} of a time-interval region in a globally hyperbolic spacetime. For simplicity just restrict to the future part, i.e., the region to the future of a given spacelike hypersurface. Generically, the solutions show an oscillating behavior and there seems to be no natural way to single out a subspace that is Lagrangian when restricted to germs on the hypersurface. Indeed, as we have previously emphasized, all germs on the hypersurface, which are nothing but initial data, correspond to valid global solutions. On the other hand, as recalled in Section~\ref{sec:revquant} we do attach a Lagrangian subspace to the hypersurface, encoding a vacuum for the quantum theory. What is more, the asymptotic perspective reviewed in Section~\ref{sec:feynpf} suggests that we may think of this Lagrangian subspace as associated to the region that is all of the future of the hypersurface, see Figure~\ref{fig:stdgeom_lagsub}.b. (Figure~\ref{fig:stdgeom_lagsub}.c shows the corresponding ``past vacuum''.) However, the Lagrangian subspace in question is a \emph{definite} Lagrangian subspace of the \emph{complexification} of the space of germs on the hypersurface.

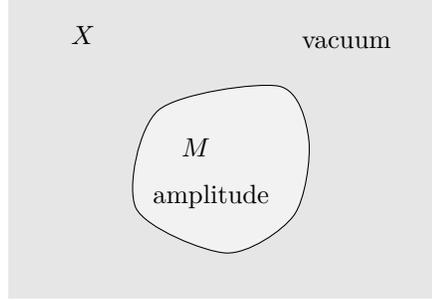
\begin{figure}
\centering
\begin{tikzpicture}
\fill[gray!20] (-1,-1) rectangle (4.75,3);
\draw[fill=gray!10] plot [smooth cycle] coordinates { (0.7,0.2) (1,1.5) (2.6,1.82) (3,1.1)  (2.8,.1) (1.9,-.4)};
\node at (1.7,0.35) {amplitude};
\node at (3.5,2.4) {vacuum};
\node at (1.5,1) {$M$};
\node at (0,2.5) {$X$};
 \end{tikzpicture}
  \caption{Generic setting with a compact region $M$ and non-compact exterior $X$. In $M$ we have an amplitude map while in $X$ we have a vacuum.}
  \label{fig:generic-ampl-vac}
\end{figure}

The intuitive picture that the reader should have in mind is the following. We consider a compact (or maybe just temporally compact) region $M$ and its complement $X$, see Figure~\ref{fig:generic-ampl-vac}. The physics in $M$ is encoded in its amplitude map $\rho_M$, corresponding to a real Lagrangian subspace $L_M\subseteq L_{\partial M}$. The physics in $X$ on the other hand gives rise to a vacuum state $\psi_0$ on $\partial M$, corresponding to a definite Lagrangian subspace $\tilde{L}_X\subseteq L^{\bC}_{\partial X}= L^{\bC}_{\partial M}$. We claim that both types of Lagrangian subspaces and correspondingly the two apparently very different notions of ``interior'' solution space and ``exterior'' vacuum should be seen as instances of the same general structure. Using the observation that a Lagrangian subspace of a real symplectic vector space canonically complexifies to give rise to Lagrangian subspace of the complexified symplectic vector spaces, we arrive at the following picture, generalizing that of Section~\ref{sec:classlag}. Thus, to each hypersurface $\Sigma$ we associate a complex symplectic vector space $(L_{\Sigma}^{\bC},\omega_{\Sigma})$ that arises as the complexification of the real symplectic vector space $(L_{\Sigma},\omega_{\Sigma})$ of germs of solutions previously considered. (We do not distinguish notationally between the symplectic form $\omega_{\Sigma}$ and its complexification.) To each region $M$ we associate a \emph{complex} vector space $\tilde{L}_M$ with a map $\tilde{L}_M\to L_{\partial M}^\bC$ such that the image is a complex Lagrangian subspace. All further ingredients of the classical field theory picture of Section~\ref{sec:classlag} are generalized accordingly. Notably, the composition rule encoded in the exact sequence (\ref{eq:exacomp}) is required to hold. The full generalization of the corresponding axiomatic system for classical field theory \cite[Section~4.1]{Oe:holomorphic} is provided in Appendix~\ref{sec:caxioms}.

It remains to specify how the spaces $\tilde{L}_M$ arise. Clearly, for the regions where the picture of Section~\ref{sec:classlag} works, we want to maintain it. That is, if the space $L_M$ is defined and gives rise to a Lagrangian subspace of $L_{\partial M}$, we take $\tilde{L}_M$ to be its complexification $L_{M}^\bC$. This should hold for all compact regions as well as for certain non-compact ones (such as the examples of the time interval and the hypercylinder).
The example of the massive Klein-Gordon theory in Section~\ref{sec:hcmassive} suggests how this generalizes to another important class of non-compact regions. Namely those regions where it makes sense to impose a condition of \emph{asymptotic decay} on the field. Indeed, we have seen that those evanescent waves that decay with large radius in the region \emph{exterior} to the hypercylinder form a real Lagrangian subspace. Moreover, this Lagrangian subspace precisely encodes the (evanescent part of) the standard vacuum of Klein-Gordon theory in Minkowski space.
On the other hand, $\tilde{L}_M$ should clearly give rise to a corresponding definite Lagrangian subspace when the region is one occurring in the context of conventional quantization. The prime example here is a region occurring as the future or past of a spacelike hypersurface in a globally hyperbolic spacetime. We return to a more dedicated discussion of the spaces $\tilde{L}_M$ in Section~\ref{sec:vchoice}.

\subsection{Quantum field theory}

The picture of unifying the real Lagrangian subspaces of solutions and the definite Lagrangian subspaces coming from quantization in a common framework is intriguing in the classical theory. It becomes compelling in the quantum theory. 

The quantum analog of a real Lagrangian subspace $L_M$ of solutions in a region $M$ is the \emph{amplitude map} $\rho_M$ for that region. Transferring the picture outlined above to the quantum theory, if for a region $M$ the space $\tilde{L}_M$ is not a real, but a definite Lagrangian subspace, its quantum analog is still a \emph{(generalized) amplitude map} for that region. Except in this case it is one traditionally called a \emph{vacuum}.
As recalled in Section~\ref{sec:schroedvac} we may conveniently encode the amplitude in terms of a Schrödinger wave function $\hat{\rho}_M$. There is a single formula that describes this wave function in all cases,
\begin{equation}
  \overline{\hat{\rho}_M(\varphi)}= Z_M(\varphi) = \exp\left(\frac{\im}{2} [\widetilde{\pol}_M(\varphi),\varphi]_{\partial M}\right) .
  \label{eq:univprop}
\end{equation}
Here, $\widetilde{\pol}_M:K_{\partial M}^\bC\to L_{\partial M}^\bC$ is the unique linear map such that $\widetilde{\pol}_M(K_{\partial M}^\bC)=\tilde{L}_{M}$ and $q_{\partial M}\circ\widetilde{\pol}_M=\id$. (We use the notation with the tilde to emphasize that we necessarily work with the complexified spaces now.) This reduces to the real case (\ref{eq:samplbdy}) of a traditional amplitude map if $\tilde{L}_M$ is a complexified real Lagrangian subspace. It reduces to the wave function of a traditional vacuum (\ref{eq:svacstd}) if $\tilde{L}_{M}$ is a definite Lagrangian subspace.

Note that the Schrödinger representation is chosen here for convenience and brevity of presentation. There is nothing special about it with respect to the unification of the concepts of amplitude and vacuum. This unification becomes equally manifest in other representations such as the holomorphic one. In the latter case this may be seen using the machinery developed in \cite{Oe:holomorphic,Oe:schroedhol,Oe:feynobs}. However, presenting the details of this is not essential to our argument and thus outside the scope of the present article.

The key formula for evaluating the amplitude for a spacetime region $M$ (which might be all of spacetime) in the vacuum state and with a Weyl observable inserted in $M$ remains (\ref{eq:veweyl}). The applicability of the formula is expanded, however. Consider the typical situation that $M$ is a spacetime region with $L_M\subseteq L_{\partial M}$ real Lagrangian and denote by $X$ its exterior, i.e., the complementary spacetime region which we take to be non-compact, see again Figure~\ref{fig:generic-ampl-vac}. Let $D:K_M\to\bC$ be a linear observable and $F=\exp(\im D)$ the corresponding Weyl observable. Denote as before (compare Section~\ref{sec:stquant}) by $A_M^D$ the solution space of the inhomogeneous equations of motion for the action $S_M+D$ and $A_M^D\oplus\im L_M$ its complexification. Now, the vacuum $\tilde{L}_X$ is a (definite or not) Lagrangian subspace of $L_{\partial X}^\bC=L_{\partial M}^\bC$. Then, the vacuum expectation value of the observable $F$ is given by formula (\ref{eq:veweyl}), where $\eta\in A_M^D\oplus\im L_M\cap \tilde{L}_X$ is unique. This gives the right result whether $\tilde{L}_X$ is a real or a definite Lagrangian subspace of $L_{\partial M}^\bC$ as we have seen in the hypercylinder example of Section~\ref{sec:hypcyl}.

The formula can further be extended to the boundary-less case that $M$ is all of spacetime if the ingredients are suitably interpreted. Then, $\eta\in A_M^D\oplus\im L_M\cap \tilde{L}_{\partial M}$, where $A_M^D$ and $L_M$ are now interpreted as global inhomogeneous and respectively homogeneous solutions. On the other hand we take $L_{\partial M}^\bC$ to be the complexified space of \emph{asymptotic} solutions, i.e., the space of solutions that are defined only near the ``boundary'' $\partial M$ of spacetime ``at infinity''. $\tilde{L}_M\subseteq L_{\partial M}^\bC$ is the Lagrangian subspace implementing the choice of vacuum. We are deliberately vague here about what we mean by this ``boundary''. One might think of boundaries of conformal compactifications for example. We leave this to be made precise in future work.

\section{The choice of vacuum}
\label{sec:vchoice}

In this section we turn to the question of how a space $\tilde{L}_M$ of (complexified) solutions is associated to a spacetime region $M$. As already discussed in Sections~\ref{sec:classlag} and \ref{sec:classglag}, for compact spacetime regions and for certain non-compact ones the space $L_M$ of solutions in $M$ is naturally a Lagrangian subspace of the space $L_{\partial M}$ of germs on the boundary $\partial M$. $\tilde{L}_M$ is then just its complexification $L_M^\bC$. As also indicated in Section~\ref{sec:classglag} this generalizes to non-compact regions where it makes sense to impose a condition of asymptotic decay on the field.
For other non-compact regions a decay condition does not make sense. The prime example is the future or past half of a globally hyperbolic spacetime bounded by a spacelike hypersurface. Indeed, as this is the home for the conventional notion of vacuum we do not expect to obtain a real Lagrangian subspace, but rather a definite Lagrangian subspace. In Section~\ref{sec:tec} we have reviewed how the choice of this Lagrangian subspace can be addressed via an infinitesimal approach tied to a notion of time evolution. In the present section we integrate this into a more general infinitesimal approach for the selection of (generalized) vacua (Section~\ref{sec:infvac}). Subsequently we seek to formalize an approach based on generalized asymptotic decay conditions (Section~\ref{sec:fpvac}).

\subsection{Infinitesimal approach}
\label{sec:infvac}

Suppose we have a non-compact spacetime region $X$ where imposing an asymptotic decay condition on the field yields a real Lagrangian subspace $L_X\subseteq L_{\partial X}$. Then, we might try to determine if a given germ in $L_{\partial X}$, when extended into $X$ decays or not by looking at its behavior in a neighborhood of the hypersurface $\partial X$. More concretely, we suppose we have a normal derivative operator $\partial_n$, pointing to the interior of $X$, acting on $L_{\partial X}$. Then, spectrally decomposing $\partial_n$, we should expect it to exhibit positive or negative eigenvalues for solutions depending on whether they grow or decay under continuation into $X$. Thus, $L_X$ should correspond to the subspace of $L_{\partial X}$ that is spanned by the eigenspaces for negative eigenvalues of $\partial_n$.

The evanescent solutions of the massive Klein-Gordon field provide an example for precisely this situation. Recall Section~\ref{sec:hcmassive}. The region in question is the exterior $X$ of the solid hypercylinder $M$ and the normal derivative is the radial one, which we denoted $\partial_r$. The Lagrangian subspace $L_X^{\text{e}}$ ($^{\text{e}}$ indicates restriction to evanescent solutions) is called $L_{\overline{\partial M}}^{\text{e},+}=L_{\partial X}^{\text{e},+}$ there. As we have seen there, it is the subspace of $L_{\partial X}=L_{\partial M}$ where $\partial_r$ has negative asymptotic eigenvalues, namely $-p$, where $p$ is the total momentum.

In general, a normal derivative operator $\partial_n$ on the space $L_{\Sigma}$ of germs on a hypersurface (boundary or not) does not need to be spectrally decomposable. Recall in particular the situation corresponding to a conventional vacuum, see Section~\ref{sec:tec} and \ref{sec:quanthyp}. More specifically, consider Klein-Gordon theory on an equal-time hypersurface $\Sigma$ in Minkowski space. Then the square $\partial_0^2$ of the time-derivative operator $\partial_0$ has negative eigenvalues. Thus, to decompose $\partial_0$ we need to complexify $L_{\Sigma}$ to $L_{\Sigma}^\bC$ and the eigenvalues are imaginary. The solutions are oscillatory. So, how does this fit together with the idea of decaying solutions? On the face of it it does not, but we can ``make'' some solutions decaying by \emph{Wick rotation}. That is, we rotate $\partial_0$ by $90^\circ$ in the complex plane by multiplying it with $-\im$ (or $\im)$. We then declare those solutions ``decaying'' that correspond to negative eigenvalues of $-\im\partial_0$ (or $\im\partial_0$). Recall from Section~\ref{sec:tec} that choosing $-\im$ makes the corresponding Hamiltonian positive. Multiplying instead by $\im$ makes it negative. The right choice depends in general on conventions and coherence. The Lagrangian subspace of the selected solutions is what we declare $\tilde{L}_M$ to be if $\Sigma$ is the boundary of $M$. This subspace is by construction not the complexification of a real subspace, but rather a complex subspace that is complementary to its complex conjugate. What is more, looking at the explicit structure of the symplectic form (\ref{eq:sfkget}) this should even be a definite subspace with respect to the inner product (\ref{eq:iplc}).

This approach should work in somewhat more generality, e.g., for spacelike hypersurfaces in globally hyperbolic spacetime (bounding a half of spacetime). It also works in the example of the Klein-Gordon theory on the hypercylinder for propagating solutions, recall Section~\ref{sec:hcmassless}. There, the asymptotic eigenvalues of the radial derivative operator $\partial_r$ are $\im p$ and $-\im p$, with $p$ the total momentum. Indeed, the vacuum Lagrangian subspace $\tilde{L}_X$ in the exterior region $X$, called there $L_{\overline{\partial M}}^+=L_{\partial X}^+$, corresponds precisely to selecting the negative asymptotic eigenspaces of $\im \partial_r$. (The choice of $\im\partial_r$ rather than $-\im\partial_r$ ensures consistency with the standard vacuum and is related to the opposite signature between spacelike and timelike directions in the metric.)

Summarizing, the infinitesimal approach starts with a normal derivative operator $\partial_n$ on the space $L_{\Sigma}$ of germs on the hypersurface $\Sigma$ in question. If the operator has real eigenvalues, the eigenspaces with negative eigenvalues are chosen to form a real Lagrangian subspace. If the operator has imaginary eigenvalues, the space $L_{\Sigma}$ is complexified to $L_{\Sigma}^{\bC}$, the derivative operator is \emph{Wick rotated} to $\im \partial_n$ (or $-\im \partial_n$) and the eigenspaces with negative eigenvalues of the rotated operator are taken to form a Lagrangian subspace. Of course, the derivative operator might not have eigenvalues that are restricted to being either only real or only imaginary. One might imagine that in some cases one can define sensible generalizations of the Wick rotation procedure. On the other hand, it is clear that the presented method has important limitations. We also have not elaborated on how exactly one would choose a normal derivative operator in general. There might not even be a good choice for such an operator.

\subsection{Asymptotic field propagator approach}
\label{sec:fpvac}

We proceed to describe a method for selecting a vacuum that relies on a particular formalization of the notion of asymptotic boundary conditions. To this end we suppose that the spacetime region $X$ in question is foliated into a collection of hypersurfaces $\Sigma_s$, indexed by a real parameter $s\in [0,\infty)$, such that $\Sigma_0$ coincides with the boundary $\partial X$. Denote by $X_s$ the subregion of $X$ enclosed between the hypersurfaces $\Sigma_0$ and $\Sigma_s$. That is, the boundary $\partial X_s$ decomposes into the disjoint union $\Sigma_0\sqcup \overline{\Sigma_s}$. Note that we orient $\Sigma_0$ and $\Sigma_s$ both as boundaries of a region spanned by the larger values of $s$. Thus, as a boundary component of $X_s$, $\Sigma_s$ has opposite orientation. Correspondingly, the space of germs on the boundary decomposes as $L_{\partial X_s}=L_{\Sigma_0}\times L_{\overline{\Sigma_s}}$. We consider the Schrödinger field propagator (\ref{eq:pifp}), which we write as a function of two arguments, separating $K_{\partial X_s}=K_{\Sigma_0}\times K_{\Sigma_s}$,
\begin{equation}
  Z_{X_s}(\varphi,\varphi')=\int_{K_{X_s}, \phi|_0=\varphi, \phi|_s=\varphi'}\xD\phi\, e^{\im S_{X_s}(\phi)} .
\end{equation}

The basic idea of the asymptotic method for obtaining the field propagator for the region $X$ is rather simple. We want to impose that the field vanish at ``infinity'', that is for $s\to\infty$. To this end we cut the region off to obtain $X_s$, make the field vanish on $\Sigma_s$, then send $s$ to infinity. That is, we define $Z_X$ as,
\begin{equation}
  Z_X(\varphi)\defeq \lim_{s\to\infty} Z_{X_s}(\varphi,0) .
  \label{eq:fplim}
\end{equation}
Using further ingredients of the Schrödinger representation (compare Section~\ref{sec:schroedvac}) we can make this more precise.
We assume the Lagrangian subspace property $L_{X_s}\subseteq L_{\partial X_s}$ for the region $X_s$. This (together with genericity) implies then that we have a linear map $\pol_{X_s}:K_{\partial X_s}\to L_{\partial X_s}$ satisfying $\pol_{X_s}(K_{\partial X_s})= L_{X_s}$ and $q_{\partial X_s}\circ \pol_{X_s}=\id$. We rewrite $\pol_{X_s}$ in terms of components as, $\pol_{[0,s]}^0:K_{\Sigma_0}\times K_{\Sigma_s}\to L_{\Sigma_0}$ and $\pol_{[0,s]}^s:K_{\Sigma_0}\times K_{\Sigma_s}\to L_{\Sigma_s}$. We can then reexpress the field propagator with (\ref{eq:samplbdy}) as,
\begin{equation}
    Z_{X_s}(\varphi,\varphi')= \exp\left(\frac{\im}{2} \left([\pol_{[0,s]}^0(\varphi,\varphi'),\varphi]_{0}-[\pol_{[0,s]}^s(\varphi,\varphi'),\varphi']_{s}\right)\right) .
\end{equation}
Then, the limit (\ref{eq:fplim}) works out to,
\begin{equation}
  Z_X(\varphi)= \lim_{s\to\infty} \exp\left(\frac{\im}{2} [\pol_{[0,s]}^0(\varphi,0),\varphi]_{0}\right) .
  \label{eq:fplimsp}
\end{equation}
On the other hand, a Lagrangian subspace $L_X\subseteq L_{\partial X}$ yields a corresponding linear map $\pol_X:K_{\partial X}\to L_{\partial X}$ satisfying $\pol_X(K_{\partial X})=L_X$ and $q_{\partial X}\circ\pol_X=\id$. (Note $\partial X=\Sigma_0$.) Conversely, $\pol_X$ uniquely determines $L_X$. Thus, comparing the field propagator (\ref{eq:samplbdy}) determined by $L_X$ with the limit (\ref{eq:fplimsp}) given above we get,
\begin{equation}
  \pol_X(\varphi)=\lim_{s\to\infty} \pol_{[0,s]}^0(\varphi,0) .
  \label{eq:limpol}
\end{equation}
In particular, the limit on the right hand side determines thus the Lagrangian subspace $L_X$.

We consider again the example of the evanescent waves for massive Klein-Gordon theory on the hypercylinder in Minkowski space (Section~\ref{sec:hcmassive}). Thus, fix a radius $R$ and let $X$ be the region exterior to the hypercylinder of radius $R$. We foliate this region by radius so that the leaf indexed by a parameter value $s\in [0,\infty)$ is the hypercylinder of radius $R+s$. One can then show that the relevant expression in the exponential of the right hand side of (\ref{eq:fplimsp}) is \cite{CoOe:smatrixgbf},
\begin{multline}
  [\pol_{[0,s]}^0(\varphi,0),\varphi]_{0} \\
  =R^2\int\xd t\,\xd\Omega\, \varphi(t,\Omega)
  \left(\frac{p\left(k_{\ls}\left(p (R+s)\right)\tilde{k}_{\ls}'\left(p R\right)-\tilde{k}_{\ls}\left(p (R+s)\right) k_{\ls}'\left(p R\right)\right)}%
       {k_{\ls}\left(p (R+s)\right)\tilde{k}_{\ls}\left(p R\right) - \tilde{k}_{\ls}\left(p (R+s)\right) k_{\ls}\left(p R\right)}\, \varphi\right)(t,\Omega) .
  \label{eq:ehcpol}
\end{multline}
We use here a rather compact notation where the expression forming the fraction is understood as acting as an operator on the field configuration $\varphi$ on the right-hand side. More precisely, the expression represents the eigenvalue of the operator when $\varphi$ is decomposed into spherical harmonics (indexed by $\ls$ and $\ms$) in space and into Fourier modes in time (indexed by the energy $E$ with $p=\sqrt{m^2-E^2}$). Noting that $k_{\ls}$ decays exponentially in radius we have
\begin{equation}
  \lim_{s\to\infty} k_{\ls}\left(p(R+s)\right)=0 .
\end{equation}
Thus, for the limit of (\ref{eq:ehcpol}) we get,
\begin{equation}
  [\pol_X(\varphi),\varphi]_0=\lim_{s\to\infty}[\pol_{[0,s]}^0(\varphi,0),\varphi]_{0}
  =R^2\int\xd t\,\xd\Omega\, \varphi(t,\Omega)
  \left(\frac{p\, k_{\ls}'\left(p R\right)}%
{k_{\ls}\left(p R\right)}\, \varphi\right)(t,\Omega) .
\end{equation}
Comparing this with expression (\ref{eq:hcsymp}) for the symplectic potential it is easy to see that the fraction is precisely the radial derivative if $\pol_X:K_{\partial X}\to L_{\partial X}$ maps onto the $k_{\ls}$-modes. Thus, the Lagrangian subspace $L_X\subseteq L_{\partial X}$ is precisely that of the decaying $k_{\ls}$-modes, called $L_{\overline{\partial M}}^{\text{e},+}=L_{\partial X}^{\text{e},+}$ in Section~\ref{sec:hcmassive}, recovering the standard vacuum. As expected, this is also in agreement with the infinitesimal approach.

We proceed to the situation where imposing a vanishing boundary condition at $s\to\infty$ does not make sense, because solutions show an oscillating behavior in the $s$-direction. The prime example is of course that of a globally hyperbolic spacetime with $s$ representing a time parameter, i.e., the setting of the conventional notion of vacuum. As we have seen in Section~\ref{sec:infvac}, we can deal with this in the infinitesimal setting through a Wick rotation. We proceed to describe an asymptotic version of the notion of \emph{Wick rotation}. Here, rather than to the corresponding derivative operator we apply the Wick rotation to the evolution parameter $s$. Thus, we consider the same limit (\ref{eq:fplim})
for the field propagator, but only after rotating $s$ to $-\im s$ (or $\im s$). As is easy to see, choosing $-\im s$ (or $\im s$) here corresponds precisely to choosing $-\im \partial_s$ (or $\im\partial_s$) in the infinitesimal approach. That is, $-\im s$ corresponds to a positive (generalized) Hamiltonian. Repeating the same steps as above, we arrive for the field propagator at the expression,
\begin{equation}
  Z_X(\varphi)= \lim_{s\to\infty} \exp\left(\frac{\im}{2} [\pol_{[0,-\im s]}^0(\varphi,0),\varphi]_{0}\right) .
  \label{eq:fplimisp}
\end{equation}
The analogue of equation (\ref{eq:limpol}) is now,
\begin{equation}
  \widetilde{\pol}_X(\varphi)=\lim_{s\to\infty} \pol_{[0,-\im s]}^0(\varphi,0) .
  \label{eq:limipol}
\end{equation}
We have used here the notation $\widetilde{\pol}_X:K_{\partial X}^\bC\to L_{\partial X}^\bC$ with a tilde to emphasize that we necessarily work with the complexified spaces now. This determines the complex Lagrangian subspace $\tilde{L}_X$ through the properties of $\widetilde{\pol}_X$. These are, $\widetilde{\pol}_X(K_{\partial X})=\tilde{L}_X$ and $q_{\partial X}\circ \widetilde{\pol}_X=\id$.

The recovery of the standard vacuum from a time foliation of Minkowski space for Klein-Gordon theory is the prime example for the Wick rotated setting. Thus, let $X$ be the region to the future of the spacelike hypersurface at $t=0$ and set $s=t$. With the conventions of Section~\ref{sec:tec} we then have \cite{Oe:timelike},
\begin{equation}
  [\pol_{[0,s]}^0(\varphi,0),\varphi]_{0}=\int \xd^3 x\, \varphi(x) \left(\frac{E \cos(E s)}{\sin(E s)}\, \varphi\right)(x) .
  \label{eq:mslhprop}
\end{equation}
Again, we understand the fractional expression as an operator specified through its eigenvalues on plane wave modes labeled by $E$. Performing the Wick rotation, with $s$ replaced by $-\im s$ we get,
\begin{equation}
  [\pol_{[0,-\im s]}^0(\varphi,0),\varphi]_{0}=\int \xd^3 x\, \varphi(x) \left(\frac{\im E \cosh(E s)}{\sinh(E s)}\, \varphi\right)(x) .
\end{equation}
Taking the limit $s\to \infty$, we obtain,
\begin{equation}
  [\widetilde{\pol}_X(\varphi),\varphi]_0=\lim_{s\to\infty} [\pol_{[0,-\im s]}^0(\varphi,0),\varphi]_{0}=\int \xd^3 x\, \varphi(x) \left(\im E\, \varphi\right)(x) .
  \label{eq:mslhlim}
\end{equation}
Comparing this with expression (\ref{eq:etsymp}) for the symplectic potential we recognize $-\im E$ as the eigenvalues of the time-derivative operator $\partial_0$ for the positive energy modes. Thus, we recover precisely the definite Lagrangian subspace $L^+\subseteq L_{\partial X}^{\bC}$ as the image of $\widetilde{\pol}_X$ corresponding to the standard vacuum of Klein-Gordon theory in Minkowski space. In particular, the Schrödinger wave function for the vacuum is (\ref{eq:svackg}).

As a second example, already discussed in the infinitesimal approach, we consider the propagating waves of Klein-Gordon theory on the hypercylinder in Minkowski space (Section~\ref{sec:hcmassless}). Geometrically the setting is the same as that for the evanescent waves. That is, the region $X$ is the exterior of the hypercylinder and we foliate it by radius. Thus, $X_s$ is bounded by hypercylinders of radius $R$ and $R+s$. Then \cite{Oe:kgtl},
\begin{multline}
  [\pol_{[0,s]}^0(\varphi,0),\varphi]_{0}\\
  = R^2\int\xd t\,\xd\Omega\, \varphi(t,\Omega)
  \left(\frac{p\left(h_{\ls}\left(p (R+s)\right)\overline{h_{\ls}'\left(p R\right)}-\overline{h_{\ls}\left(p (R+s)\right)} h_{\ls}'\left(p R\right)\right)}%
       {h_{\ls}\left(p (R+s)\right)\overline{h_{\ls}\left(p R\right)} - \overline{h_{\ls}\left(p (R+s)\right)} h_{\ls}\left(p R\right)}\, \varphi\right)(t,\Omega) .
  \label{eq:phcpol}
\end{multline}
We Wick rotate $s$. In accordance with the infinitesimal approach we replace $s$ by $\im s$ (rather than by $-\im s$). As can be read off from expression (\ref{eq:hlexpansion}) we then have,
\begin{equation}
  \lim_{s\to\infty} h_{\ls}\left(p(R+\im s)\right)=0 .
\end{equation}
This yields,
\begin{equation}
  [\widetilde{\pol}_X(\varphi),\varphi]_0=\lim_{s\to\infty}[\pol_{[0,\im s]}^0(\varphi,0),\varphi]_{0}
  = R^2\int\xd t\,\xd\Omega\, \varphi(t,\Omega)
  \left(\frac{p\, h_{\ls}'\left(p R\right)}{h_{\ls}\left(p R\right)}\, \varphi\right)(t,\Omega) .
\end{equation}
Comparison with (\ref{eq:hcsymp}), taking into account the opposite sign due to opposite orientation lets us conclude that $\widetilde{\pol}_X$ maps onto the $h_{\ls}$-modes. That is, the Lagrangian subspace $\tilde{L}_X$ is the space called $L_{\overline{\partial M}}^+=L_{\partial X}^+$ in Section~\ref{sec:hcmassless}, spanned by the $h_{\ls}$-modes. As expected, this is in agreement with the infinitesimal approach and recovers the standard vacuum.

As for the infinitesimal method we remark that the asymptotic method will have a limited range of applicability. In some cases a suitable modification might be apparent. We consider an example of such a modification in Section~\ref{sec:Rindlerhyp}.

\section{Further examples}
\label{sec:examples}

In the present section we showcase the applicability of our framework for defining and encoding vacua (Section~\ref{sec:vaclag}) to different types of regions and on different types of hypersurfaces, complementing the hypercylinder example of Section~\ref{sec:hypcyl}. At the same time, we demonstrate the methods for vacuum selection of Section~\ref{sec:vchoice}. All the presented examples concern Klein-Gordon theory in simple and well understood settings, including among other elements timelike hypersurfaces, Euclidean space and curved spacetime.

\subsection{Minkowski space: vacuum on timelike hyperplanes}
\label{sec:tlhp}

In preceding sections the standard Minkowski vacuum was reviewed on spacelike hyperplanes (Section~\ref{sec:revquant}) and constructed on the hypercylinder (Section~\ref{sec:hypcyl}). In this section we consider the vacuum on a timelike hyperplane. This type of hypersurface in Klein-Gordon theory was first considered in \cite{Oe:timelike}, where only propagating modes where taken into account. As before, we consider coordinates $(t,x_1,x_2,x_3)$ where $t$ is time and $x_i$ are spatial coordinates. Consider the hyperplane $\Sigma$ given by $x_1=0$. We are interested in the vacuum ``to the right'', i.e., corresponding to the region with $x_1\ge 0$. The space of (complexified) solutions around $\Sigma$ contains both propagating and evanescent solutions. The space $L_{\Sigma}^{\text{p},\bC}$ of complexified \emph{propagating solutions} may be parametrized as,
\be
\phi(t,x_1,x_2,x_3) =  \int \frac{\xd k_2 \, \xd k_3}{(2 \pi)^{3/2}} \int_{|E|>\tilde{k}} \xd E\, e^{\im E t - \im k_2 x_2 - \im k_2 x_3} \left( \phi^{\text{a}}(E, k_2,k_3)  e^{\im k x_1} + \phi^{\text{b}}(E, k_2,k_3)  e^{-\im k x_1} \right),
\label{eq:KG-exp}
\ee
where $\tilde{k} = \sqrt{k_2^2+k_3^2+m^2}$,  $k = \sqrt{|E^2-\tilde{k}^2|}$ and the integrals in $k_2$ and $k_3$ are over $\R$. These solutions are characterized by an oscillatory behavior in the $x_1$ direction.
The space $L_{\Sigma}^{\text{e},\bC}$ of complexified \emph{evanescent solutions},
\be
\phi(t,x_1,x_2,x_3) = \int \frac{\xd k_2 \, \xd k_3}{(2 \pi)^{3/2}} \int_{-\tilde{k}}^{\tilde{k}} \xd E \, e^{\im E t - \im k_2 x_2 - \im k_2 x_3} \left( \phi^\text{x}(E, k_2,k_3)  e^{- k x_1} + \phi^\text{i}(E, k_2,k_3)  e^{ k x_1} \right),
\label{eq:KG-exp-ev}
\ee
consists of real exponential modes in $x_1$.
The symplectic potential (\ref{eq:sympot}) on the hypersurface $\Sigma$ (as a boundary of a region with $x_1\ge 0$) is the bilinear form
$L_{\Sigma} \times L_{\Sigma} \to \R$  given by
\begin{equation}
  [\phi,\phi'] =   \int \xd t \, \xd x_2 \, \xd x_3 \, \phi'(t,x_1,x_2,x_3) (\partial_1 \phi) (t,x_1,x_2,x_3),
  \label{eq:tlsympot}
\end{equation}
where $\partial_1$ denotes the partial derivative w.r.t.\ the coordinate $x_1$, namely it is the normal derivative w.r.t.\ the hyperplane $\Sigma$. The symplectic form (\ref{eq:sympfrompotlin}) on $\Sigma$ is
\be
\omega(\phi,\xi) = \frac12 \int \xd t \, \xd x_2 \, \xd x_3 \left( \xi(t,x_1,x_2,x_3) (\partial_1 \phi)(t,x_1,x_2,x_3) - \phi(t,x_1,x_2,x_3) (\partial_1 \xi)(t,x_1,x_2,x_3)\right) .
\label{eq:tlhpsymf}
\ee
It is easy to show that it vanishes between propagating and evanescent solutions. We thus have a corresponding orthogonal decomposition $L_{\Sigma}^{\bC} =  L_{\Sigma}^{\text{p},\bC} \oplus L_{\Sigma}^{\text{e},\bC}$ of the complexified solution space, analogous to the massive theory on the hypercylinder, compare Section~\ref{sec:hcmassive}. In contrast to the hypercylinder, however, evanescent waves occur on the timelike hyperplane even in the massless theory.

We proceed to consider the Lagrangian subspaces determining the vacuum on the timelike hyperplane. For propagating modes, the symplectic form (\ref{eq:tlhpsymf}) reads,
\begin{multline}
\omega^{\text{p}}(\phi,\xi) = \im  \int \xd k_2 \, \xd k_3 \, k \int_{|E|>\tilde{k}}   \xd E \\ \left(\phi^{\text{a}}(E,k_2,k_3) \xi^{\text{b}}(-E,-k_2,-k_3) - \phi^{\text{b}}(E,k_2,k_3) \xi^{\text{a}}(-E,-k_2,-k_3)\right) .
\end{multline}
This yields the inner product (\ref{eq:iplc}),
\be
(\phi,\xi)^{\text{p}} = 4 \int \xd k_2 \, \xd k_3 \, k \int_{|E|>\tilde{k}}   \xd E \, \left( \overline{\phi^{\text{a}}(E,k_2,k_3)} \xi^{\text{a}}(E,k_2,k_3) - \overline{\phi^{\text{b}}(E,k_2,k_3)} \xi^{\text{b}}(E,k_2,k_3)\right) .
\ee
This results to be positive-definite for $\phi^{\text{b}}(E,k_2,k_3)=0$, which defines a positive-definite Lagrangian subspace $L_{\Sigma}^{\text{p},+}\subseteq L_{\Sigma}^{\text{p},\bC}$ which we take to define the propagating part of the vacuum.

On the other hand, in the whole spacetime region with $x_1\ge 0$, of the evanescent modes only those decaying as $e^{-kx_1}$ are well behaved. This corresponds to the condition $\phi^\text{i}(E,k_2,k_3)=0$. These solutions form a Lagrangian subspace $L_{\Sigma}^{\text{e},+} \subseteq L_{\Sigma}^{\text{e},\bC}$ as can be seen from the expression of the symplectic form,
\begin{equation}
\omega^{\text{e}}(\phi,\xi) = \int \xd k_2 \, \xd k_3 \, k \int_{-\tilde{k}}^{\tilde{k}} \xd E \, \left(\phi^\text{i}(E,k_2,k_3) \xi^\text{x}(-E,-k_2,-k_3) - \phi^\text{x}(E,k_2,k_3) \xi^\text{i}(-E,-k_2,-k_3)\right).
\end{equation}
As for the case of the massive hypercylinder studied in Section (\ref{sec:hcmassive}), the subspace $L_{\Sigma}^{\text{e},+}$ turns out to be a neutral subspace since the inner product (\ref{eq:iplc}) vanishes when computed on two elements of this subspace.

With the Lagrangian subspaces determined we can almost immediately write down the corresponding presentation of the \emph{Feynman propagator}. Its propagating part reads,
\begin{multline}
  G_F^{\text{p}}((t,x_1,x_2,x_3),(t',x_1',x_2',x_3')) = \int \xd k_2 \, \xd k_3\,\frac{\im}{2k (2 \pi)^3} \int_{|E|>\tilde{k}} \xd E \\ e^{\im Et}e^{-\im Et'} 
  e^{-\im k_2 x_2}e^{\im k_2 x_2'}e^{-\im k_3 x_3}e^{\im k_3 x_3'}
\left( \theta(x_1-x_1') e^{\im k x_1}e^{-\im k x_1'} + \theta(x_1'-x_1) e^{\im k x_1'}e^{-\im k x_1} \right) .
\end{multline}
The evanescent part is,
\begin{multline}
  G_F^{\text{e}}((t,x_1,x_2,x_3),(t',x_1',x_2',x_3')) = \int \xd k_2 \, \xd k_3\,\frac{1}{2k (2 \pi)^3} \int_{-\tilde{k}}^{\tilde{k}} \xd E \\
  e^{\im Et}e^{-\im Et'}e^{-\im k_2 x_2}e^{\im k_2 x_2'}e^{-\im k_3 x_3}e^{\im k_3 x_3'}
\left( \theta(x_1-x_1') e^{- k x_1}e^{ k x_1'} + \theta(x_1'-x_1) e^{- k x_1'}e^{ k x_1} \right).
\end{multline}
The total propagator is the sum, $G_F=G_F^{\text{p}}+ G_F^{\text{e}}$. This is easily verified to be the Feynman propagator of the standard vacuum, equivalent to expression (\ref{eq:fptint}) as well as the sum of (\ref{eq:fphcprop}) and (\ref{eq:fphcev}).

Let us consider the infinitesimal approach to vacuum selection presented in Section \ref{sec:infvac}. As already mentioned, the evanescent modes $e^{-kx_1}$ that form the complexified real Lagrangian subspace $L_{\Sigma}^{\text{e},+}$ satisfy an asymptotic decay condition. They are indeed precisely the eigenmodes of the normal derivative operator $\partial_1$ with negative eigenvalues. In contrast, the propagating modes show oscillatory behavior. But, Wick rotating the derivative operator to $\im\partial_1$ yields negative eigenvalues precisely for the modes in the positive-definite Lagrangian subspace $L_{\Sigma}^{\text{p},+}$. Thus, we obtain exactly the expected results.

We proceed to the asymptotic field propagator approach for vacuum selection (Section~\ref{sec:fpvac}). Let $X$ be the region to the right of $\Sigma_0=\Sigma$, i.e., with $x_1\ge 0$. This region can be foliated in terms of timelike hyperplanes. In particular we consider the subregion $X_s= \R \times [0,s] \times \R^2$ relative to the coordinates $(t,x_1,x_2,x_3)$, namely the region bounded by two hyperplanes defined by constant values of the spatial coordinate $x_1$: the hyperplane $\Sigma_0$ and the hyperplane $\Sigma_s$ given by $x_1=s$. 
The parameter $s$ used in Section~\ref{sec:fpvac} now represents the spatial coordinate $x_1$. The boundary solution space $L_{\partial X_s}$ decomposes as $L_{\partial X_s}=L_{\Sigma_0} \oplus L_{\overline{\Sigma_s}}$.

For the evanescent modes, we expect to get the Schrödinger vacuum wave function via the limit (\ref{eq:fplimsp}) of the field propagator. The relevant quantity is,
\begin{equation}
[\text{pol}^0_{[0,s]}(\varphi,0),\varphi]_0 =  \int \xd t \, \xd \tilde{x}\, \varphi(t,\tilde{x}) \left( \frac{- k \cosh (k s)}{\sinh (k s)} \varphi \right) (t,\tilde{x}),
\end{equation}
where $\tilde{x}$ is a compact notation for the coordinates $(x_2,x_3)$ and the fraction represents an operator acting on a mode expansion of the field configuration. Then limit (\ref{eq:fplimsp}) yields the evanescent part of the field propagator for $X$,
\begin{equation}
Z_X^{\text{e}}(\varphi) =\exp\left(-\frac{\im}{2}\int \xd t \, \xd \tilde{x} \, \varphi(t,\tilde{x}) \left( k  \varphi \right) (t,\tilde{x})\right) .
\end{equation}
From this we can read off by comparison to the negative of (\ref{eq:tlsympot}) the eigenvalues $-k$ of the decaying modes, in accordance with our previous results.

For propagating modes, an asymptotic vanishing condition at spatial infinity can not be imposed due to the oscillatory character of solutions. Instead, we Wick rotate the parameter $s$ describing spatial evolution to $\im s$, corresponding to rotating $\partial_1$ to $\im\partial_1$. That is, we consider the field propagator in $X$ as the limit (\ref{eq:fplimisp}), except with $\im s$ instead of $-\im s$. This is,
\begin{align}
Z_X^{\text{p}}(\varphi) &= \lim_{s \to \infty} \exp \left(\frac{\im}{2} [\text{pol}^0_{[0, \im s]}(\varphi, 0),\varphi]_0 \right), \\
&=  \lim_{s \to \infty} \exp \left( \frac{\im}{2}  \int \xd t \, \xd \tilde{x} \, \varphi(t,\tilde{x}) \left( \frac{- k \cos ( k \im s)}{\sin ( k \im s)} \varphi \right) (t,\tilde{x}) \right),\\
&=  \lim_{s \to \infty} \exp \left( \frac{\im}{2}  \int \xd t \, \xd \tilde{x} \, \varphi(t,\tilde{x}) \left( \frac{- k \cosh ( k s)}{\im\sinh ( k s)} \varphi \right) (t,\tilde{x}) \right),\\
&= \exp \left( - \frac{1}{2}  \int \xd t \, \xd \tilde{x} \, \varphi(t,\tilde{x}) \left( k  \varphi \right) (t,\tilde{x}) \right).
\end{align}
Again, this can be seen to be in accordance with our previous results.

\subsection{Rindler space}
\label{sec:Rindler}

We consider the massive Klein-Gordon theory in 2-dimensional Rindler space, which corresponds to the right wedge of 2-dimensional Minkowski space. It can be described by the metric $\xd s^2= \rho^2 \xd \eta^2- \xd \rho^2$, where $\rho \in \R^+$ and $\eta \in \R$ are the spatial and temporal Rindler coordinates respectively. The Klein-Gordon equation takes the form,
\be
\left( -\rho \partial_{\rho}\rho \partial_{\rho}  +\partial_{\eta}^2 + m^2 \rho^2\right) \phi(\eta,\rho)=0,
\ee
where $\partial_{\eta}$ and $\partial_{\rho}$ denote partial derivatives w.r.t.\ the coordinates $\eta$ and $\rho$ respectively.
The temporal part is solved by the modes $e^{p \eta}$ with $p \in \bC$. Propagating modes (in time) correspond to imaginary values of $p$. The solution of the spatial part is given by the modified Bessel functions of the first and second kind, $I_{p}(m \rho)$ and $K_{p}(m \rho)$ respectively, which constitute a pair of linear independent solutions of the modified Bessel equation.\footnote{The linear independence of $I_p$ and $K_p$ is manifested by the Wronskian between these function (see expression~10.28.2 of \cite{NIST:DLMF}),
 \be
I_{ p}' (z) K_{ p}(z) - I_{ p} (z) K_{ p}'(z) = \frac{1}{z},
 \ee
 with the prime denoting the derivative of the Bessel function with respect to its argument. Notice that another pair of linear independent solutions is provided by $I_{-p}$ and $I_{p}$, that satisfy (formula 10.28.1 of \cite{NIST:DLMF}),
\be
 I_{ p} (z) I_{- p}'(z) -  I_{ p}' (z) I_{ - p}(z) = -\frac{2 \sin(p \pi)}{z \pi}.
 \ee 
}
These functions present the following asymptotic behavior: According to formulas~10.30.1 and 10.30.2 of \cite{NIST:DLMF}, for $\Re p>0$ and small values of their argument ($z \ll 1$) respectively,
\be
I_p(z) \sim \frac{(z/2)^p}{\Gamma(p+1)}, \qquad K_p(z) \sim \frac12 \Gamma(p) (z/2)^{-p} .
\label{eq:as-mBf-0}
\ee
Their behavior for large values of the argument ($z \gg 1$) is given by formulas~10.34.1 and 10.34.2 of \cite{NIST:DLMF},
 \be
 I_{ p}(z) \sim \frac{e^{z}}{\sqrt{2 \pi z}}, \quad K_{p}(z) \sim \sqrt{\frac{\pi}{2z}} \, e^{-z} .
 \label{eq:as-mBf-inf}
 \ee

\subsubsection{Region bounded by equal Rindler-time hyperplanes}

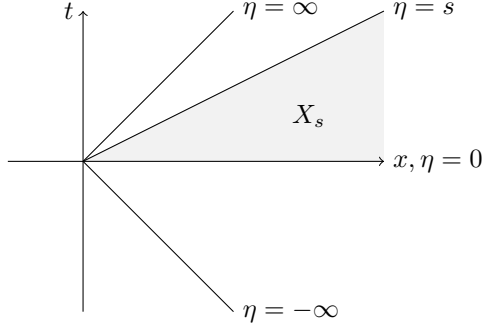
\begin{figure}
\centering
\begin{tikzpicture}[scale=1]
\fill[gray!10] (4,0) -- (0,0) -- (4,2) -- cycle;
\draw[->] (0,-2) -- (0,2) node [left] {$t$};
\draw[->] (-1,0) -- (4,0) node [right] {$x, \eta=0$};
\draw (2,-2) node [right] {$\eta=-\infty$} -- (0,0) -- (2,2)node [right] {$\eta=\infty$};
\draw (0,0) -- (4,2)node [right] {$\eta=s$};
\node at (3,0.6) {$X_s$};
\end{tikzpicture}
\caption{Rindler space corresponds to the right wedge of Minkowski space; it is bounded by the half-lines $\eta=\pm \infty$. In this spacetime we consider the region $X_s=[0,s] \times \R^+$.}
\label{fig:Rin-1}
\end{figure}

In order to obtain the vacuum on a half-line of constant Rindler time, say $\Sigma_0:\{\eta=0\}$, we consider the region $X$ in the future of $\Sigma_0$ and the subregion $X_s$ bounded by $\Sigma_0$ and the half-line $\Sigma_s : \{ \eta=s\}$, namely the region  $X_s=[0,s] \times \R^+$ (see Figure~\ref{fig:Rin-1}). The region is unbounded in space and therefore only the modified Bessel functions of the second kind, $K_p$, are admissible modes for the field due to the asymptotic decay (\ref{eq:as-mBf-inf}); moreover, because of the behavior (\ref{eq:as-mBf-0}) imaginary values of the momentum $p$ must be selected. So, complexified  solutions contain only propagating modes and can be written as,
\be
\phi(\eta,\rho) = \int_0^\infty \xd p \left( \phi^{\text{a}} (p) e^{-\im p \eta} + \phi^{\text{b}}(p) e^{\im p \eta} \right) K_{\im p} (m \rho).
\label{eq:rindlerhpmodes}
\ee
The symplectic potential (\ref{eq:sympot}) and symplectic form (\ref{eq:sympfrompotlin}) on a half-line $\Sigma$ (oriented as the boundary of a region to the future) take the form, respectively,
\begin{align}
[\phi,\xi] &= - \int_0^\infty \frac{\xd \rho}{\rho} \xi(\eta,\rho) (\partial_{\eta} \phi)(\eta,\rho),\\
\omega(\phi,\xi) &= \frac12 \int_0^\infty \frac{\xd \rho}{\rho} \left( \phi(\eta,\rho) (\partial_{\eta} \xi)(\eta,\rho) - \xi(\eta,\rho) (\partial_{\eta} \phi)(\eta,\rho)\right),\\
&= \frac{\im}{2} \int_0^\infty \xd p\, \frac{\pi^2}{\sinh(p \pi)} \left(\phi^{\text{a}}(p) \xi^{\text{b}}(p) - \xi^{\text{a}}(p) \phi^{\text{b}}(p) \right) .
\end{align}
Here, the following identity has been used,
\be
\int_0^{\infty} \frac{\xd \rho}{\rho} K_{\im p}(\rho)  K_{\im p'}(\rho) = \frac{\pi^2}{2 p \sinh(p \pi)} \delta(p-p').
\ee
Consequently, the inner product (\ref{eq:iplc}) is,
\begin{align}
(\phi,\xi) &= 2 \im \int_0^{\infty} \frac{\xd \rho}{\rho}  \left( \overline{\phi(\eta,\rho)} \partial_{\eta} \xi(\eta,\rho) - \xi(\eta, \rho) \partial_{\eta} \overline{\phi(\eta,\rho)} \right), \\
&=2 \int_0^{\infty} \xd p\, \frac{\pi^2}{ \sinh(p \pi)} \left( \overline{\phi^{\text{a}}(p)} \xi^{\text{a}}(p) - \overline{\phi^{\text{b}}(p)} \xi^{\text{b}}(p)\right).
\end{align}
As the solutions (\ref{eq:rindlerhpmodes}) show oscillatory behavior in Rindler time, we expect a vacuum determined by a definite Lagrangian subspace. Indeed, the inner product is positive-definite on the subspace $L_{\Sigma}^+=\{ \phi \in L_{\Sigma}^{\bC} : \phi^{\text{b}}(p)=0 \}$ consisting of the modes $e^{-\im p \eta}K_{\im p}(m\rho)$. On these, the Wick rotated derivative operator $-\im\partial_{\eta}$ has negative eigenvalues $-p$. We are thus in a situation completely analogous to the standard vacuum on an equal-time hyperplane in all of Minkowski space, see Section~\ref{sec:tec}.
The expression of the corresponding Feynman propagator has been obtained in \cite{CoRa:qftrindler},
\begin{multline}
G_F((\eta,\rho),(\eta',\rho')) \\ = \im \int_0^{\infty} \xd p \, \frac{\sinh(p \pi)}{\pi^2} \left( \theta(\eta-\eta') e^{-\im p \eta} e^{\im p \eta'} + \theta(\eta'-\eta) e^{-\im p \eta'} e^{\im p \eta} \right)
K_{\im p}(m\rho) K_{\im p}(m\rho') .
\label{eq:Fp-R-eta}
\end{multline}

The alternative determination of the vacuum via the asymptotic limit of the field propagator in $X_s$ proceeds also in analogy to the Minkowski case, see Section~\ref{sec:fpvac}.
The relevant quantity here is,
\begin{equation}
[\text{pol}^0_{[0,s]}(\varphi,0),\varphi]_0 =  \int_0^{\infty} \frac{\xd \rho }{ \rho}\, \varphi(\rho) \left( \frac{p \cos (p s)}{\sin (p s)} \varphi \right) (\rho),
\end{equation}
where $p$ is to be understood as the operator $p=\sqrt{(\rho \partial_{\rho})^2 -m^2}$. We implement the Wick rotation $s \mapsto -\im s$ and obtain according to formula~(\ref{eq:fplimisp}),
\begin{align}
Z_X(\varphi) &= \lim_{s \to \infty} \exp \left( \frac{\im}{2} [\text{pol}^0_{[0, -\im s]}(\varphi, 0),\varphi]_0\right), \\
&=  \lim_{s \to \infty} \exp \left(-\frac{1}{2} \int_0^{\infty} \frac{\xd \rho }{ \rho} \, \varphi(\rho) \left( \frac{p \cosh ( p s)}{\sinh ( p s)} \varphi \right) (\rho) \right),\\
&= \exp \left( - \frac{1}{2}  \int_0^{\infty} \frac{\xd \rho }{ \rho} \, \varphi(\rho) \left( p  \varphi \right) (\rho) \right).
\end{align}
The last expression coincides, apart from a normalization factor, with formula~(35) of \cite{CoRa:qftrindler} giving the Schrödinger vacuum wave function of the scalar field on the equal Rindler time half-line. (Note that complex conjugation is trivial here.)

\subsubsection{Region bounded by hyperbolas}
\label{sec:Rindlerhyp}

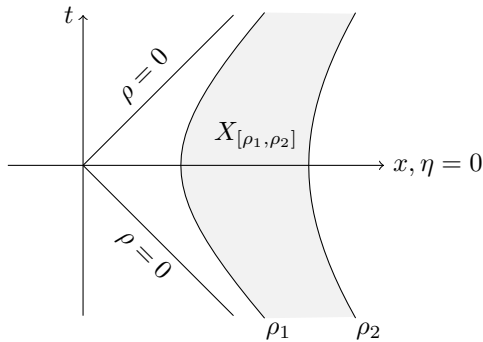
\begin{figure}
\centering
\begin{tikzpicture}[scale=1]
\fill[gray!10] plot[domain=-1.23:1.23] ({1.3*cosh(\x)}, {1.3*sinh(\x)}) -- plot[domain=0.63:-0.63] ({3*cosh(\x)}, {3*sinh(\x)}); 
\draw[->] (0,-2) -- (0,2) node [left] {$t$};
\draw[->] (-1,0) -- (4,0) node [right] {$x, \eta=0$};
\draw plot[domain=-1.23:1.23] ({1.3*cosh(\x)}, {1.3*sinh(\x)});
\draw plot[domain=-0.634:0.634] ({3*cosh(\x)}, {3*sinh(\x)});
\draw (2,-2) --(0,0) node [midway, below, sloped] {$\rho=0$};
\draw (0,0) -- (2,2) node [midway, above, sloped] {$\rho=0$};
\node at (2.3,0.4) {$X_{[\rho_1,\rho_2]}$};
\node at (2.6,-2.2) {$\rho_1$};
\node at (3.8,-2.2) {$\rho_2$};
\end{tikzpicture}
\caption{Spacetime region bounded by two hyperbola of constant Rindler spatial coordinates at $\rho_1$ and $\rho_2$ with $\rho_2>\rho_1$. The boundary of Rindler corresponds to $\rho=0$.}
\label{fig:Rin-2}
\end{figure}

A curve described by a constant value of the Rindler spatial coordinate corresponds to a hyperbola. Consider the region $X_{[\rho_1,\rho_2]}$ bounded by two hyperbolas $\Sigma_\rho$ defined by $\rho=\rho_i$ with $i=1,2$, namely the region $ \R \times [\rho_1,\rho_2]$ (see Figure~\ref{fig:Rin-2}); the evolution parameter $s$ now represents the Rindler spatial coordinate. With the purpose of avoiding exponentially growing solutions in time, $p$ is restricted to take imaginary values. On the other hand, both types of Bessel function enter in the expansion of complexified solutions:\footnote{The absolute value of $p$ that appears in the index of the modified Bessel function of the first kind is due to guarantee the independence of these functions, since the modified Bessel function of the second kind can be expressed as in expression 10.27.4 of \cite{NIST:DLMF},
\be
K_{\nu}(z) = \frac{\pi}{2} \frac{I_{- \nu}(z) - I_{\nu}(z)}{\sin(\nu \pi)},
\label{eq:mBf2k}
\ee
valid for $\nu \neq 0, \pm 1, \pm 2, \dots$ and, for purely imaginary order, $\overline{I_{\im p}(z)} = I_{-\im p}$.
}
\be
\phi(\eta,\rho) = \int_{-\infty}^\infty \xd p \, e^{\im p \eta} \left( \phi^{\text{a}}(p)  I_{\im |p|}(m \rho) + \phi^{\text{b}}(p) K_{\im p} (m \rho) \right) .
\label{eq:phiR}
\ee
Notice that both modified Bessel functions of imaginary order present an oscillatory behavior for small values of their argument, according to (\ref{eq:as-mBf-0}) and an exponential one for large values of their argument, as shown by expression (\ref{eq:as-mBf-inf}). The symplectic potential (\ref{eq:sympot}) on a hyperbola $\Sigma_{\rho}$ (as a boundary of a region of smaller values of $\rho$) is defined in terms of the normal derivative to $\Sigma_{\rho}$, namely the derivative operator $\rho \partial_{\rho}$,
\be
   [\phi,\xi]_{\rho} = - \int_{-\infty}^{\infty} \xd \eta \, \xi(\eta,\rho) (\rho \partial_{\rho} \phi)(\eta,\rho).
   \label{eq:rindhsymp}
\ee
Consequently, the symplectic form (\ref{eq:sympfrompotlin}) on $\Sigma_{\rho}$ is
\begin{align}
\omega_{\rho}(\phi,\xi) &= \frac12 \int_{-\infty}^{\infty} \xd \eta \left( \phi(\eta,\rho) (\rho \partial_{\rho} \xi)(\eta,\rho) - \xi(\eta,\rho) (\rho \partial_{\rho} \phi)(\eta,\rho)  \right),\\
&=  \pi \int_{-\infty}^{\infty} \xd p \left( \xi^{\text{a}}(p)\phi^{\text{b}}(-p) - \xi^{\text{b}}(p)\phi^{\text{a}}(-p)\right).
\label{eq:sfRh}
\end{align}
The inner product (\ref{eq:iplc}) takes the form
\be
(\phi,\xi)_{\rho} = - 4 \pi \im \int_{-\infty}^{\infty} \xd p \left( \overline{\phi^{\text{a}}(p)} \left( \xi^{\text{b}}(p) - \frac{2 \im \sinh(|p| \pi)}{\pi} \xi^{\text{a}}(p) \right) - \overline{\phi^{\text{b}}(p)} \xi^{\text{a}}(p) \right).
\label{eq:ipRh}
\ee
The difficulty to read off from this expression a positive-definite subspace suggests to chose a different parametrization of the field. In particular, the field can be expanded in the basis provided by the modified Bessel function $I_{\im |p|}$ and its complex conjugate,
\be
\phi(\eta,\rho) = \int_{-\infty}^{\infty} \xd p \, e^{\im p \eta} \left( \phi^{\text{a}}(p)  I_{\im |p|} (m \rho) + \phi^{\text{b}}(p)   \overline{I_{\im |p|}(m \rho)} \right).
\ee
The symplectic form (\ref{eq:sfRh}) and the inner product (\ref{eq:ipRh}) in this parametrization result to be equal respectively to,
\begin{align}
\omega_{\rho}(\phi,\xi) &=  2 \im \int_{-\infty}^{\infty} \xd p \, \sinh(|p| \pi) \left( \xi^{\text{a}}(p) \phi^{\text{b}}(-p) - \xi^{\text{b}}(p)\phi^{\text{a}}(-p)\right),\\
(\phi,\xi)_{\rho} &= 8 \int_{-\infty}^{\infty} \xd p \, \sinh(|p| \pi)\left( \overline{\phi^{\text{b}}(p)} \xi^{\text{b}}(p) - \overline{\phi^{\text{a}}(p)} \xi^{\text{a}}(p) \right).
\end{align}
These relations show that the subspace $L^+_{\Sigma_{\rho}}$ defined by the condition $\phi^{\text{a}}(p)=0$ is a positive-definite Lagrangian subspace.

In order to obtain the expression of the symplectic potential with Dirichlet boundary conditions in $X_{[\rho_1,\rho_2]}$ it is convenient to express classical solutions in terms of the boundary field configurations $\varphi_1$ and $\varphi_1$ at $\rho_1$ and $\rho_2$ respectively,
\be
\phi(\eta, \rho) = \left( \frac{\Delta(p,\rho,\rho_2)}{\Delta(p,\rho_1,\rho_2)} \varphi_1\right) (\eta) + \left( 
\frac{\Delta(p,\rho_1,\rho)}{\Delta(p,\rho_1,\rho_2)} \varphi_2\right) (\eta),
\ee
where the quotients are understood as operators acting on a mode decomposition of the field configurations, $\varphi(\eta) = \int \xd p \, \varphi(p) e^{\im p \eta}$ and 
\be
\Delta(p,\rho_1,\rho_2) = I_{\im |p|}(m \rho_1)   \overline{I_{\im |p|}(m \rho_2)} - I_{\im |p|}(m \rho_2)   \overline{I_{\im |p|}(m \rho_1)}.
\ee
The symplectic potentials (\ref{eq:rindhsymp}) on $\rho_1$ and $\rho_2$ with these boundary conditions are then,
\begin{multline}
[\text{pol}^{\rho_1}_{[\rho_1,\rho_2]}(\varphi_1, \varphi_2),\varphi_1]_{\rho_1}\\  = \int_{-\infty}^{\infty} \xd \eta 
\left( \varphi_1(\eta) \left( \frac{\rho_1 \, m \, \sigma (p,\rho_2,\rho_1)}{\Delta (p,\rho_1,\rho_2)} \varphi_1 \right) (\eta) + \varphi_1(\eta) \left( \frac{2 \im \sinh(|p| \pi)}{\pi  \Delta (p,\rho_1,\rho_2)} \varphi_2 \right) (\eta) \right),
\end{multline}
\begin{multline}
[\text{pol}^{\rho_2}_{[\rho_1,\rho_2]}(\varphi_1, \varphi_2),\varphi_2]_{\rho_2}\\ = -   \int_{-\infty}^{\infty} \xd \eta 
\left( \varphi_2(\eta) \left( \frac{\rho_2 \, m \, \sigma (p,\rho_1,\rho_2)}{\Delta (p,\rho_1,\rho_2)} \varphi_2 \right) (\eta) + \varphi_2(\eta) \left( \frac{2 \im \sinh(|p| \pi)}{\pi \Delta (p,\rho_1,\rho_2)} \varphi_1 \right) (\eta) \right),
\label{eq:secondsympotRin}
\end{multline}
where $ \sigma(p,\rho_1,\rho_2) = I_{\im |p|}(m \rho_1) \overline{I'_{\im |p|}(m \rho_2)}- I_{\im |p|}'(m \rho_2) \overline{I_{\im |p|}(m \rho_1)}$.

In order to obtain the vacua on each side of the hyperbola we apply the asymptotic propagator method (Section~\ref{sec:fpvac}) and consider the two corresponding limits: $\rho_1 \to 0$ and $\rho_2 \to \infty$.

\begin{itemize}
	\item Limit $\rho_1 \to 0$.

According to (\ref{eq:as-mBf-0}), both modified Bessel functions present an oscillatory behavior in this limit because of the imaginary index. Notice that,
\begin{align}
 \Delta(p,\rho_1,\rho_2)\big|_{\rho_1 \ll 1} &\sim  \overline{I_{\im |p|}(m \rho_2)} \frac{\left( m \rho_1/2\right)^{\im |p|}}{ \Gamma(1+\im |p|)} - I_{\im |p|}(m \rho_2) \frac{\left( m \rho_1/2\right)^{- \im |p|}}{ \Gamma(1-\im |p|)},\\ 
 \sigma(p,\rho_1,\rho_2)\big|_{\rho_1 \ll 1} &\sim  \overline{I'_{\im |p|}(m \rho_2)} \frac{\left( m \rho_1/2\right)^{\im |p|}}{ \Gamma(1+\im |p|)} - I_{\im |p|}'(m \rho_2) \frac{\left( m \rho_1/2\right)^{- \im |p|}}{ \Gamma(1-\im |p|)}.
\end{align}
A simple Wick rotation $\rho_1\mapsto \im\rho_1$ (or with negative sign) clearly does not lead to a well defined limit when $\rho_1\to 0$. However, inspection suggests to implement a \emph{modified Wick rotation} with respect to $\ln \rho_1$ instead of $\rho_1$, or equivalently to rotate as $\rho_1 \mapsto \rho_1^{-\im}$, and then to consider the limit $\rho_1 \to 0$ of (\ref{eq:secondsympotRin}) with $\varphi_1=0$.

The limit of the field propagator results to be,
\begin{align}
Z_X(\varphi_2) &= \lim_{\rho_1 \to 0} \exp \left( \frac{\im}{2} [\text{pol}^{\rho_2}_{[\rho_1^{-\im},\rho_2]}(0, \varphi_2),\varphi_2]_{\rho_2} \right), \\
&=   \exp \left(  - \frac{\im}{2} \int_{-\infty}^{\infty}  \xd \eta \, \varphi_2(\eta) \left( \frac{\rho_2\, m\, \overline{I'_{\im |p|}(m \rho_2)} }{ \overline{I_{\im |p|}(m \rho_2)} } \varphi_2 \right) (\eta) \right) ,
\end{align}
where $X$ is the region to the left of $\rho_2$. This coincides with the expression of the vacuum wave function obtained in \cite{CoRa:qftrindler}, see formula~(46). At the same time it reproduces the definite Lagrangian subspace $L_{\Sigma_\rho}^+$ considered previously, that consists of the modes $\overline{I_{\im |p|}(m \rho)} e^{\im p\eta}$ only.

\item Limit $\rho_2 \to \infty$.

  According to the asymptotic behavior (\ref{eq:as-mBf-inf}), the modes $K_{\im p}(m\rho)$ show exponential decay for large values of $\rho$. Furthermore, as can be read off from expression (\ref{eq:sfRh}) for the symplectic form, they form a real Lagrangian subspace. Correspondingly, we consider the limit of the field propagator without necessity for a Wick rotation. To this end notice first,
\be 
\lim_{\rho_2 \to \infty} \frac{\sigma(p, \rho_2,\rho_1)}{\Delta(p,\rho_1, \rho_2)} = -
 \frac{ \overline{I'_{\im |p|}(m \rho_1)}  - I_{\im |p|}'(m \rho_1)     }{ \overline{I_{\im |p|}(m \rho_1)}  - I_{\im |p|}(m \rho_1) } = - \frac{K_{\im p}'(m \rho_1)}{K_{\im p}(m \rho_1)}.
\ee
Then, the limit of the field propagator is, compare equation (\ref{eq:fplimsp}),
 \begin{align}
Z_X(\varphi_1) &= \lim_{\rho_2 \to \infty} \exp \left(- \frac{\im}{2} [\text{pol}^{\rho_1}_{[\rho_1, \rho_2]}(\varphi_1,0),\varphi_1]_{\rho_1} \right), \\
&=   \exp \left(  \frac{\im}{2} \int_{-\infty}^{\infty}  \xd \eta \, \varphi_1(\eta) \left( \frac{\rho_1 m K_{\im p}'(m \rho_1)}{K_{\im p}(m \rho_1)} \varphi_1 \right) (\eta) \right),
 \end{align}
 where $X=\R \times [\rho_1,+\infty)$ is the region to the right of $\rho_1$. (Note the opposite orientation of the hypersurface at $\rho_1$.) As expected, we obtain the subspace of the modes $K_{\im p}(m\rho) e^{\im p\eta}$.
\end{itemize}

  The vacua on the two sides of the hyperbola are completely different. On the left hand side we have a traditional vacuum corresponding to a positive-definite Lagrangian subspace. On the right hand side we have a traditional amplitude corresponding to a real Lagrangian subspace. The presentation of the Feynman propagator that exhibits these two vacua is given by expression~(89) of \cite{CoRa:qftrindler},
\begin{multline}
G_F((\eta,\rho),(\eta',\rho'))\\ = \int_{-\infty}^{\infty} \frac{\xd p}{2 \pi} \left( \theta(\rho-\rho') K_{\im p} (m \rho) \overline{I_{\im |p|}(m \rho')} + \theta(\rho'-\rho) K_{\im p} (m \rho') \overline{I_{\im |p|}(m \rho)} \right) e^{\im p (\eta-\eta')}.
\end{multline}
This expression for the Feynman propagator is completely equivalent to (\ref{eq:Fp-R-eta}), see Section~VI.C. of \cite{CoRa:qftrindler}. It also reproduces the expression given by Boulware \cite[(3.22)]{Bou:qftschwrind} when the correct open boundary conditions for Rindler space are chosen.

\subsection{Euclidean space}
\label{sec:2deucl}

We review the free quantum theory of a massive scalar field in 2-dimensional Euclidean space \cite{CoOe:2deucl}. We use cartesian coordinates $(\tau,x)$, where $\tau$ is taken to play the role of time when convenient, although the theory is fully invariant under rotations. The analog of the Klein-Gordon equation in this context is the Helmholtz equation $\left( \partial_{\tau}^2 + \partial_x^2+m^2\right)\phi(\tau,x)=0$ with corresponding action,
\be
S(\phi)=\frac{1}{2}\int \xd \tau \, \xd x\, \left((\partial_{\tau}
\phi)^2+ (\partial_x \phi)^2 - m^2\phi^2\right) .
\ee
As in \cite{CoOe:2deucl} we consider two types of region and corresponding vacua: The (Euclidean) time-interval region, analogous to the Minkowski case, and the disk and annulus regions bounded by one and two circles respectively.

\subsubsection{Hyperplane}

To determine the vacuum state on an equal-time hyperplane (i.e., a straight line), say $\Sigma_{0}$ at $\tau=0$, we consider the region $X$ in the future to this hypersurface and foliate it in terms of constant-$\tau$ hypersurfaces. Then, the time-interval region $X_s=[0,s] \times \R$ has boundary components $\Sigma_0$ and $\Sigma_s$ (at $\tau=s$). Complexified propagating solutions in this region can be expanded as
\be
\phi(\tau,x) = \int_{-m}^{m} \frac{\xd \nu}{2 \pi} \left( \phi^{\text{a}}(\nu) e^{-\im \omega_{\nu} \tau} + \phi^{\text{b}}(\nu) e^{\im \omega_{\nu} \tau}\right) e^{\im \nu x},
\ee
where $\omega_{\nu} = \sqrt{m^2-\nu^2}$. Evanescent solutions, which are the only ones in the massless case, read
\be
\phi(\tau,x) = \int_{|\nu| > m} \frac{\xd \nu}{2 \pi} \left( \phi^\text{x}(\nu) e^{- \omega_{\nu} \tau} + \phi^\text{i}(\nu) e^{ \omega_{\nu} \tau}\right) e^{\im \nu x},
\ee
where $\omega_{\nu} = \sqrt{\nu^2-m^2}$. 
The symplectic potential (\ref{eq:sympot}) on $\Sigma_0$ (oriented as a boundary component of $X_s$) is,
\be
[\phi,\xi] = -\int \xd x \, \xi(\tau,x) (\partial_{\tau} \phi)(\tau,x) .
\ee
The symplectic form (\ref{eq:sympfrompotlin}) is then,
\be
\omega(\phi, \xi) = -\frac12 \int \xd x \left( \xi(\tau,x) (\partial_{\tau} \phi)(\tau,x) - \phi(\tau,x) (\partial_{\tau} \xi)(\tau,x) \right).
\label{eq:2desymf}
\ee
It is easy to verify that $\omega$ vanishes between propagating and evanescent solutions, as for the field in Minkowski space, and the solution space decomposes as $L_{\Sigma_0}^{\bC} = L_{\Sigma_0}^{\text{p},\bC} \oplus L_{\Sigma_0}^{\text{e},\bC}$. On $L_{\Sigma_0}^{\text{p},\bC}$ the symplectic form (\ref{eq:2desymf}) is
\be
\omega^{\text{p}}(\phi, \xi) =  \im \int_{-m}^{m} \frac{\xd \nu}{2\pi}\, \omega_{\nu}\left(\phi^{\text{a}}(\nu) \xi^{\text{b}}(-\nu) - \phi^{\text{b}}(\nu) \xi^{\text{a}}(-\nu) \right),
\ee 
and the inner product (\ref{eq:iplc}),
\be
(\phi, \xi)^{\text{p}} = 4 \int_{-m}^m \frac{\xd \nu}{2\pi} \, \omega_{\nu}\left(\overline{\phi^{\text{a}}(\nu)} \xi^{\text{a}}(\nu) - \overline{\phi^{\text{b}}(\nu)} \xi^{\text{b}}(\nu) \right).
\ee
This shows that the solutions determined by $\phi^{\text{b}}(\nu)=0$ form a positive-definite Lagrangian subspace $L_{\Sigma_0}^{\text{p},+}\subseteq L_{\Sigma_0}^{\text{p},\bC}$. At the same time for this subspace the Wick rotated derivative operator $-\im\partial_{\tau}$ has negative eigenvalues.

On evanescent solutions the symplectic form (\ref{eq:2desymf}) is,
\be
\omega^{\text{e}}(\phi, \xi) =  \int_{|\nu| > m} \frac{\xd \nu}{2\pi}\, \omega_{\nu}\left(\phi^{\text{x}}(\nu) \xi^{\text{i}}(-\nu) - \phi^{\text{i}}(\nu) \xi^{\text{x}}(-\nu) \right) .
\ee
A vanishing boundary condition in the asymptotic future (i.e.\ in the limit $\tau \to \infty$) is satisfied by solutions such that $\phi^{\text{i}}(\nu)=0$. The subspace $L_{\Sigma_0}^{\text{e},+}\subseteq L_{\Sigma_0}^{\text{e},\bC}$ of these solutions is a Lagrangian subspace and corresponds to the negative eigenvalues of the derivative operator $\partial_{\tau}$.

The Feynman propagator, given by formula (107) of \cite{CoOe:2deucl}, receives contribution from both the propagating and evanescent solutions,
\begin{align}
G_F((\tau,x), (\tau',x')) &= \im \int_{-m}^{m} \frac{\xd \nu}{2 \pi}  \frac{1}{2 \omega_{\nu}}\left( \theta(\tau-\tau') e^{-\im \omega_{\nu} \tau}e^{\im \omega_{\nu} \tau'} + \theta(\tau'-\tau) e^{-\im \omega_{\nu} \tau'}e^{\im \omega_{\nu} \tau}\right) e^{\im \nu (x-x')} \nonumber\\
& -\int_{|\nu| > m} \frac{\xd \nu}{2 \pi}  \frac{1}{2 \omega_{\nu}}\left( \theta(\tau-\tau') e^{- \omega_{\nu} \tau}e^{ \omega_{\nu} \tau'} + \theta(\tau'-\tau) e^{- \omega_{\nu} \tau'}e^{ \omega_{\nu} \tau}\right) e^{\im \nu (x-x')}.
\label{eq:2defeyn1}
\end{align}
This expression shows that the boundary conditions for the propagating and evanescent solutions are given precisely by the subspaces $L_{\Sigma_0}^{\text{p},+}$ and $L_{\Sigma_0}^{\text{e},+}$.

In order to apply the asymptotic method of Section~\ref{sec:fpvac} we note for the propagating solutions the complete analogy to the Minkowski case. Concretely, we perform a Wick rotation $s\mapsto -\im s$,
\begin{align}
  [\text{pol}^0_{[0,s]}(\varphi,0),\varphi]_0 &=  \int  \xd x\, \varphi(x) \left( \frac{ \omega_{\nu}  \cos ( \omega_{\nu}  s)}{\sin ( \omega_{\nu}  s)} \varphi \right) (x), \\
    [\widetilde{\pol}_X(\varphi),\varphi]_0 & =\lim_{s\to\infty} [\pol_{[0,-\im s]}^0(\varphi,0),\varphi]_{0}=\im \int \xd x\, \varphi(x) \left(\omega_{\nu}\, \varphi\right)(x) .
\end{align}
Compare equations~(\ref{eq:mslhprop}) and (\ref{eq:mslhlim}). Inserting the latter expression into the field propagator (\ref{eq:univprop}) and going to momentum space recovers precisely the propagating vacuum wave function~(31) of \cite{CoOe:2deucl}.
For the evanescent solutions on the other hand there is no Wick rotation involved,
\begin{align}
  [\text{pol}^0_{[0,s]}(\varphi,0),\varphi]_0 &=  \int  \xd x\, \varphi(x) \left( \frac{ \omega_{\nu}  \cosh ( \omega_{\nu}  s)}{\sinh ( \omega_{\nu}  s)} \varphi \right) (x), \\
    [\widetilde{\pol}_X(\varphi),\varphi]_0 & =\lim_{s\to\infty} [\pol_{[0,s]}^0(\varphi,0),\varphi]_{0}=\int \xd x\, \varphi(x) \left(\omega_{\nu}\, \varphi\right)(x) .
\end{align}
This is (up to a sign) completely analogous to the evanescent solutions on the timelike hyperplane in Minkowski space, compare Section~\ref{sec:tlhp}.

\subsubsection{Circle}

\begin{figure}
\centering
\begin{tikzpicture}[scale=1]
\draw[fill=gray!10] (0,0) circle (2cm);
\draw[fill=white] (0,0) circle (1cm);
\draw[->] (0,-2.5) -- (0,2.5) node [above] {$\tau$};
\draw[->] (-3,0) -- (3,0) node [right] {$x$};
\draw[->] (1.6,1.6) -- (2,2);
\draw[->] (-1.6,1.6) -- (-2,2);
\draw[->] (1.6,-1.6) -- (2,-2);
\draw[->] (-1.6,-1.6) -- (-2,-2);
\node at (-1.2,0.8) {$X_{[R,\hat{R}]}$};
\node at (2.2,0.3) {$\hat{R}$};
\node at (1.2,0.3) {$R$};
\end{tikzpicture}
\caption{Annulus region bounded by two circles of radii $R$ and $\hat{R}$ in Euclidean space.}
\label{fig:Es}
\end{figure}
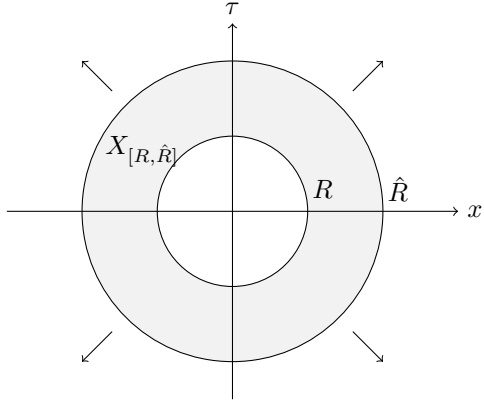

In order to obtain the vacuum state on the circle it is convenient to use polar coordinates $(r,\theta)$, with $\tau= r \sin \theta$ and $x = r \cos \theta$. The Helmholtz equation, $\left( \partial_r^2 +\frac{1}{r} \partial_r + \frac{1}{r^2} \partial_{\theta}^2 +m^2 \right)\phi=0$, is solved in terms of Bessel functions,
\be
\phi(r, \theta) = \sum_{n=-\infty}^{\infty} \left( \phi^{\text{a}}_n H_n(mr) + \phi^{\text{b}}_n \overline{H_n(mr)}\right) e^{\im n \theta},
\ee
where $H_n(z)$ is the Hankel function, related the Bessel functions of the first and second kind, $J_{n}(z)$ and $Y_{n}(z)$ respectively, as $H_n(z)=J_n(z) + \im Y_n(z)$.
The symplectic potential (\ref{eq:sympot}) on a circle of radius $R$ (as a boundary of the region exterior to the disk) has the form,
\be
[\phi,\xi]_R = -\int_0^{2\pi} \xd \theta \, \xi(r, \theta) (r \partial_r \phi)(r,\theta) .
\ee
The symplectic form (\ref{eq:sympfrompotlin}) and inner product (\ref{eq:iplc}) are,
\begin{align}
\omega_R(\phi, \xi) &=-\frac12 \int_0^{2 \pi} \xd \theta \left( \xi(r,\theta)(r \partial_r \phi) (r,\theta) - \phi(r,\theta)(r \partial_r \xi) (r,\theta) \right),\\
&= 4 \im \sum_{n=-\infty}^{\infty} (-1)^n \left( \xi_n^{\text{a}} \phi_{-n}^{\text{b}} - \xi_n^{\text{b}} \phi_{-n}^{\text{a}} \right),\\
(\phi,\xi)_R & = 16 \sum_{n=-\infty}^{\infty} \left( \xi_n^{\text{b}} \overline{\phi_{n}^{\text{b}}} - \xi_n^{\text{a}} \overline{\phi_{n}^{\text{a}}} \right).
\end{align}
It is clear from these expressions that the subspace $L_{R}^+$ defined by $\phi^{\text{a}}_n=0$ is a positive-definite Lagrangian subspace of the space of complexified solutions $L_R^{\bC}$ in a neighborhood of the circle. On the other hand, the solutions well defined in the interior of the disk $D_R$ of radius $R$ are given in terms of the Bessel functions of the first kind $J_n(z)=\frac12(H_n(z)+\overline{H_n(z)})$. These form a real Lagrangian subspace $L_{D_R}\subseteq L_R$ which determines the amplitude, i.e., generalized vacuum, in $D_R$. The corresponding field propagator (\ref{eq:samplbdy}) can thus be written as,
\begin{equation}
  Z_{D_R}(\varphi)=\exp\left(-\frac{\im}{2}[\pol_{D_R}(\varphi,\varphi)]_{R}\right)=\exp \left(\frac{\im}{2} \int_0^{2\pi} \xd \theta \, \varphi(\theta) \left( \frac{R\, m J_n'(m R)}{J_n(m R)} \varphi \right) (\theta) \right) .
\end{equation}

In order to obtain the vacuum on the exterior of the circle we apply the asymptotic propagator method (Section~\ref{sec:fpvac}). To this end consider the annulus region bounded by two circles with radii $\hat{R}>R$, i.e., $X_{[R,\hat{R}]}=[R,\hat{R}] \times 2 \pi$, see Figure~\ref{fig:Es}, where both Bessel functions are well defined. Denoting with $\varphi$ and $\hat{\varphi}$ the field configurations on the circles $r=R$ and $r=\hat{R}$ respectively, the solution in $X_{[R,\hat{R}]}$ can be written as,
\be
\phi(r,\theta) = \left( \frac{\Delta(n,mr,m\hat{R})}{\Delta(n,mR,m\hat{R})} \varphi \right)(\theta)  + \left(\frac{\Delta(n,mR,mr)}{\Delta(n,mR,m\hat{R})} \hat{\varphi}\right)(\theta),
\ee
where $\Delta(n,z,\hat{z}) = H_n(z) \overline{H_n(\hat{z})} - \overline{H_n(z)} H_n(\hat{z})$.
The relevant field propagator is written in terms of the symplectic potential as,
\begin{equation}
[\text{pol}^{R}_{[R,\hat{R}]}(\varphi, \hat{\varphi}),\varphi]_{R} = \int_0^{2\pi} \xd \theta 
\left( \varphi(\theta) \left( \frac{R \, m \, \sigma (n,m\hat{R},mR)}{\Delta(n,mR,m\hat{R})} \varphi \right) (\theta) + \varphi(\theta) \left( \frac{4\im}{\pi \Delta(n,mR,m\hat{R})} \hat{\varphi} \right) (\theta) \right), 
\end{equation}
where $ \sigma(n,z,\hat{z}) = H_{n}( z) \overline{H_{n}'(\hat{z})}- H_n'(\hat{z}) \overline{H_n(z)}$, the prime denoting derivative w.r.t.\ the argument. 

Now the external radius $\hat{R}$ plays the role of the parameter $s$. The asymptotic behavior of the Hankel functions, for large values of their argument $z$, can be derived from expression~10.7.8 of \cite{NIST:DLMF},
	\be
	H_n(z) \sim \sqrt{\frac{2}{\pi z}} e^{\im (z-n \pi/2-\pi/4)}, \quad \overline{H_n(z)} \sim \sqrt{\frac{2}{\pi z}} e^{-\im (z-n \pi/2-\pi/4)}.
 	\label{eq:asHH}
 	\ee
The oscillatory behavior of these functions suggests to implement a Wick rotation in order to perform the limit. So we consider the rotation $\hat{R}  \mapsto -\im \hat{R}$ and then the limit $\hat{R} \to \infty$, as in expression~(\ref{eq:fplimisp}). This yields,
 \begin{align}
Z_X(\varphi) &= \lim_{\hat{R} \to \infty} \exp \left(\frac{\im}{2} [\text{pol}^R_{[R, \im \hat{R}]}(\varphi, 0),\varphi]_R \right),\\
&=  \lim_{\hat{R} \to \infty} \exp \left(  \frac{\im}{2} \int_0^{2\pi} \xd \theta \, \varphi(\theta)
 \left( R\, m  \frac{H_{n}(- m \im \hat{R}) \, \overline{H_{n}'(m R)}  - H_{n}'(m R) \, \overline{H_{n}(- m \im \hat{R})}  }{H_{n}(m R) \, \overline{H_{n}(- m \im \hat{R})} - \overline{H_{n}(m R)} \, H_{n}(- m \im \hat{R})}
  \varphi \right) (\theta) \right),\\
  &= \exp \left(- \frac{\im}{2} \int_0^{2\pi} \xd \theta \,  \varphi(\theta) \left( R\, m \frac{\overline{H_{n}'(m R)}}{\overline{H_{n}(mR)}}\right)
 \varphi(\theta) \right).
\end{align}
This expression coincides precisely with the vacuum wave function in the Schrödinger representation obtained in \cite{CoOe:2deucl}, see expression (36) there. What is more, we can read off that the selected Lagrangian subspace is the positive-definite subspace $L_R^+$ of the $\overline{H_n(m r)} e^{\im n \theta}$ modes remarked previously.

The different vacua obtained on the two sides of the circle define the expression of the Feynman propagator, computed in \cite{CoOe:2deucl}, see expression (130), (with a minor reformulation),
\be
G_F((r,\theta),(r',\theta')) = \frac{\im}{4} \sum_{n=-\infty}^{\infty} e^{\im n (\theta - \theta')} \left( \theta(r-r') J_{n}(mr') \overline{H_n(mr)} + \theta(r'-r) J_{n}(mr) \overline{H_n(mr')} \right).
\ee
As shown there this is equivalent to expression~(\ref{eq:2defeyn1}).

\subsection{de~Sitter space}
\label{sec:deSitter}

\subsubsection{Standard vacuum}

We use the coordinate system in which de~Sitter metric is given by
\be
\xd s^2 = \frac{R^2}{t^2} \left( \xd t^2 - (\xd x^1)^2 - (\xd x^2)^2- (\xd x^3)^2 \right)= \frac{R^2}{t^2} \left( \xd t^2 - (\xd \underline{x})^2  \right),
\label{eq:dSmetric}
\ee
with $t \in (0, \infty)$ and $\underline{x} \in \mathbb{R}^3$. The coordinates chosen describe only half of spacetime; the other half is recovered by extending $t$ to negative values ($t=0$ represents a coordinate singularity).  
Complexified solutions of the massive Klein-Gordon equation describing a minimally coupled scalar field are
\be
\phi(t, \underline{x}) =\int \frac{\xd ^3 \underline{k}}{(2 \pi)^{3/2}} \left( \phi^{\text{a}}(\underline{k}) t^{3/2} H_{\nu}(kt) \, e^{\im \underline{k} \cdot \underline{x}} +  \phi^{\text{b}}(\underline{k}) t^{3/2} \overline{ H_{\nu}(kt)} \, e^{-\im \underline{k} \cdot \underline{x}} \right),
\label{eq:sol-dS}
\ee
where $k=|\underline{k}|$,\footnote{In order for the solution to be well defined (i.e.\ not divergent) in the whole spacetime the components of the 3-vector $\underline{k}$ have to be real. Consequently the modulus $k$ is necessarily positive, $k>0$.} and $H_{\nu}(z)$ is the Hankel function (introduced in the previous section) with index $\nu = \sqrt{\frac{9}{4} - (m R)^2}$. We will assume that $\nu$ is real and consequently $\nu < 3/2$. Then, the modes $t^{3/2} H_{\nu}(kt)$ as well as their complex conjugates vanish in the limit $t \to 0$, due to the asymptotic behavior of the Hankel functions, see expression~10.7.7 of \cite{NIST:DLMF},
\be
	H_{\nu}(z) \sim - \overline{H_{\nu}(z)} \sim - \frac{\im }{\pi} \Gamma({\nu}) \left( \frac{z}{2}\right)^{-{\nu}}
	\label{eq:asymHankel}
\ee
valid for positive ${\nu}$. On the other hand, for large values of $t$ these modes oscillate according to (\ref{eq:asHH}).

We consider the spacetime region $X_t$ to the future of the hypersurface $\Sigma_t$ of constant de~Sitter time. The symplectic potential (\ref{eq:sympot}) on $\Sigma_t$ oriented as a boundary of $X_t$ is,
\be
[\phi,\xi]_t = - \int \xd^3 \underline{x} \frac{R^2}{t^2} \xi(t, \underline{x}) (\partial_t \phi) (t, \underline{x}) .
\ee
The symplectic form (\ref{eq:sympfrompotlin}) and inner product (\ref{eq:iplc}) are,
\begin{align}
\omega_t(\phi, \xi) &= \frac12 \int \xd^3 \underline{x} \frac{R^2}{t^2} \left( \phi(t, \underline{x}) (\partial_t \xi) (t, \underline{x}) - \xi(t, \underline{x}) (\partial_t \phi) (t, \underline{x}) \right), \label{eq:dssympf} \\
&= \frac{2 \im}{\pi} \int \xd^3 \underline{k} \, R^2 \left( \phi^{\text{b}}(\underline{k}) \xi^{\text{a}}(\underline{k}) - \phi^{\text{a}}(\underline{k}) \xi^{\text{b}}(\underline{k}) \right), \\
(\phi,\xi)_t & = \frac{8}{\pi} \int \xd^3 \underline{k} \, R^2 \left( \xi^{\text{b}}(\underline{k}) \overline{\phi^{\text{b}}(\underline{k})} - \xi^{\text{a}}(\underline{k}) \overline{\phi^{\text{a}}(\underline{k})} \right) .
\end{align}
We note that  the subspace $L_{\Sigma_t}^+$ defined by $\phi^{\text{a}}(\underline{k})=0$ is a positive-definite Lagrangian subspace of the space of solution associated with $\Sigma_t$.

In order to apply the asymptotic method (Section~\ref{sec:fpvac}) for vacuum selection, we foliate spacetime with equal-time hypersurfaces and consider the region between two of the surfaces, namely $X_{[s,s']}=[s,s'] \times \R^3$. It is convenient to express solutions (\ref{eq:sol-dS}) in terms of the boundary configurations $\varphi$ and $\varphi'$ taken by the field at $t=s$ and $t=s'$ respectively,
\be
\phi(t, \underline{x}) = \left( \frac{\Delta(t,s')}{\Delta(s,s')} \varphi \right)(\underline{x}) + \left( \frac{\Delta(s,t)}{\Delta(s,s')} \varphi' \right)(\underline{x}),
\ee
where $\Delta(t,s) = t^{3/2} H_{\nu}(kt) s^{3/2} \overline{H_{\nu}(ks)} - t^{3/2} \overline{ H_{\nu}(kt)} s^{3/2} H_{\nu}(ks)$.

In the region $X_{[s,s']}$ the symplectic potential that appears in the expression of the field propagator and determines the vacuum in an appropriate limit is,
\begin{multline}
[\text{pol}^s_{[s,s']}(\varphi, \varphi'),\varphi]_s 
= - \int \xd^3 \underline{x} \left( \varphi(\underline{x})
 \left( \frac{R^2}{s^2} \left( \frac{3}{2s} + k \frac{H_{\nu}'(k s) \, \overline{H_{\nu}(k s')} - \overline{H_{\nu}'(k s)} \, H_{\nu}(k s')}{ H_{\nu}(k s) \, \overline{H_{\nu}(k s')} - \overline{H_{\nu}(k s)} \, H_{\nu}(k s') }\right)
  \varphi \right) (\underline{x}) \right.\\
  \left.
  + \varphi(\underline{x})   \left( \frac{-4 \im R^2 }{\pi (ss')^{3/2} \left(    H_{\nu}(k s) \, \overline{H_{\nu}(k s')} - \overline{H_{\nu}(k s)} \, H_{\nu}(k s') \right)}  \varphi' \right) (\underline{x}) \right), 
 \end{multline}
where a prime over the Bessel function indicates the derivative with respect to the argument.

Because of the oscillatory behavior (\ref{eq:asHH}) of the Hankel functions for large argument we expect to obtain the vacuum via the limit (\ref{eq:fplimisp}), implementing the Wick rotation $s' \to -\im s'$,
 \begin{align}
   & Z_{X_s}(\varphi) \nonumber \\
   &= \lim_{s' \to \infty} \exp \left(\frac{\im}{2} [\text{pol}^s_{[s, -\im s']}(\varphi, 0),\varphi]_s \right),\\
&=  \lim_{s' \to \infty} \exp \left(-  \frac{\im}{2} \int \xd^3 \underline{x} \, \varphi(\underline{x})
 \left( \frac{R^2}{s^2} \left( \frac{3}{2s} + k \frac{H_{\nu}'(k s) \, \overline{H_{\nu}( -\im k s')} - \overline{H_{\nu}'(k s)} \, H_{\nu}( -\im ks')}{ H_{\nu}(k s) \, \overline{H_{\nu}( -\im k s')} - \overline{H_{\nu}(k s)} \, H_{\nu}( -\im k s') }\right)
  \varphi \right) (\underline{x}) \right),\\
  &= \exp \left(-  \frac{\im}{2}  \int \xd^3 \underline{x} \,  \varphi(\underline{x}) \frac{R^2}{s^2} \left( \frac{3}{2s} + k \frac{ \overline{H_{\nu}'(k s)} }{ \overline{H_{\nu}(k s)} }\right)
 \varphi(\underline{x}) \right).
\end{align}
This limit recovers the vacuum wave function in the Schrödinger representation (compare with expression~(56) of \cite{Col:desitterpaper} and apply complex conjugation). At the same time it corresponds precisely to selecting the positive-definite Lagrangian subspace $L_{\Sigma_s}^+$ of modes of the form $t^{3/2}\overline{H_{\nu}(k t)}e^{-\im \underline{k}\cdot \underline{x}}$. These are precisely the modes that define the Bunch-Davies vacuum \cite{Muk:introqeffg}.

As for the previous examples, the Feynman propagator incorporates the vacuum as,
\begin{multline}
G_F((t,\underline{x}), (t',\underline{x}')) \\ = \frac{\im \pi}{4 R^2} \int \frac{\xd^3 \underline{k}}{(2 \pi)^3 } (tt')^{3/2} \left( \theta(t-t') \overline{H_{\nu}(kt)} H_{\nu}(kt') + \theta(t'-t) \overline{H_{\nu}(kt')} H_{\nu}(kt) \right) e^{\im \underline{k} (\underline{x} - \underline{x}')}.
\end{multline}
It has been shown in \cite{Col:desitterpaper} that this expression is equivalent to the one obtained in \cite{Schomblond:1976xc}. We can also read off that, as in Minkowski space, the past vacuum is given by complex conjugate modes of the future vacuum. Choosing a coordinate system covering negative time values the derivation would be analogous with our methods.

\subsubsection{$\alpha$-vacua}

The vacuum state obtained in the preceding section corresponds to the Lagrangian subspace of the space of solutions of the equations of motion determined by the modes $u_{\underline{k}}(t,\underline{x}) = t^{3/2} \overline{H_{\nu}( kt)} e^{-\im \underline{k} \cdot \underline{x}}$. These are a special case of the more general solutions, parametrized by a real number $\alpha$, $u_{\underline{k}}^{\alpha} (t, \underline{x})= t^{3/2} \left( e^{\alpha} J_{\nu}(kt) - \im e^{-\alpha}Y_{\nu}(kt) \right) e^{-\im \underline{k} \cdot \underline{x}}$. In particular $u_{\underline{k}}(t,\underline{x})= u_{\underline{k}}^0(t,\underline{x})$. The field expanded in this new basis reads,
\be
\phi(t, \underline{x}) = \int \frac{\xd^3 \underline{k}}{(2 \pi)^{3/2}} \left( \widetilde{\phi^{\text{a}}} (\underline{k}) u_{\underline{k}}^{\alpha} (t, \underline{x}) +  \widetilde{\phi^{\text{b}}} (\underline{k}) \overline{u_{\underline{k}}^{\alpha} (t, \underline{x})} \right).
\label{eq:sol-dS-alpha}
\ee
The symplectic form (\ref{eq:dssympf}) and inner product (\ref{eq:iplc}) are then,
\begin{align}
\omega_t(\phi,\xi) & = \frac{2 \im}{\pi} \int \xd^3 \underline{k} \, R^2 \left( \widetilde{\phi^{\text{a}}}(\underline{k}) \widetilde{\xi^{\text{b}}}(\underline{k}) - \widetilde{\phi^{\text{b}}}(\underline{k}) \widetilde{\xi^{\text{a}}}(\underline{k})\right), \\
(\phi, \xi)_t & = \frac{8}{\pi} \int \xd^3 \underline{k} \, R^2 \left( \overline{\widetilde{\phi^{\text{a}}}(\underline{k})} \widetilde{\xi^{\text{a}}}(\underline{k}) - \overline{\widetilde{\phi^{\text{b}}}(\underline{k})} \widetilde{\xi^{\text{b}}}(\underline{k}) \right) .
\end{align}
In particular, the modes $u_{\underline{k}}^{\alpha}(t, \underline{x})$ form a positive-definite Lagrangian subspace $L_{\Sigma_t}^+$ with $\widetilde{\phi^{\text{b}}}(\underline{k})=0$.

The relation between the expansions (\ref{eq:sol-dS}) and (\ref{eq:sol-dS-alpha}) amounts to a Bogoliubov transformation of the coefficients:
\begin{align}
\phi^{\text{a}}(\underline{k}) &= \sinh \alpha \, \widetilde{\phi^{\text{a}}}(-\underline{k}) + \cosh \alpha \, \widetilde{\phi^{\text{b}}}(\underline{k}),\\
\phi^{\text{b}}(\underline{k}) &= \sinh \alpha \, \widetilde{\phi^{\text{b}}}(-\underline{k}) + \cosh \alpha \, \widetilde{\phi^{\text{a}}}(\underline{k}). 
\end{align} 
The modes $u_{\underline{k}}^{\alpha}$ define a one-parameter family of vacua known as $\alpha$-vacua \cite{Chernikov:1968zm,Mottola:1984ar,Allen:1985ux}. Notice that requiring that these modes form a Lagrangian subspace implies selecting the same value of the parameter $\alpha$. Indeed,
\be
\omega_t \left( u_{\underline{k}}^{\alpha}, u_{\underline{k}'}^{\alpha'} \right) = \frac{(2 \pi)^3}{\pi} R^2 \im \sinh \left(\alpha'-\alpha \right) \delta(\underline{k} + \underline{k}'),
\ee
which reduces to zero only for $\alpha = \alpha'$.

\section{Discussion and outlook}
\label{sec:outlook}

In the present section we provide some context for our results as well as an outlook on future development. We start with some perspective on the notion of Wick rotation as invoked in Section~\ref{sec:vchoice}.

\paragraph{Wick rotation}
The term \emph{Wick rotation} refers originally to the rotation of a contour of integration in the complex plane in the integral representation of wave functions or propagators \cite{Wic:bethesalpeter}. In the guise of a rotation of the time coordinate this leads to the notion of \emph{Euclidean propagator} \cite{Sch:euclidrel,Nak:qfteuclid}. We recall this for the case of Klein-Gordon theory in Minkowski space. Thus, replace in the Feynman propagator (\ref{eq:fptint}) $t$ by $-\im\tau$ and $t'$ by $-\im\tau'$,
\begin{multline}
  G_F^{(E)}((\tau,x),(\tau',x'))=\im\int\frac{\xd^3 k}{(2\pi)^3 2E}
  \left(\theta(\tau-\tau') e^{-E\tau + \im k x} e^{E \tau'- \im k x'}\right. \\
  \left. +\theta(\tau'-\tau) e^{E \tau - \im k x} e^{- E \tau' +\im k x'}\right) .
  \label{eq:euclprop}
\end{multline}
Note that we think of the multiplication by $-\im$ as the limit of a multiplication by $e^{-\im \theta}$ where the angle $\theta$ is moved continuously from $0$ to $\pi/2$. That is, we are really doing an analytic continuation. (This also determines the behavior of the $\theta$-functions under this rotation.) The Euclidean propagator is a solution to a Wick rotated version of the equations of motion (in this case the Klein-Gordon equation). These in turn are natural equations of motion associated to a Wick rotated metric, which is just the metric of Euclidean space. In Section~\ref{sec:2deucl} we have considered precisely such a theory (in two dimensions), except for the fact that that theory corresponds more precisely to a Wick rotation of the spatial coordinates. Nevertheless, one may appreciate the coincidence in the massless case of the corresponding Feynman propagator (\ref{eq:2defeyn1}) with expression (\ref{eq:euclprop}), up to a factor of $\im$.
Wick rotation leads to a whole \emph{Euclidean formulation} of quantum field theory that can be brought into correspondence with Minkowski quantum field theory in a precise way \cite{OsSc:axeucl}. An advantage of the Euclidean formulation is that its ingredients, including the Euclidean path integral, are generally better behaved mathematically. Most relevant in this respect for our present considerations is the asymptotic behavior of the Euclidean propagator. Indeed, fixing one point, say $(\tau',x')$, we can read off an exponential decay to the (Euclidean) future $e^{-E\tau}$ (where $\tau>\tau'$) and to the past $e^{E\tau}$ (where $\tau<\tau'$).\footnote{Note that the Euclidean propagator shows this same decay behavior in all Euclidean spacetime directions as it is in fact invariant under Euclidean rotations, but this is not manifest in the representation (\ref{eq:euclprop}).} It is precisely this behavior of asymptotic decay that we put at the center of vacuum selection in Section~\ref{sec:vchoice}. However, instead of Wick rotating the actual object of the theory (propagator, solution, etc.)\ we Wick rotate either the normal derivative operator that would detect this decay (Section~\ref{sec:infvac}) or the variable that parametrizes the approach to the asymptotic boundary (Section~\ref{sec:fpvac}). In the first case and when restricted to a standard context of time evolution this recovers and provides a new perspective on the usual recipe for quantization in curved spacetime (Section~\ref{sec:revquant}). Wick rotation in the Euclidean formulation is strictly limited to Minkowski space due to its global nature. In contrast, our notion of Wick rotation is local (to hypersurfaces or regions). This gives our methods (Section~\ref{sec:vchoice}) a much wider applicability (e.g., Sections~\ref{sec:hypcyl} and \ref{sec:examples}) while retaining some of the ideas and motivations of the Euclidean formulation.

\paragraph{Types of Lagrangian subspaces}
We have proposed a unification of the traditional notions of amplitude and vacuum. The mathematical structure at the center of this unification is that of a Lagrangian subspace of the complexified space of germs of solution on the relevant spacetime hypersurface. While in the traditional case of a vacuum on a spacelike hypersurface this is a \emph{definite Lagrangian subspace}, for an amplitude this is a complexified \emph{real Lagrangian subspace} (compare Section~\ref{sec:modecs} or Appendix~\ref{sec:mathlag}). In the cases that we have considered that fall outside the traditional framework (Sections~\ref{sec:hypcyl} and \ref{sec:examples}), notably when timelike hypersurfaces are concerned, we have encountered exactly three types of Lagrangian subspaces: definite, real or mixed. While finding the second (real) type when dealing with a non-compact region that physically should induce a vacuum is already intriguing, the occurrence of the mixed type is even more interesting. By mixed we refer to the following situation. Consider a symplectic vector space $L$, $L^\bC$ its complexification, and $H\subseteq L^\bC$ a Lagrangian subspace. We say $H$ is (properly) \emph{mixed} if there is a (non-trivial) decomposition of $L$ into orthogonal symplectic subspaces $L=L_1\oplus L_2$ and $H=H_1\oplus H_2$ such that $H_1\subseteq L_1^\bC$ is a complexified real Lagrangian subspace and $H_2\subseteq L_2^\bC$ a definite Lagrangian subspace. We have seen the proper mixed type in various examples (Sections~\ref{sec:hcmassive}, \ref{sec:tlhp}, \ref{sec:2deucl}), generally in accordance with a distinction between \emph{evanescent} ($L_1$) and \emph{propagating} waves ($L_2$).
The generalized notion of vacuum we propose (Section~\ref{sec:vaclag}) does not require a limitation to these observed types of Lagrangian subspaces. On the other hand, the methods we propose for vacuum selection (Section~\ref{sec:vchoice}) do. It is thus a relevant question of whether the occurrence of the mixed type (and its degenerate cases real and definite) is a generic phenomenon. Analyzing the Lagrangian subspaces occurring in a variety of field theories and in a range of geometric settings should shed light on this question.

\paragraph{Geometric quantization}
Lagrangian subspaces have played an important role in quantization for quite some time not only implicitly (Section~\ref{sec:revquant}), but also explicitly. This is particularly the case in \emph{geometric quantization} \cite{Woo:geomquant}. There, an important step in the quantization of a classical phase space is the choice of a \emph{polarization}. To this end we consider the complexified tangent bundle of the phase space manifold. Roughly, a choice of polarization consists in selecting a Lagrangian subspace of the complexified tangent space at each point of phase space. There are also integrability conditions that have to be satisfied. In our case of linear field theory, the phase space manifold is a real vector space, canonically identified with all of its tangent spaces. So the choice of a polarization reduces precisely to a choice of Lagrangian subspace. The Hilbert space of states is then constructed as a space of square-integrable functions on phase space that are invariant under the flows generated by the Lagrangian subspace. For example, if we chose the subspace of ``momenta'' $P_{\Sigma}^{\bC}\subseteq L_{\Sigma}^{\bC}$, compare Definition~(\ref{eq:defmn}) in Section~\ref{sec:schroedvac}, we obtain the \emph{Schrödinger representation} with functions that depend only on configurations. If on the other hand we choose a definite Lagrangian subspace, we obtain a \emph{holomorphic representation} with holomorphic functions on phase space.
A priori, the choice of Lagrangian subspace and thus representation in geometric quantization does not need to carry any physical meaning. In contrast, the Lagrangian subspaces considered in this work have the physical meaning of choosing a vacuum (or a dynamics in the case of amplitudes), but are independent of the representation. For example, the standard vacuum state in Klein-Gordon theory on Minkowski space is determined by a definite Lagrangian subspace. But, it can be perfectly well expressed in the Schrödinger representation, compare formula (\ref{eq:svackg}) of Section~\ref{sec:schroedvac}, which is determined by a real Lagrangian subspace. That being said, it is customary in the literature when using the holomorphic representation to choose the definite Lagrangian subspace that determines the representation to be precisely the same as the one that determines the vacuum. (The remarks at the end of Section~\ref{sec:modecs} with respect to constructing the Fock space starting from an inner product in the phase space amount to precisely that.) This has conceptual advantages in exposing the choice of vacuum in the representation as well as advantages of simplicity.
The different role of the Lagrangian subspaces aside, the mathematical similarities suggest to take advantage of parts of the well developed apparatus of geometric quantization for the our present purposes of developing a generalized notion of vacuum. In particular, one might speculate that in a theory that is non-linear even asymptotically (i.e., beyond the perturbative S-matrix paradigm), a polarization or similar structure might serve to encode a vacuum.

\paragraph{Vacuum selection}
The methods for vacuum selection discussed in Section~\ref{sec:vchoice} are inspired by the Euclidean formulation and represent only a very particular approach to the problem.
In any case, the view of the vacuum as encoding asymptotic boundary conditions (Section~\ref{sec:vaclag}) suggests to formalize a space of \emph{asymptotic solutions}. These would be the solutions of the equations of motion that need only be defined near the ``boundary of spacetime''. More concretely, one might imagine these as defined in ``neighborhoods of the boundary'', i.e., in complements of ``sufficiently large'' compact regions. The vacuum is then determined by a Lagrangian subspace of the complexification of this asymptotic solution space. If the spacetime admits symmetries it is natural to require \emph{invariance of the vacuum}, i.e., of this Lagrangian subspace under the symmetries. In the S-matrix paradigm for Minkowski space, the asymptotic solution space is simply taken to be the product of two copies of global solutions (one for early times and one for late times). A defining feature of the standard vacuum then is of course precisely its invariance under Poincaré transformations. (In fact, this vacuum factorizes into a past and future vacuum that are separately invariant.) The situation becomes more interesting when the asymptotic solution space admits more symmetries than the global solution space. This can be the case with spacetime that admit asymptotic symmetries that do not extend to global symmetries. A well known example is the BMS group for asymptotically flat spacetime \cite{BoBuMe:gravwaves7,Sac:gravwaves8}. To exploit such kinds of symmetries for vacuum selection appears thus particularly promising in our formalism.

\paragraph{Gauge symmetries and fermions}
We have focused in this work on linear bosonic field theories which admit non-degenerate symplectic forms on relevant spaces of germs of solutions. In the presence of \emph{gauge symmetries}, however, we are faced instead with a degenerate pre-symplectic form. It is a natural question then to ask for a possible generalization of our framework to this case. In the case of \emph{abelian} gauge symmetries it seems plausible that we can make things work by applying \emph{symplectic reduction}. More specifically, it was shown in \cite{DiOe:qabym} that (a suitable generalization of) the axiomatic framework for classical field theory outlined in Section~\ref{sec:classlag} applies to this reduced setting. One may thus expect the same for an adapted version of the novel framework of Section~\ref{sec:classglag} and Appendix~\ref{sec:caxioms}. We leave the task of working this out in detail for the future.
Another important question concerns the notion of generalized vacua in \emph{fermionic} field theories. There, instead of a symplectic form, the spaces of germs on hypersurfaces are equipped with a (non-degenerate) \emph{symmetric bilinear form} \cite{Woo:gqbogoliubov}. In the spirit of the considerations in \cite{Oe:freefermi}, one should expect the Lagrangian subspace be replaced by the notion of a \emph{hypermaximal neutral subspace} as encoding a generalized vacuum. Again, we leave the task of concretizing this for the future.

\paragraph{State space}
While the unification of the notions of amplitude and vacuum put forward in this work appears compelling, our proposal remains incomplete in an important sense. We specify a formula, namely formula (\ref{eq:veweyl}), that can be used in principle to calculate expectation values of almost arbitrary observables in any (generalized) vacuum. Sometimes this is all one needs. However, we do not say anything general about other states. The reason is of course that the standard quantization prescription reviewed in Section~\ref{sec:modecs} only works in the standard case that the vacuum is determined by a \emph{definite} Lagrangian subspace. The same restriction underlies the more general local and functorial framework for quantizing field theory \cite{Oe:feynobs} partially reviewed in Section~\ref{sec:stquant}. As we have observed at the end of Section~\ref{sec:hypcyl}, even the standard vacuum of Klein-Gordon theory in Minkowski space induces for certain hypersurfaces (here the hypercylinder) Lagrangian subspaces that are not definite.
The difficulties are particularly apparent in the case of a (complexified) \emph{real} Lagrangian subspace. One may observe that in this case the Schrödinger wave function (\ref{eq:svacstd}) is a pure phase, i.e., the exponential of an imaginary quantity. In contrast to the definite case, this does not lead to a Gaussian exponential term as required to obtain a well defined inner product.
The minimalist way around this problem is to simply ignore the degrees of freedom corresponding to the ``non-definite part'' of the vacuum and quantize the other ones as usual \cite{Oe:timelike,Oe:kgtl}. This is sometimes physically correct and satisfactory \cite{CoOe:smatrixgbf}. However, in the interest of a complete and coherent local description of quantum field theory, the construction of state spaces over such non-standard vacua is a necessity. This problem will be addressed in a subsequent work by the authors.

\subsection*{Acknowledgments}

This work was partially supported by CONACYT project grant 259258 and UNAM-PAPIIT project grant IA-106418.

\appendix

\section{Ingredients from Lagrangian field theory}
\label{sec:lagingreds}

We recall a few basic ingredients of Lagrangian field theory \cite[Chapter~7]{Woo:geomquant}. We adhere mostly to the conventions of \cite{Oe:affine}.

Consider a first-order Lagrangian field theory specified in terms of a \emph{Lagrangian} density $\Lambda(\varphi,\partial\varphi,x)$ as an $n$-form (for spacetime dimension $n$). Here $x$ denotes a point in spacetime, $\varphi$ a field configuration at that point, and $\partial\varphi$ the first jet, i.e., a first field derivative. The \emph{action} for a field configuration $\phi$ in a spacetime region $M$ is the integral of $\Lambda$,
\begin{equation}
 S_M(\phi)\defeq \int_M \Lambda(\phi(\cdot),\partial\phi(\cdot),\cdot) .
\end{equation}
Given a hypersurface $\Sigma$ we denote by $L_\Sigma$ the space of \emph{germs of solutions} of the Euler-Lagrange equations in a neighborhood of $\Sigma$. The \emph{symplectic potential} is the one-form $\theta_{\Sigma}$ on $L_\Sigma$ defined as,\footnote{We use here the opposite sign convention compared to \cite{Oe:holomorphic,Oe:affine,Oe:feynobs}. This also affects expressions~(\ref{eq:sympl}), (\ref{eq:relspact}), (\ref{eq:sfspact}), (\ref{eq:actsymp}) and (\ref{eq:afactsymp}).}
\begin{equation}
 (\theta_{\Sigma})_{\phi}(X)\defeq \int_\Sigma X^a \left.\partial_\mu\lrcorner\frac{\delta \Lambda}{\delta\, \partial_\mu\varphi^a}\right|_\phi .
\label{eq:sympot}
\end{equation}
Here $\phi\in L_\Sigma$ while $X$ is a tangent vector to $\phi$, i.e., an element of the tangent space $T_\phi L_\Sigma$ of solutions linearized around $\phi$. $\partial_\mu$ is a coordinate derivative understood as a vector field and $\lrcorner$ denotes the contraction between vector fields and forms.
The \emph{symplectic form} $\omega_{\Sigma}$ is the exterior derivative of the symplectic potential on $L_\Sigma$,
\begin{multline}
(\omega_\Sigma)_\phi(X,Y)  =(\xd\theta_\Sigma)_\phi(X,Y)
 =\frac{1}{2}\int_\Sigma\left( (X^b Y^a-Y^b X^a)\left.\partial_\mu\lrcorner
 \frac{\delta^2\Lambda}{\delta\varphi^b\delta\,\partial_\mu\varphi^a}\right|_\phi\right. \\
 \left. + (Y^a\partial_\nu X^b-X^a \partial_\nu Y^b)\left.\partial_\mu\lrcorner
 \frac{\delta^2\Lambda}{\delta\,\partial_\nu\varphi^b\delta\,\partial_\mu\varphi^a}\right|_\phi\right) .
\label{eq:sympl}
\end{multline}

Given a spacetime region $M$ and a solution $\phi$ of the Euler-Lagrange equations in $M$ the first variation of the action $S_M$ around $\phi$ vanishes up to a boundary term. This boundary term is precisely the symplectic potential $\theta_{\partial M}$. That is, for an infinitesimal field configuration $X$ we have,
\begin{equation}
 (\theta_{\partial M})_{\phi}(X)=(\xd S_M)_\phi(X) .
\label{eq:relspact}
\end{equation}
Note that this implies,
\begin{equation}
  (\omega_{\partial M})_\phi(X,Y)=(\xd\xd S_M)_\phi(X,Y)=0 .
  \label{eq:sfspact}
\end{equation}
That is, the manifold of solutions in $M$ is \emph{isotropic} with respect to the symplectic form $\omega_{\partial M}$ when restricted to germs on the boundary $\partial M$.

In the present work, we restrict to linear field theory. Then, $L_{\Sigma}$ becomes canonically isomorphic to its tangent spaces and the symplectic potential may be viewed as a bilinear form. We us the notation,
\begin{equation}
  [\phi,X]_{\Sigma}\defeq (\theta_{\Sigma})_{\phi}(X) .
\end{equation}
The symplectic form $\omega_{\Sigma}$ on the other hand loses its dependence on the base point. It is then simply the anti-symmetric part of the symplectic potential,
\begin{equation}
  \omega_{\Sigma}(\phi,\phi')=\frac{1}{2}[\phi,\phi']_{\Sigma}-\frac{1}{2}[\phi',\phi]_{\Sigma} .
  \label{eq:sympfrompotlin}
\end{equation}
The action $S_M$ is quadratic in the linear case and its value on a solution $\phi\in L_M$ in a spacetime region $M$ may then be expressed in terms of the symmetric part of the symplectic potential,
\begin{equation}
  S_M(\phi)=\frac{1}{2} [\phi,\phi]_{\partial M} .
  \label{eq:actsymp}
\end{equation}
Note, crucially, that the right hand side only depends on $\phi$ as an element of $L_{\partial M}$, i.e., does not explicitly depend on $\phi$ in the interior of $M$. If we modify the action to $S_M+D$, where $D$ is linear in field configurations, a solution $\eta\in A^D_M$ of the modified equations of motion satisfies,\footnote{See \cite[(49)]{Oe:feynobs}, but with opposite sign convention for the symplectic potential.}
\begin{equation}
  S_M(\eta)=\frac{1}{2} [\eta,\eta]_{\partial M}-\frac{1}{2} D(\eta) .
  \label{eq:afactsymp}
\end{equation}

\section{Lagrangian subspaces, inner product, and complex structure}
\label{sec:mathlag}

We collect here some relevant elementary statements at the intersection of symplectic vector spaces, indefinite inner product spaces and compatible complex structures \cite{Woo:geomquant, Bog:indipspaces}.

Let $L$ be a real vector space. We call $\omega:L\times L\to\R$ a \emph{symplectic form} if it is bilinear, anti-symmetric and non-degenerate. We denote by $L^{\bC}$ the \emph{complexification} of $L$. $\omega$ extends to a complex anti-symmetric bilinear form $L^{\bC}\times L^{\bC}\to \bC$ that we also denote by $\omega$. A subspace $V\subseteq L$ is called \emph{isotropic} iff for all $v,w\in V$ we have $\omega(v,w)=0$. $V$ is called \emph{coisotropic} iff for any $v\in L\setminus V$ there exists $w\in V$ such that $\omega(v,w)\neq 0$. $V$ is called \emph{Lagrangian} iff it is both isotropic and coisotropic. Consider the hermitian sesquilinear form $(v,w)\defeq 4\im\omega(\overline{v},w)$ on $L^{\bC}$. This defines a non-degenerate indefinite inner product on $L^{\bC}$.
Given a subspace $V\subseteq L^{\bC}$ the set $V^\perp$ of vectors that are orthogonal to all elements of $V$ is a subspace called the \emph{orthogonal companion} of $V$. We say that a subspace $V\subseteq L^\bC$ is \emph{ortho-complemented} iff $L^{\bC}$ is spanned by $V$ together with its orthogonal companion, i.e., $L^\bC=V+V^\perp$.

\begin{lem}
  \label{lem:conjnd}
  Let $V\subseteq L^{\bC}$ be a positive-definite subspace. Then $\overline{V}$ is a negative-definite subspace.
\end{lem}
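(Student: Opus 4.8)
The statement reduces to a single algebraic identity: for every $\phi \in L^{\bC}$,
\begin{equation}
  (\overline{\phi},\overline{\phi}) = -(\phi,\phi) .
\end{equation}
This is essentially the content of the footnote following equation~(\ref{eq:iplc}). Once this identity is in hand, the lemma follows immediately, since $\overline{V}$ is a complex subspace (the image of $V$ under the conjugate-linear bijection $\phi\mapsto\overline{\phi}$), and for any nonzero $u\in\overline{V}$, writing $u=\overline{\phi}$ with $0\neq\phi\in V$, positive-definiteness of $V$ gives $(u,u)=(\overline{\phi},\overline{\phi})=-(\phi,\phi)<0$.

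\textbf{Key step.} The identity is proved by unwinding the definition $(v,w)\defeq 4\im\,\omega(\overline{v},w)$ together with the fact that $\omega$ is anti-symmetric and that complex conjugation is an involution on $L^{\bC}$. Concretely,
\begin{equation}
  (\overline{\phi},\overline{\phi}) = 4\im\,\omega\!\left(\overline{\overline{\phi}},\overline{\phi}\right) = 4\im\,\omega(\phi,\overline{\phi}) = -4\im\,\omega(\overline{\phi},\phi) = -(\phi,\phi) ,
\end{equation}
where the third equality uses anti-symmetry of the (complex-bilinear extension of the) symplectic form. First I would record that $\overline{\overline{\phi}}=\phi$ and that $\omega$ restricted to $L^{\bC}\times L^{\bC}$ is still anti-symmetric; both are elementary but worth stating explicitly so the chain of equalities above is unambiguous.

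\textbf{Main obstacle.} There is essentially no obstacle: the proof is a two-line computation plus the observation that conjugation maps subspaces bijectively to subspaces. The only point requiring a modicum of care is notational bookkeeping of where the bilinear form $\omega$ acts and of the conjugate-linearity of the first slot of $(\cdot,\cdot)$; I would make sure to distinguish the Hermitian sesquilinearity of $(\cdot,\cdot)$ from the complex bilinearity of $\omega$, so that the step $(\overline{\phi},\overline{\phi})=4\im\,\omega(\phi,\overline{\phi})$ is transparent.
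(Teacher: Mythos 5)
Your proof is correct and follows essentially the same route as the paper: both reduce the lemma to the identity $(\overline{\phi},\overline{\phi})=-(\phi,\phi)$ and then evaluate on nonzero elements of $\overline{V}$. The only cosmetic difference is that you derive the identity from the anti-symmetry of $\omega$, whereas the paper derives it from the hermiticity of $(\cdot,\cdot)$ together with the reality of $\omega$ on $L$; these are equivalent one-line computations.
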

\begin{proof}
  Let $v\in \overline{V}\setminus\{0\}$. Then, $\overline{v}\in V\setminus\{0\}$ and thus $(\overline{v},\overline{v})>0$. Therefore $(v,v)=\overline{(v,v)}=-(\overline{v},\overline{v})<0$.
\end{proof}

\begin{cor}
  \label{cor:noint}
  Let $V\subseteq L^{\bC}$ be a positive-definite subspace. Then, $V\cap \overline{V}=\{0\}$.
\end{cor}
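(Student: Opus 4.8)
The plan is to derive this immediately from Lemma~\ref{lem:conjnd} together with the positive-definiteness of $V$. First I would take an arbitrary $v\in V\cap\overline{V}$. Since $v\in V$ and $V$ is positive-definite, we have $(v,v)\geq 0$, with equality only if $v=0$. On the other hand, since $v\in\overline{V}$ and $\overline{V}$ is negative-definite by Lemma~\ref{lem:conjnd}, we have $(v,v)\leq 0$, with equality only if $v=0$. Combining the two inequalities forces $(v,v)=0$, and then either definiteness statement gives $v=0$.

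Concretely, the proof would read roughly as follows:

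\begin{proof}
  Let $v\in V\cap\overline{V}$. Since $v\in V$ and $V$ is positive-definite, $(v,v)\ge 0$. Since $v\in \overline{V}$ and, by Lemma~\ref{lem:conjnd}, $\overline{V}$ is negative-definite, $(v,v)\le 0$. Hence $(v,v)=0$, which by positive-definiteness of $V$ (applied to $v\in V$) forces $v=0$.
\end{proof}

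There is no real obstacle here; the only thing to be slightly careful about is that one genuinely needs \emph{definiteness} (not merely semi-definiteness), so that the vanishing of $(v,v)$ yields $v=0$ rather than just $v$ lying in some null cone. This is exactly what the hypothesis ``positive-definite subspace'' supplies, so the corollary is an essentially immediate consequence of the preceding lemma.
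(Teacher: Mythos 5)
Your proof is correct and is exactly the intended argument: the paper leaves this as an immediate consequence of Lemma~\ref{lem:conjnd}, and your combination of $(v,v)\ge 0$ from positive-definiteness of $V$ with $(v,v)\le 0$ from negative-definiteness of $\overline{V}$ is precisely that deduction.
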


\begin{lem}
  \label{lem:orthiso}
  Let $V\subseteq L^{\bC}$ be a subspace. Then, $\overline{V}\subseteq V^\perp$ iff $V$ is isotropic.
\end{lem}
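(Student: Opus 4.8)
The statement is essentially an unwinding of the definition of the inner product $(v,w)=4\im\,\omega(\overline v,w)$ together with the definition of the orthogonal companion, so the plan is to reduce both sides to the single relation $\omega(v,w)=0$ for all $v,w\in V$. I would not expect any genuine obstacle here; the only thing to be careful about is tracking which argument of the inner product the conjugation sits on.

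First I would observe that $\overline V\subseteq V^\perp$ holds precisely when $\overline v\in V^\perp$ for every $v\in V$, and that, since $V^\perp$ consists of the vectors orthogonal to \emph{all} of $V$, this in turn is equivalent to $(\overline v,w)=0$ for all $v,w\in V$. (One may use either $(\overline v,w)$ or $(w,\overline v)$, since the form is hermitian and one vanishes iff the other does.) Next I would compute, directly from the definition, $(\overline v,w)=4\im\,\omega(\overline{\overline v},w)=4\im\,\omega(v,w)$, using that complex conjugation is an involution on $L^{\bC}$. Since $4\im\neq 0$, this shows $(\overline v,w)=0\iff\omega(v,w)=0$.

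Putting these together: $\overline V\subseteq V^\perp$ iff $\omega(v,w)=0$ for all $v,w\in V$, which is exactly the definition of $V$ being isotropic. This closes the equivalence in both directions simultaneously, so no separate ``only if''/``if'' argument is needed. The proof is therefore a three-line chain of equivalences, and I would present it as such rather than splitting into cases.
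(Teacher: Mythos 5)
Your proof is correct and is essentially the same as the paper's: both arguments simply unwind the definition $(v,w)=4\im\,\omega(\overline v,w)$ to identify orthogonality of $V$ and $\overline V$ with the vanishing of $\omega$ on $V\times V$. The only difference is cosmetic (the paper parametrizes by $w\in\overline V$ with $\overline w\in V$, while you conjugate the first argument directly), so nothing further is needed.
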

\begin{proof}
  Let, $v\in V$ and $w\in \overline{V}$. Then $\overline{w}\in V$. Isotropy of $V$ means that $\omega(\overline{w},v)=0$ for all such choices of $v$ and $w$. But this is equivalent to $(w,v)=0$ which implies the orthogonality of $V$ and $\overline{V}$.
\end{proof}

\begin{lem}
  \label{lem:orthcoiso}
  Let $V\subseteq L^{\bC}$ be a subspace. Then, $V^\perp\subseteq \overline{V}$ iff $V$ is coisotropic.
\end{lem}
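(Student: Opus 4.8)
The plan is to mirror the proof of the companion statement Lemma~\ref{lem:orthiso}, trading the isotropy/orthogonality pair for the coisotropy/orthogonal-companion pair, and to push everything through a single elementary observation about how $\omega$ interacts with complex conjugation. Since $\omega$ on $L^\bC$ is the complexification of the real symplectic form on $L$, it is $\bC$-bilinear and real-valued on $L\times L$, and this yields the identity $\omega(\overline{u},\overline{v})=\overline{\omega(u,v)}$ for all $u,v\in L^\bC$. Combined with anti-symmetry of $\omega$ and the definition $(u,v)\defeq 4\im\,\omega(\overline{u},v)$, this gives the bookkeeping facts I will use: $(w,v)=0\iff\omega(\overline{w},v)=0$, and $\omega(x,w)=0\iff\omega(\overline{w},\overline{x})=0\iff(w,\overline{x})=0$.

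For the ``if'' direction I would assume $V$ coisotropic, take $v\in V^\perp$, and aim to show $\overline{v}\in V$ (equivalently $v\in\overline{V}$). Unwinding $v\in V^\perp$ gives $\omega(u,v)=0$ for all $u\in\overline{V}$; by anti-symmetry, $\omega(v,\overline{w})=0$ for all $w\in V$; conjugating with the identity above, $\omega(\overline{v},w)=0$ for all $w\in V$. Coisotropy of $V$ then forces $\overline{v}\in V$, so $v\in\overline{V}$, proving $V^\perp\subseteq\overline{V}$. For the ``only if'' direction I would assume $V^\perp\subseteq\overline{V}$ and verify the coisotropy condition directly: if $x\in L^\bC$ satisfies $\omega(x,w)=0$ for all $w\in V$, then $\omega(\overline{w},\overline{x})=\overline{\omega(w,x)}=0$ for all $w\in V$, i.e.\ $(w,\overline{x})=0$ for all $w\in V$, so $\overline{x}\in V^\perp\subseteq\overline{V}$, whence $x\in V$.

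I do not anticipate a genuine obstacle: the statement is essentially a bookkeeping exercise, and the direct argument above never invokes $(V^{\omega})^{\omega}=V$, so no finite-dimensionality or completeness hypothesis enters. The only point requiring care is tracking which conjugations and which anti-symmetry flips are applied, so that the chain from ``$v\in V^\perp$'' to ``$\overline{v}\in V$'' (and back) really closes; once the conjugation identity for $\omega$ is recorded at the outset, each implication reduces to a one-line substitution.
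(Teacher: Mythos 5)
Your proof is correct and follows essentially the same route as the paper's: both directions are direct unwindings of the definitions via the identity $(w,v)=4\im\,\omega(\overline{w},v)$, with your conjugation identity $\omega(\overline{u},\overline{v})=\overline{\omega(u,v)}$ serving only as explicit bookkeeping that the paper leaves implicit. Your observation that no finite-dimensionality or completeness is needed also matches the paper, which imposes no such hypothesis in this lemma.
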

\begin{proof}
  $w\in V^\perp$ is equivalent to $(w,v)=0$ for all $v\in V$. This in turn is equivalent to $\omega(\overline{w},v)=0$ for all $v\in V$. Coisotropy would imply $\overline{w}\in V$, i.e., $w\in\overline{V}$. That is $V^\perp\subseteq \overline{V}$. Conversely the latter property would imply coisotropy.
\end{proof}

\begin{cor}
  Let $V\subseteq L^{\bC}$ be a coisotropic and positive-definite subspace. Then, $V$ is a maximal positive-definite subspace.
\end{cor}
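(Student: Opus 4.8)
The plan is to fix an arbitrary positive-definite subspace $W$ with $V\subseteq W$ and show $W=V$; this is exactly the assertion that $V$ is maximal among positive-definite subspaces. The entire symplectic and inner-product content will come from just two of the preceding results, namely Corollary~\ref{cor:noint} and Lemma~\ref{lem:orthcoiso}; the rest is bookkeeping.

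First I would establish two orthogonality facts about $W$. Applying Corollary~\ref{cor:noint} to the positive-definite subspace $W$ gives $W\cap\overline{W}=\{0\}$; since $V\subseteq W$ implies $\overline{V}\subseteq\overline{W}$, this yields $W\cap\overline{V}=\{0\}$. Next, since $V$ is coisotropic, Lemma~\ref{lem:orthcoiso} gives $V^\perp\subseteq\overline{V}$, and combining with the previous line, $W\cap V^\perp\subseteq W\cap\overline{V}=\{0\}$. Thus no nonzero vector of $W$ is orthogonal to all of $V$. On the other side, because $W$ is positive-definite and $V\subseteq W$, any $v\in V$ orthogonal to all of $W$ would be orthogonal to itself, forcing $(v,v)=0$ and hence $v=0$; so no nonzero vector of $V$ is orthogonal to all of $W$ either. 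Consequently the restriction of the hermitian form $(\cdot,\cdot)$ to $W\times V$ is a non-degenerate pairing. (As a by-product of the same estimates, $W^\perp\subseteq V^\perp\subseteq\overline{V}\subseteq\overline{W}$, so $W$ is itself coisotropic by Lemma~\ref{lem:orthcoiso}.)

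The final step is to pass from ``$V\subseteq W$ together with $W\cap V^\perp=\{0\}$'' to ``$W=V$''. In the finite-rank case this is immediate: the non-degenerate pairing $W\times V\to\bC$ forces $\dim W=\dim V$, so the inclusion is an equality. In the general case one invokes completeness of $W$ (equivalently, that $V$ is ortho-complemented inside $W$, in the sense introduced just before this corollary), which gives $W=V\oplus(W\cap V^\perp)=V$. I expect this last step to be the only delicate point of the argument: everything preceding it is a direct application of Corollary~\ref{cor:noint} and Lemma~\ref{lem:orthcoiso} and involves no computation, whereas concluding $V=W$ from the non-degeneracy of the pairing genuinely requires the finiteness or completeness hypothesis that is implicit in the setting of the appendix.
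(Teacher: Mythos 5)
Your argument is correct and follows essentially the same route as the paper: coisotropy gives $V^\perp\subseteq\overline{V}$ via Lemma~\ref{lem:orthcoiso}, and negative-definiteness of the conjugate of a positive-definite subspace (Lemma~\ref{lem:conjnd}, which you invoke in the packaged form of Corollary~\ref{cor:noint}) then excludes any nonzero vector of $W$ from $V^\perp$. The paper phrases this as a contradiction by picking a nonzero $w\in W$ orthogonal to $V$, which tacitly assumes the very orthogonal splitting $W=V\oplus(W\cap V^\perp)$ that you correctly flag as the one step requiring finite dimension or completeness, so your version is, if anything, the more honest bookkeeping of the same proof.
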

\begin{proof}
  Suppose that $V$ is not maximally positive-definite. Then there exists a positive-definite subspace $W\subseteq L^{\bC}$ such that $V$ is a proper subspace of $W$. Take a non-zero vector $w\in W$ that is orthogonal to $V$. Then, $w\in V^\perp\subseteq \overline{V}$ by Lemma~\ref{lem:orthcoiso}. So by Lemma~\ref{lem:conjnd} $(w,w)<0$, a contradiction.
\end{proof}

\begin{lem}[{\cite[Corollary~11.9]{Bog:indipspaces}}]
  Let $V\subseteq L^\bC$ be a finite-dimensional non-degenerate subspace. Then, $V$ is ortho-complemented.
\end{lem}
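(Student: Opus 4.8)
The plan is to reduce the statement to elementary linear algebra on the finite-dimensional space $V$, using only that the restriction of the hermitian form $(\cdot,\cdot)$ to $V$ is non-degenerate. Concretely, for an arbitrary $x\in L^{\bC}$ I want to produce a decomposition $x=w+(x-w)$ with $w\in V$ and $x-w\in V^\perp$; since $x$ is arbitrary this gives $L^{\bC}=V+V^\perp$, which is exactly ortho-complementedness. The conceptual core is that the map $V\to V^*$ sending $w$ to the functional $(w,\cdot)\big|_V$ is a bijection: it is injective precisely because the form on $V$ is non-degenerate, and it is then surjective because $\dim V=\dim V^*<\infty$.

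Carrying this out, I would first fix a basis $e_1,\dots,e_k$ of $V$ and form the Gram matrix $G$ with $G_{ij}\defeq(e_i,e_j)$. Writing $v=\sum_i c_i e_i$ and computing $(v,e_j)=\sum_i\overline{c_i}\,G_{ij}$, the non-degeneracy hypothesis (no nonzero $v\in V$ is orthogonal to all of $V$) translates into the statement that $\overline{c}^{\,T}G=0$ forces $c=0$, i.e.\ $G$ is invertible. Then, for the given $x$, I set $b_j\defeq(x,e_j)$ and solve $\overline{a}^{\,T}G=b^{T}$ for the coefficients $a_i$; this is solvable since $G$ is invertible. Putting $w\defeq\sum_i a_i e_i\in V$, one gets $(x-w,e_j)=b_j-\sum_i\overline{a_i}\,G_{ij}=0$ for each $j$, hence $(x-w,v)=0$ for all $v\in V$ by linearity in the second slot, so $x-w\in V^\perp$ and $x=w+(x-w)\in V+V^\perp$.

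There is essentially no serious obstacle; the only point requiring care is the sesquilinearity convention. Since $(\phi,\phi')=4\im\,\omega(\overline{\phi},\phi')$ with $\omega$ complex-bilinear, the form $(\cdot,\cdot)$ is conjugate-linear in its first argument, so the map $V\to V^*$ above is conjugate-linear rather than linear; this is harmless, as a conjugate-linear injection between equidimensional finite-dimensional spaces is still a bijection, and the Gram-matrix computation absorbs the conjugations transparently. I would also add the short remark that non-degeneracy of $(\cdot,\cdot)\big|_V$ additionally forces $V\cap V^\perp=\{0\}$, so the decomposition is in fact a direct sum $L^{\bC}=V\oplus V^\perp$, even though only the weaker spanning statement $L^{\bC}=V+V^\perp$ is what the definition of ortho-complemented requires.
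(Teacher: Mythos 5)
Your proof is correct. Note that the paper does not actually prove this lemma itself; it is imported verbatim as Corollary~11.9 of the cited reference on indefinite inner product spaces, so there is no in-paper argument to compare against. What you give is the standard finite-dimensional argument behind that result: non-degeneracy of the restricted form makes the Gram matrix invertible, which lets you solve for the $V$-component of any $x\in L^{\bC}$ and place the remainder in $V^{\perp}$, yielding $L^{\bC}=V+V^{\perp}$ as the paper's definition of ortho-complemented requires. Your handling of the sesquilinearity convention (conjugate-linearity in the first slot, so the induced map $V\to V^{*}$ is conjugate-linear but still bijective) is the one point where care is needed, and you get it right; the closing remark that $V\cap V^{\perp}=\{0\}$, so the sum is direct, is a correct and harmless addition.
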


\begin{lem}
  \label{lem:dlagdec}
  Let $V\subseteq L^\bC$ be an ortho-complemented positive-definite Lagrangian subspace. Then, $L^\bC$ admits an orthogonal decomposition $L^\bC=V\oplus \overline{V}$.
\end{lem}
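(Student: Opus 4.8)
The plan is to combine the three ingredients that have just been assembled: Corollary~\ref{cor:noint} (a positive-definite subspace meets its conjugate only in zero), Lemma~\ref{lem:conjnd} (the conjugate of a positive-definite subspace is negative-definite), and the definition of ortho-complementedness together with Lemma~\ref{lem:orthiso} and Lemma~\ref{lem:orthcoiso}. The goal is to show $L^\bC = V + \overline{V}$ and $V\cap\overline{V}=\{0\}$; the orthogonality of the decomposition is then automatic since isotropy of $V$ gives $\overline{V}\subseteq V^\perp$ by Lemma~\ref{lem:orthiso}, i.e., $V$ and $\overline{V}$ are mutually orthogonal.

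First I would record that $V\cap\overline{V}=\{0\}$ is immediate from Corollary~\ref{cor:noint}, since $V$ is positive-definite by hypothesis. Next, the heart of the argument: since $V$ is ortho-complemented, $L^\bC = V + V^\perp$. Because $V$ is Lagrangian it is in particular isotropic, so by Lemma~\ref{lem:orthiso} we have $\overline{V}\subseteq V^\perp$; and because $V$ is coisotropic, Lemma~\ref{lem:orthcoiso} gives $V^\perp\subseteq\overline{V}$. Hence $V^\perp=\overline{V}$, and therefore $L^\bC = V + \overline{V}$. Combined with $V\cap\overline{V}=\{0\}$ this is the direct sum decomposition $L^\bC = V\oplus\overline{V}$, and as noted the two summands are orthogonal because $\overline{V}\subseteq V^\perp$.

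I do not anticipate a serious obstacle here: the statement is essentially a bookkeeping corollary of the lemmas immediately preceding it, and the only subtlety is making sure that "ortho-complemented" is being used in exactly the sense defined just above (namely $L^\bC = V + V^\perp$, a sum that need not a priori be direct) and that the positive-definiteness of $V$ is what upgrades this to a genuine direct sum via Corollary~\ref{cor:noint}. One should also note explicitly that positive-definiteness of $V$ forces $V$ to be non-degenerate, so that there is no tension with $V^\perp$ being a genuine complement; but this is not even needed for the argument as structured above. The mildest care is simply in citing $V$ both as isotropic and as coisotropic (both halves of "Lagrangian") to pin down $V^\perp=\overline{V}$ exactly, rather than merely one inclusion.
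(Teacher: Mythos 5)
Your argument is correct and coincides with the paper's own proof: both use ortho-complementedness to get $L^\bC=V+V^\perp$, combine Lemmas~\ref{lem:orthiso} and \ref{lem:orthcoiso} to identify $V^\perp=\overline{V}$, and invoke Corollary~\ref{cor:noint} to make the sum direct. Your additional remark that orthogonality of the two summands follows from $\overline{V}\subseteq V^\perp$ is a welcome, if minor, clarification the paper leaves implicit.
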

\begin{proof}
  Since $V$ is ortho-complemented we have $L^\bC=V+V^\perp$. Since $V$ is isotropic and coisotropic we have $\overline{V}=V^\perp$ by combing Lemmas~\ref{lem:orthiso} and \ref{lem:orthcoiso}. Using positive-definiteness Corollary~\ref{cor:noint} completes the proof.
\end{proof}

We say that a positive-definite Lagrangian subspace $V\subseteq L^\bC$ is \emph{complete} if it is ortho-complemented and if $V$ is complete with respect to the inner product $(\cdot,\cdot)$. This makes $L^\bC=V\oplus \overline{V}$ into a \emph{Krein space}.

We call a linear map $J:L\to L$ a \emph{complex structure} if it satisfies $J^2=-\id$. We call $J$ \emph{compatible} if it is a symplectomorphism, i.e., if $\omega(Jv,Jw)=\omega(v,w)$ for all $v,w\in L$. We call $J$ \emph{positive-definite} if the hermitian sesquilinear form $\{v,w\}\defeq 2\omega(v,J w)+2\im\omega(v,w)$ is positive-definite on $L$ viewed as a complex vector space. We call $J$ \emph{complete} if $L$ is complete with respect to this positive-definite inner product.

\begin{prop}
  \label{prop:dlagtoj}
  Let $V\subseteq L^\bC$ be a complete positive-definite Lagrangian subspace. Let $J:L^{\bC}\to L^{\bC}$ be the operator with eigenvalues $\im$ and $-\im$ on $V$ and $\overline{V}$ respectively. Then, $J$ is the complexification of a complete positive-definite compatible complex structure on $L$ (also denoted by $J$).
\end{prop}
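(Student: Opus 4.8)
The plan is to use the orthogonal decomposition $L^\bC=V\oplus\overline V$ supplied by Lemma~\ref{lem:dlagdec} as the backbone, and then verify one at a time the four properties required of $J$: that it is a complex structure, that it descends to the real form $L$, that it is compatible, and that it is positive-definite and complete.

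First I would record that $J$ is well defined and satisfies $J^2=-\id$: by Lemma~\ref{lem:dlagdec} each $\phi\in L^\bC$ has a unique decomposition $\phi=\phi^++\phi^-$ with $\phi^+\in V$, $\phi^-\in\overline V$, so $J\phi\defeq\im\phi^+-\im\phi^-$ is well defined and $J^2\phi=\im^2\phi^++(-\im)^2\phi^-=-\phi$. Next, since complex conjugation interchanges $V$ and $\overline V$, for $\phi=\phi^++\phi^-$ one gets $\overline{J\phi}=-\im\,\overline{\phi^+}+\im\,\overline{\phi^-}=J\big(\overline{\phi^-}+\overline{\phi^+}\big)=J\overline\phi$; hence $J$ preserves the conjugation-fixed subspace $L$, and the $\bC$-linear operator $J$ on $L^\bC$ is the complexification of $J|_L$. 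So it remains to check compatibility and positivity for $J:L\to L$.

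For compatibility I would verify $\omega(J\phi,J\eta)=\omega(\phi,\eta)$ for all $\phi,\eta\in L^\bC$ (hence in particular on $L$) by expanding both sides over $V\oplus\overline V$: the terms $\omega(\phi^+,\eta^+)$ and $\omega(\phi^-,\eta^-)$ vanish because $V$ is isotropic and $\overline V$ is likewise isotropic (being Lagrangian as the conjugate of a Lagrangian, cf.\ Lemmas~\ref{lem:orthiso}–\ref{lem:orthcoiso}), while on the surviving cross terms $J$ contributes a factor $\im\cdot(-\im)=1$. With $J^2=-\id$ and $\omega(J\cdot,J\cdot)=\omega$ established — these are exactly conditions (\ref{eq:cstruc}) — the form $\{\cdot,\cdot\}$ of (\ref{eq:ipl}) is automatically sesquilinear and hermitian with respect to the complex structure $J$ on $L$, as in the discussion following (\ref{eq:cstruc}).

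Finally, positive-definiteness and completeness both reduce, via a single identity, to the corresponding properties of $(\cdot,\cdot)$ on $V$. Writing $P^+:L^\bC\to V$ for the projection along $\overline V$ and using that a real $\phi\in L$ satisfies $P^-\phi=\overline{P^+\phi}$, I would expand $\{\phi,\eta\}=2\omega(\phi,J\eta)+2\im\omega(\phi,\eta)$ over $V\oplus\overline V$, discard the isotropic terms, and collect the rest to obtain
\begin{equation}
  \{\phi,\eta\}=4\im\,\omega\big(\overline{P^+\phi},P^+\eta\big)=(P^+\phi,P^+\eta),\qquad\phi,\eta\in L ,
  \label{eq:dlagtoj-key}
\end{equation}
with $(\cdot,\cdot)$ the inner product (\ref{eq:iplc}). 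Now $P^+$ intertwines the complex structures ($J$ on $L^\bC$, multiplication by $\im$ on $V$), and $P^+|_L:L\to V$ is a bijection: it is injective because $P^+\phi=0$ forces $\phi=\overline{P^+\phi}=0$, and surjective because $v\mapsto v+\overline v$ is a right (in fact two-sided) inverse. Hence by (\ref{eq:dlagtoj-key}) $P^+|_L$ is a unitary isomorphism of complex inner product spaces $(L,\{\cdot,\cdot\})\to(V,(\cdot,\cdot))$; positive-definiteness of $\{\cdot,\cdot\}$ follows from that of $(\cdot,\cdot)$ on $V$ (indeed $\{\phi,\phi\}=(P^+\phi,P^+\phi)>0$ for $\phi\neq0$), and completeness of $L$ with respect to $\{\cdot,\cdot\}$ follows from completeness of $V$ with respect to $(\cdot,\cdot)$, which is part of the hypothesis that $V$ is a complete positive-definite Lagrangian subspace. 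The computations are all elementary; the only step needing genuine care is the last one, namely keeping the real structure, the two complex structures and the normalizations in (\ref{eq:iplc}) and (\ref{eq:ipl}) consistent so that (\ref{eq:dlagtoj-key}) emerges with exactly the right constant and $P^+|_L$ is truly an isometric bijection onto all of $V$.
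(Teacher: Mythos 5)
Your proof is correct and follows essentially the same route as the paper's: the decomposition $L^\bC=V\oplus\overline V$ from Lemma~\ref{lem:dlagdec}, commutation of $J$ with conjugation, compatibility via the isotropy of $V$ and $\overline V$, and reduction of positivity of $\{\cdot,\cdot\}$ to that of $(\cdot,\cdot)$ on $V$. The only difference is cosmetic: you prove the full polarized identity $\{\phi,\eta\}=(P^+\phi,P^+\eta)$ and exhibit $P^+|_L$ as a unitary bijection onto $V$, which makes the completeness transfer explicit, whereas the paper only computes $\Re\{v+\overline v,v+\overline v\}=(v,v)$ and asserts completeness parenthetically.
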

\begin{proof}
  By Lemma~\ref{lem:dlagdec} we can write any element of $L^\bC$ uniquely as $v+\overline{w}$ for some $v,w\in V$. Then, $\overline{J(v+\overline{w})}=\overline{\im v - \im \overline{w}}=-\im \overline{v} +\im w=J(\overline{v+\overline{w}})$. That is, $J$ commutes with complex conjugation, i.e., is the complexification of a real linear map $L\to L$. Since $J$ has eigenvalues $\im$ and $-\im$ it is also clear that it satisfies $J^2=-\id$, i.e., it is a complex structure. Let $v,w,v',w'\in V$. Then $\omega(J(v+\overline{w}),J(v'+\overline{w'}))=\omega(\im v-\im \overline{w},\im v'-\im \overline{w'})= \omega(\im v,-\im\overline{w'})+\omega(-\im \overline{w},\im v')=\omega(v,\overline{w'})+\omega(\overline{w},v')=\omega(v+\overline{w},v'+\overline{w'})$. That is, $J$ is compatible. It remains to check that the inner product $\{\cdot,\cdot\}$ on $L$ is positive-definite. Note that any element of $L$ is uniquely represented as $v+\overline{v}$ with $v\in V$. It is sufficient to consider the real part of the inner product. Indeed, $\Re\{v+\overline{v},v+\overline{v}\}=2\omega(v+\overline{v},J(v+\overline{v}))=2\omega(v+\overline{v},\im v -\im\overline{v})=2\omega(v,-\im\overline{v})+2\omega(\overline{v},\im v)=4\im\omega(\overline{v},v)=(v,v)>0$ if $v\neq 0$. (This relation between the inner products also ensures that $\{\cdot,\cdot\}$ is complete.)
\end{proof}

\begin{prop}
  \label{prop:jtodlag}
  Let $J$ be a complete positive-definite compatible complex structure on $L$. Let $V\subseteq L^\bC$ be the eigenspace of the complexification of $J$ for the eigenvalue $\im$. Then, $V\subseteq L^\bC$ is a complete positive-definite Lagrangian subspace.
\end{prop}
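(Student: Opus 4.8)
The plan is to mirror the proof of Proposition~\ref{prop:dlagtoj} in reverse. Since $J$ is a complex structure, $J^2=-\id$, so the complexified operator on $L^\bC$ has only the eigenvalues $\pm\im$; as $J$ is the complexification of a real operator it commutes with complex conjugation, whence the $-\im$-eigenspace is exactly $\overline V$ and $L^\bC=V\oplus\overline V$. To see that $V$ is \emph{isotropic}, take $v,w\in V$ and use compatibility of $J$: $\omega(v,w)=\omega(Jv,Jw)=\omega(\im v,\im w)=-\omega(v,w)$, so $\omega(v,w)=0$; conjugating shows $\overline V$ is isotropic as well.

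For \emph{coisotropy}, suppose $\eta\in L^\bC$ satisfies $\omega(v,\eta)=0$ for all $v\in V$. Decompose $\eta=\eta^++\eta^-$ with $\eta^+\in V$ and $\eta^-\in\overline V$. Isotropy of $V$ kills the $\eta^+$-contribution, so $\omega(v,\eta^-)=0$ for all $v\in V$; since $\omega$ is non-degenerate on $L^\bC=V\oplus\overline V$ and both summands are isotropic, the induced pairing $V\times\overline V\to\bC$ is non-degenerate, forcing $\eta^-=0$ and hence $\eta\in V$. Thus $V$ is Lagrangian. Moreover, isotropy of $V$ together with Lemma~\ref{lem:orthiso} gives $\overline V\subseteq V^\perp$, so $L^\bC=V\oplus\overline V\subseteq V+V^\perp$; hence $V$ is ortho-complemented (in fact $\overline V=V^\perp$, by also invoking Lemma~\ref{lem:orthcoiso}).

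It remains to check that $(\cdot,\cdot)$ is positive-definite and complete on $V$. Here I would use the projector $P^+\colon L^\bC\to L^\bC$, $P^+\phi=\tfrac12(\phi-\im J\phi)$, whose image is $V$. One verifies that $P^+$ restricts to a complex-linear isomorphism from $L$ equipped with the complex structure $J$ onto $V$: complex linearity amounts to $P^+\circ J=\im\,P^+$; injectivity follows since $P^+a=0$ forces $a=\im Ja$, impossible for real $a\neq0$; surjectivity follows from $v=P^+(v+\overline v)$ for $v\in V$. A one-line computation then gives, for $a\in L$, $(P^+a,P^+a)=4\im\,\omega(\overline{P^+a},P^+a)=4\im\,\omega(P^-a,P^+a)=2\,\omega(a,Ja)=\{a,a\}$, using $\overline{P^+a}=P^-a$ and $\omega(a,a)=0$; by polarization the same identity holds off the diagonal, so $P^+$ is a unitary isomorphism $\bigl(L,\{\cdot,\cdot\}\bigr)\to\bigl(V,(\cdot,\cdot)\bigr)$. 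Positive-definiteness of $\{\cdot,\cdot\}$ therefore transfers to positive-definiteness of $(\cdot,\cdot)$ on $V$, and completeness of $J$ (i.e.\ completeness of $L$ under $\{\cdot,\cdot\}$) transfers to completeness of $V$.

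The main obstacle is purely bookkeeping: pinning down the numerical factors so that the identity $(P^+a,P^+a')=\{a,a'\}$ comes out correctly, and carefully justifying the decomposition $L^\bC=V\oplus\overline V$ with $\overline V$ identified as the $-\im$-eigenspace directly from $J$ being a genuine real complex structure. Everything else is a direct application of the elementary facts already established (Lemmas~\ref{lem:conjnd}, \ref{lem:orthiso}, \ref{lem:orthcoiso} and Corollary~\ref{cor:noint}).
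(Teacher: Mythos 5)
Your proof is correct and follows essentially the same route as the paper's: the eigenspace decomposition $L^\bC=V\oplus\overline{V}$, isotropy from compatibility of $J$, and positive-definiteness and completeness of $(\cdot,\cdot)$ on $V$ transferred from $\{\cdot,\cdot\}$ on $L$ via the correspondence $v\mapsto v+\overline{v}$ (your $P^+$ is precisely its inverse, and your identity $(P^+a,P^+a)=\{a,a\}$ is the paper's $(v,v)=\Re\{v+\overline{v},v+\overline{v}\}$ read backwards). The only real variation is the coisotropy step, where the paper exhibits the explicit witness $v=\overline{w}$ using $(w,w)\neq 0$ while you invoke non-degeneracy of $\omega$ on $V\oplus\overline{V}$ with both summands isotropic; both are valid one-line arguments.
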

\begin{proof}
  Denote by $V'\subseteq L^\bC$ the eigenspace corresponding to the eigenvalue $-\im$ of $J$. If $v\in V$ then $J\overline{v}=\overline{J v}=-\im \overline{v}$. That is, $\overline{V}\subseteq V'$. Similarly, we get $\overline{V'}\subseteq V$. Then, $\overline{V}=V'$. In particular, $L^\bC=V\oplus \overline{V}$.
  Let $v,w\in V$. Then, $\omega(v,w)=\omega(Jv,Jw)=-\omega(v,w)$. That is, $\omega(v,w)=0$, and $V$ is isotropic. Now let $v\in V\setminus\{0\}$. Then, $(v,v)=4\im\omega(\overline{v},v)=\Re\{v+\overline{v},v+\overline{v}\}>0$. That is, $V$ is positive-definite. (This relation between the inner products also implies that $(\cdot,\cdot)$ is complete.) Since $L$ is isotropic to show that it is also coisotropic it suffices to find for any non-zero element $w$ in the complement $\overline{V}$ an element $v\in V$ such that $\omega(v,w)\neq 0$. Indeed, we may choose $v=\overline{w}$ since $4\im\omega(\overline{w},w)=(w,w)>0$. This completes the proof.
\end{proof}

\begin{lem}[{\cite[Lemma~4.1]{Oe:holomorphic}}]
  \label{lem:jcompl}
  Let $J$ be a complete positive-definite compatible complex structure on $L$. Let $W\subseteq L$ be a Lagrangian subspace. Then, $L$ decomposes as a direct sum $L=W\oplus J W$, orthogonal with respect to the real inner product $\Re\{\cdot,\cdot\}$. Moreover, $L^\bC$ decomposes as a complex direct sum $L^\bC=W^\bC\oplus J W^\bC$.
\end{lem}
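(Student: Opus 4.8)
The plan is to base everything on the real inner product $g(v,w)\defeq\Re\{v,w\}=2\omega(v,Jw)$, which is symmetric and positive-definite on $L$ (positive-definiteness of $J$), and with respect to which completeness of $J$ makes $L$ a real Hilbert space (the $g$-norm equals the $\{\cdot,\cdot\}$-norm, since $\omega(v,v)=0$). First I would record the two identities that the compatibility relation $\omega(Jv,Jw)=\omega(v,w)$ together with $J^2=-\id$ yield, namely $g(v,w)=2\omega(w,Jv)$ and $\omega(v,w)=\tfrac12 g(Jv,w)$. The second shows that $\omega$ is a continuous bilinear form on $L$ (as $J$, being a $g$-isometry, is bounded), hence that the symplectic orthogonal complement $W^{\perp_\omega}\defeq\{v\in L:\omega(v,w)=0\ \forall w\in W\}$ is closed; since $W$ is Lagrangian, $W=W^{\perp_\omega}$, so $W$ is closed.

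The heart of the argument is then to identify the $g$-orthogonal complement of $W$. For $v,w\in W$, isotropy of $W$ gives $g(v,Jw)=2\omega(v,J^2w)=-2\omega(v,w)=0$, so $JW\subseteq W^{\perp_g}$; in particular $W\cap JW=\{0\}$, as any element of the intersection is $g$-orthogonal to itself and $g$ is positive-definite. Conversely, $v\in W^{\perp_g}$ means $\omega(w,Jv)=0$ for all $w\in W$, i.e.\ $Jv\in W^{\perp_\omega}=W$, i.e.\ $v\in J^{-1}W=JW$; hence $W^{\perp_g}=JW$. Applying the orthogonal projection theorem to the closed subspace $W$ of the Hilbert space $(L,g)$ then gives $L=W\oplus W^{\perp_g}=W\oplus JW$, with the two summands orthogonal for $\Re\{\cdot,\cdot\}$, which is the first assertion.

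For the complexified statement I would simply complexify the decomposition: $L^\bC=(W\oplus JW)^\bC=W^\bC\oplus(JW)^\bC=W^\bC\oplus JW^\bC$, the last equality using that the complexification of $J$ maps $W^\bC$ onto $(JW)^\bC$; directness persists because an element of $W^\bC\cap JW^\bC$ has both its real and imaginary parts in $W\cap JW=\{0\}$.

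I expect the only place where the hypotheses do genuine work — hence the main obstacle to keep an eye on — to be the step $L=W\oplus JW$: it relies both on $W$ being closed (from the Lagrangian property of $W$ and continuity of $\omega$, the latter coming from compatibility and boundedness of $J$) and on $(L,g)$ being complete (which is precisely completeness of $J$). Everything else is formal manipulation of the compatibility identities; indeed the purely algebraic content — trivial intersection and $g$-orthogonality of $W$ and $JW$ — holds with no completeness assumption at all.
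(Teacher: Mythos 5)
Your proof is correct and follows essentially the same route as the paper's: both identify the $\Re\{\cdot,\cdot\}$-orthogonal complement of $W$ as $JW$, using isotropy of $W$ for the inclusion $JW\subseteq W^{\perp}$ and coisotropy for the reverse inclusion. You additionally make explicit the closedness of $W$ and the completeness of $(L,\Re\{\cdot,\cdot\})$ needed to invoke the projection theorem, as well as the complexified half of the statement — details the paper's proof leaves implicit.
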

\begin{proof}
  Let $v,w\in W$. Then $\Re\{v,J w\}=-2\omega(v,w)=0$ since $W$ is Lagrangian. That is, $W$ and $JW$ are orthogonal. Now let $w\in L$ be orthogonal to $V$. Thus, $0=\Re\{v, w\}=2\omega(v,J w)=0$ for all $v$ in $W$. Since $W$ is coisotropic this implies $J w\in W$ and thus $w\in JW$. Thus, the orthogonal complement of $W$ with respect to $\Re\{\cdot,\cdot\}$ is $JW$.
\end{proof}

\begin{prop}
  \label{prop:rdlagcompl}
  Let $V\subseteq L^\bC$ be a complete positive-definite Lagrangian subspace. Let $W\subseteq L$ be a Lagrangian subspace. Then, $L^\bC$ admits a decomposition as a direct sum $L^\bC=V\oplus W^\bC$.
\end{prop}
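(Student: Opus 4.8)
The plan is to reduce everything to the compatible complex structure $J$ attached to $V$, and then make the complementarity explicit by an algebraic identity. First I would invoke Proposition~\ref{prop:dlagtoj}: since $V$ is a complete positive-definite Lagrangian subspace, there is a complete positive-definite compatible complex structure $J$ on $L$ whose complexification has $V$ as its $\im$-eigenspace. Unwinding what it means to be the $\im$-eigenspace of a complexified real operator, one gets the concrete description $V=\{x-\im J x : x\in L\}$, and this is the only handle on $V$ that I will actually use.

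The inclusion $V\cap W^{\bC}=\{0\}$ is the easy half. Because $W$ is Lagrangian in $L$ it is in particular $\omega$-isotropic, and extending $\omega$ bilinearly to $L^{\bC}$ shows that $W^{\bC}$ is $\omega$-isotropic as well. Since $W^{\bC}$ is moreover stable under complex conjugation, the inner product $(u,u)=4\im\,\omega(\overline{u},u)$ vanishes for every $u\in W^{\bC}$; that is, $W^{\bC}$ is a neutral subspace for $(\cdot,\cdot)$. As $V$ is positive-definite, any vector lying in both $V$ and $W^{\bC}$ must vanish.

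The substantive step is the spanning statement $L^{\bC}=V+W^{\bC}$. Here I would use Lemma~\ref{lem:jcompl}, which supplies the decomposition $L^{\bC}=W^{\bC}\oplus JW^{\bC}$; it then suffices to prove $JW^{\bC}\subseteq V+W^{\bC}$. Given $w=a+\im b\in W^{\bC}$ with $a,b\in W$, set $y\defeq Ja-b\in L$ and $v\defeq y-\im J y\in V$. A direct computation using $J^2=-\id$ gives $v=(Ja-b)+\im(a+Jb)=(Ja+\im Jb)+(-b+\im a)=Jw+(b-\im a)$, hence $Jw=v-(b-\im a)\in V+W^{\bC}$ since $b-\im a\in W^{\bC}$. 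Combining this with the decomposition $L^{\bC}=W^{\bC}\oplus JW^{\bC}$ shows every $\phi\in L^{\bC}$ lies in $V+W^{\bC}$, and together with the trivial-intersection step this yields $L^{\bC}=V\oplus W^{\bC}$.

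I expect the genuinely content-bearing point to be the spanning step, and within it the place where completeness of $V$ (hence of $J$) is essential is the appeal to Lemma~\ref{lem:jcompl} and, upstream, Proposition~\ref{prop:dlagtoj}: in infinite dimensions one cannot substitute a dimension count, so one really needs the explicit complement $JW^{\bC}$ and the identity producing $Jw$ inside $V+W^{\bC}$. Everything else — the isotropy and conjugation-stability of $W^{\bC}$, the description of $V$ as $\{x-\im Jx\}$ — is routine bookkeeping.
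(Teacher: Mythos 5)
Your proof is correct and follows essentially the same route as the paper: invoke Proposition~\ref{prop:dlagtoj} to obtain $J$, use Lemma~\ref{lem:jcompl} to write $L^\bC=W^\bC\oplus JW^\bC$, and then exhibit $Jw$ inside $V+W^\bC$ by an algebraic identity (the paper does this a bit more directly via $Jw=(\im w+Jw)-\im w$ with $\im w+Jw\in V$). Two minor remarks: in your chain $(Ja+\im Jb)+(-b+\im a)=Jw+(b-\im a)$ there is an inconsequential sign slip ($-b+\im a=-(b-\im a)$), and your explicit verification that $V\cap W^\bC=\{0\}$ via neutrality of $W^\bC$ is a welcome addition that the paper leaves implicit.
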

\begin{proof}
  Let $J$ be the corresponding complete positive-definite compatible complex structure by Proposition~\ref{prop:dlagtoj}.
  By Lemma~\ref{lem:jcompl} we can write any element of $L^\bC$ as $v+Jw$ with $v,w\in W^\bC$. But $v+Jw=v-\im w + \im w+J w$, where clearly $v-\im w\in W^\bC$ while $\im w + J w\in V$.
\end{proof}

\section{Axioms for classical field theory}
\label{sec:caxioms}

We provide here an axiomatization of the classical part of the framework for generalized vacua in terms of Lagrangian subspaces, see Section~\ref{sec:classglag}. The axiomatic system is essentially a generalization of the one provided in \cite{Oe:holomorphic}, without complex structures, compare also Section~\ref{sec:classlag}.

\begin{itemize}
\item[\textbf{(C1)}] Associated to each hypersurface $\Sigma$ is a real vector space $L_{\Sigma}$. $L_{\Sigma}$ is equipped with a non-degenerate symplectic form $\omega_{\Sigma}$.
\item[\textbf{(C2)}] Associated to each hypersurface $\Sigma$ there is an (implicit) linear involution $L_\Sigma\to L_{\overline\Sigma}$, such that $\omega_{\overline{\Sigma}}=-\omega_{\Sigma}$.
\item[\textbf{(C3)}] Suppose the hypersurface $\Sigma$ decomposes into a union of hypersurfaces $\Sigma=\Sigma_1\cup\cdots\cup\Sigma_n$. Then, there is an (implicit) isomorphism $L_{\Sigma_1}\oplus\cdots\oplus L_{\Sigma_n}\to L_\Sigma$. The isomorphism preserves the symplectic form.
\item[\textbf{(C4)}] Associated to each region $M$ is a complex vector space $\tilde{L}_M$.
\item[\textbf{(C5)}] Associated to each region $M$ there is a complex linear map $r_M:\tilde{L}_M\to L_{\partial M}^\bC$. The image $r_M(\tilde{L}_M)$ is a Lagrangian subspace of $L_{\partial M}^\bC$.
\item[\textbf{(C6)}] Let $M_1$ and $M_2$ be regions and $M= M_1\sqcup M_2$ be their disjoint union. Then $\tilde{L}_M$ is the direct sum $\tilde{L}_{M}=\tilde{L}_{M_1}\oplus \tilde{L}_{M_2}$. Moreover, $r_M=r_{M_1} + r_{M_2}$.
\item[\textbf{(C7)}] Let $M$ be a region with its boundary decomposing as a union $\partial M=\Sigma_1\cup\Sigma\cup \overline{\Sigma'}$, where $\Sigma'$ is a copy of $\Sigma$. Let $M_1$ denote the gluing of $M$ to itself along $\Sigma,\overline{\Sigma'}$ and suppose that $M_1$ is a region. Then, there is an injective complex linear map $r_{M;\Sigma,\overline{\Sigma'}}:\tilde{L}_{M_1}\toi \tilde{L}_{M}$ such that
\begin{equation}
 \tilde{L}_{M_1}\toi \tilde{L}_{M}\rightrightarrows L_\Sigma^\bC
\end{equation}
is an exact sequence. Here the arrows on the right hand side are compositions of the map $r_M$ with the complexified projections of $L_{\partial M}$ to $L_\Sigma$ and $L_{\overline{\Sigma'}}$ respectively (the latter identified with $L_\Sigma$). Moreover, the following diagram commutes, where the bottom arrow is the projection.
\begin{equation}
\xymatrix{
  \tilde{L}_{M_1} \ar[rr]^{r_{M;\Sigma,\overline{\Sigma'}}} \ar[d]_{r_{M_1}} & & \tilde{L}_{M} \ar[d]^{r_{M}}\\
  L_{\partial M_1}^\bC  & & L_{\partial M}^\bC \ar[ll]}
\end{equation}
\end{itemize}

\bibliographystyle{stdnodoi} 
\bibliography{stdrefsb,ref2}
\end{document}